\documentclass[12pt]{article}



\usepackage{amsmath, amssymb, amsthm}
\usepackage{graphicx}
\usepackage{enumerate}
\usepackage{natbib}
\usepackage{bm}
\usepackage{dsfont}
\usepackage{url}  
\usepackage{microtype} 
\usepackage{tikz}
\usepackage{mathtools}
\usepackage{paralist}
\usepackage{booktabs} 
\usepackage{wasysym}
\usepackage{subcaption}
\usepackage{fourier}
\usepackage{enumitem} 

\makeatletter
\newcommand*\rel@kern[1]{\kern#1\dimexpr\macc@kerna}
\newcommand*\widebar[1]{%
  \begingroup
  \def\mathaccent##1##2{%
    \rel@kern{0.8}%
    \overline{\rel@kern{-0.8}\macc@nucleus\rel@kern{0.2}}%
    \rel@kern{-0.2}%
  }%
  \macc@depth\@ne
  \let\math@bgroup\@empty \let\math@egroup\macc@set@skewchar
  \mathsurround\z@ \frozen@everymath{\mathgroup\macc@group\relax}%
  \macc@set@skewchar\relax
  \let\mathaccentV\macc@nested@a
  \macc@nested@a\relax111{#1}%
  \endgroup
}
\makeatother
\newcommand{\prob}{\operatorname{\mathsf{P}}}

\newcommand{\diff}{\mathrm{d}}

\newcommand{\bV}{\bm{V}}
\newcommand{\bw}{\bm{w}}

\newcommand{\bx}{\bm{x}}
\newcommand{\bX}{\bm{X}}
\newcommand{\bY}{\bm{Y}}

\newcommand{\bone}{\bm{1}}

\newcommand{\xw}{\frac{x}{x+y}}
\newcommand{\yw}{\frac{y}{x+y}}

\newcommand{\reals}{\mathbb{R}}

\newcommand{\1}{\operatorname{\mathds{1}}}



\newcommand{\cbr}[1]{\left\{ {#1} \right\}}
\newcommand{\rbr}[1]{\left( {#1} \right)}
\newcommand{\sbr}[1]{\left[ {#1} \right]}

\newcommand{\hatc}{\hat{c}}

\renewcommand{\ge}{\geqslant}
\renewcommand{\le}{\leqslant}
\renewcommand{\geq}{\geqslant}
\renewcommand{\leq}{\leqslant}

\NewDocumentCommand{\evalat}{sO{\big}mm}{%
  \IfBooleanTF{#1}
   {\mleft. #3 \mright|_{#4}}
   {#3#2|_{#4}}%
}

\DeclareRobustCommand{\bigO}{%
  \text{\usefont{OMS}{cmsy}{m}{n}O}%
}
\DeclareRobustCommand{\smallo}{\ensuremath{o}}

\usepackage{xcolor}
\definecolor{orange-red}{rgb}{1.0, 0.27, 0.0}

\usepackage[utf8]{inputenc}
\usepackage[T1]{fontenc}
\usepackage[a4paper,left=2.25cm, right = 2.25cm, top = 1.75cm, bottom = 2.25cm]{geometry}
\usepackage[colorlinks=true,linkcolor=black,citecolor=black,urlcolor=black]{hyperref}

\newtheorem{lemma}{Lemma}
\newtheorem{proposition}{Proposition}

\newtheorem{assumption}{Assumption}

\theoremstyle{definition}
\newtheorem{definition}{Definition}
\newtheorem{example}{Example}

\theoremstyle{remark}
\newtheorem{remark}{Remark}

\numberwithin{equation}{section}


\begin{document}

\title{
Geometric criteria for identifying extremal dependence and flexible modeling via additive mixtures}

\author{Jeongjin Lee\thanks{School of Mathematical Sciences, Lancaster University, Fylde College, Lancaster, LA1 4YF, United Kingdom. Corresponding author. E-mail: j.lee58@lancaster.ac.uk}\footnotemark[1]
\and Jennifer Wadsworth\thanks{School of Mathematical Sciences, Lancaster University, Fylde College, Lancaster, LA1 4YF, United Kingdom. E-mail: j.wadsworth@lancaster.ac.uk}}

\date{\today}

\maketitle

\begin{abstract}
\smallskip
The framework of geometric extremes is based on the convergence of scaled sample clouds onto a limit set, characterized by a gauge function, with the shape of the limit set determining extremal dependence structures.
While it is known that a blunt limit set implies asymptotic independence, the absence of bluntness can be linked to both asymptotic dependence and independence.
Focusing on the bivariate case, under a truncated gamma modeling assumption with bounded angular density, we show that a ``pointy'' limit set implies asymptotic dependence, thus offering practical geometric criteria for identifying extremal dependence classes.
Suitable models for the gauge function offer the ability to capture asymptotically independent or dependent data structures, without requiring prior knowledge of the true extremal dependence structure.
The geometric approach thus offers a simple alternative to various parametric copula models that have been developed for this purpose in recent years.
We consider two types of additively mixed gauge functions that provide a smooth interpolation between asymptotic dependence and asymptotic independence.
We derive their explicit forms, explore their properties, and establish connections to the developed geometric criteria.
Through a simulation study, we evaluate the effectiveness of the geometric approach with additively mixed gauge functions, comparing its performance to existing methodologies that account for both asymptotic dependence and asymptotic independence.
The methodology is computationally efficient and yields reliable performance across various extremal dependence scenarios.

\textbf{Key words: Geometric extremes, multivariate extremes, limit set, extremal dependence}
\end{abstract}


\section{Introduction}
\label{sec:Intro}

Classical extreme value modeling primarily focuses on scenarios where all variables exhibit extreme behavior simultaneously.
However, in many applications, it is common to observe cases where some variables exhibit extreme behavior while others remain non-extreme. 
For a bivariate random vector, two main scenarios are possible: simultaneous extremes (asymptotic dependence) or non-simultaneous extremes (asymptotic independence). For a random vector $\bX:=(X,Y)^\top$ with marginal distributions $F_X(x)$ and $F_Y(y)$, the class of extremal dependence can be determined by the tail dependence coefficient $\chi=\lim_{u\uparrow 1}\prob\rbr{F_X(x)\ge u \mid F_Y(Y)\ge u}.$
A positive limit $\chi>0$ indicates asymptotic dependence (AD) for the pair $(X,Y)$, while $\chi=0$ corresponds to asymptotic independence (AI).

To accommodate both AD and AI without prior knowledge of the true extremal dependence class, we adopt a geometric framework, which has recently been translated into a statistical modeling approach.
Early theoretical studies explored the geometric aspects of light-tailed multivariate sample clouds, particularly their convergence onto the limit set and the associated gauge functions \citep{davis1988almost,kinoshita1991convergence}.
More recently, \cite{nolde2022linking} established connections between the limit set and various representations of multivariate extremes.
Statistical exploration of this so-called geometric framework remains relatively recent. \citet{simpson2024estimating},
\cite{wadsworth2024statistical} and \citet{papastathopoulos2023statistical} proposed statistical methodology for estimating the limit set $G$ from data.
This method enables modeling of joint extremes across different sub-variable combinations and facilitates extrapolation into joint tails where only a subset of variables exhibit extreme behavior.
Compared to classical methods for multivariate extremes, the geometric approach provides greater flexibility in capturing complex dependence structures.

In this work, we focus on the bivariate case and explore statistical models in the geometric framework that facilitate a smooth transition between AD and AI.
We establish a connection between the extremal dependence class and the shape of the limit set, identifying conditions under which a given limit set and its associated gauge functions correspond to AD or AI.
While it is known that a ``blunt'' limit set implies asymptotic independence \citep{balkema2010asymptotic}, the absence of bluntness can be associated with both AD and AI.
Under a truncated gamma modeling assumption, we define the geometric characteristics of the limit set for each dependence class and show that a ``pointy'' limit set implies asymptotic dependence when the corresponding angular density is bounded, providing practical geometric criteria for identifying extremal dependence classes.

Various parametric copula models have also been developed in recent years to identify extremal dependence classes
\citep[e.g.,][]{wadsworth2017modelling,huser2019modeling,engelke2019extremal}.
These copula models are constructed using random scale representations.
In the bivariate case, the copula of random vector $\bX$ is generally expressed as some variation of the product of a positive random variable, $S$, and a random vector $(V_1,V_2)^\top,$ independent of $S$.
Implementation of copula-based models defined by random scale mixtures often relies on computation of numerical integrals, and numerical inversion techniques leading to relatively high computational costs.

In contrast, the geometric approach provides a simple alternative to the task of interpolating between dependence classes by employing models for the gauge function that can exhibit both pointy and blunt shapes.
We investigate the construction, properties, and efficient implementation of such gauge functions via two types of additive mixing.
Furthermore, we derive explicit formulations, explore their relationship with the proposed geometric criteria, and establish links to other extremal dependence measures.

The outline of this paper is as follows.
In Section~\ref{sec:Background}, we provide a brief review of the background on the geometric approach and the key model assumptions.
In Section~\ref{sec:Theory}, we develop criteria for classifying extremal dependence under a truncated gamma model assumption, while in Section~\ref{sec:additive} we derive closed-form expressions for various additively mixed gauge functions.
In Section~\ref{sec:InferenceSimulation}, we demonstrate the performance of the geometric approach with the additively mixed gauge functions and compare it to existing methods that accommodate both AD and AI.
In Section~\ref{sec:Application}, we apply the methodology to river flow data.
We conclude in Section~\ref{sec:Conclusion}.

\section{Background}
\label{sec:Background}

\subsection{Limit sets and gauge functions}
\label{sec:Gaugeft}

The framework for convergence onto limit sets requires light-tailed margins.
Standardizing the margins to a particular form ensures that the shape of the limit set can be interpreted in terms of extremal dependence.
Common choices are standard exponential and Laplace margins.
Since we focus on asymptotic dependence properties in the positive quadrant, we use the former.
Let $\bX_i$, $i=1,\ldots,n,$ be independent copies of a random vector $\bX=(X_1,\ldots,X_d)^\top$ with standard exponential margins, i.e., $\prob(X_j>x)=\exp(-x),$ $x>0$, and define the scaled $n$-point sample cloud as
\begin{equation*}
\label{eq:sampleCloud}
    N_n=\cbr{\bX_1/\log{n},\ldots,\bX_n/\log{n}}.
\end{equation*}
We assume that $N_n$ converges in probability onto a limit set $G$, which can be characterized by a gauge function $g_{\bX}(\bx)$ via $G=\cbr{\bx\in \reals_+^d: g_{\bX}(\bx)\le 1}$ \citep{balkema2010asymptotic}.
The limit set $G$ is star-shaped, meaning that $\bx\in G \Rightarrow t\bx\in G$ for all $t\in[0,1]$ \citep{kinoshita1991convergence}.
The convergence of the scaled sample cloud onto the limit set $G$ is illustrated in Figure~\ref{fig:SampleCloud} for two dependence structures: the logistic, which exhibits asymptotic dependence, and Gaussian, which exhibits asymptotic independence.

\begin{figure}[ht]
\centering
\includegraphics[width=3.5cm]{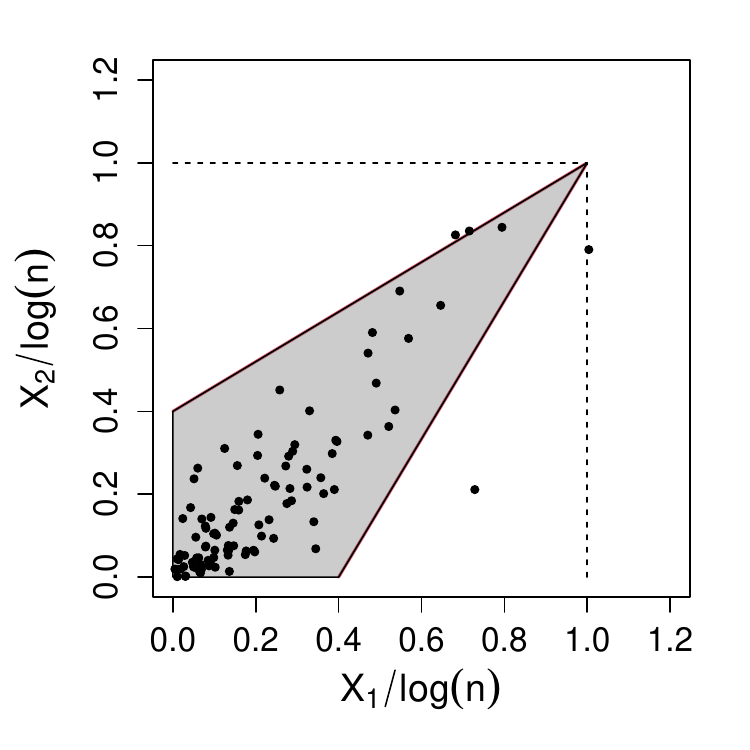}
\includegraphics[width=3.5cm]{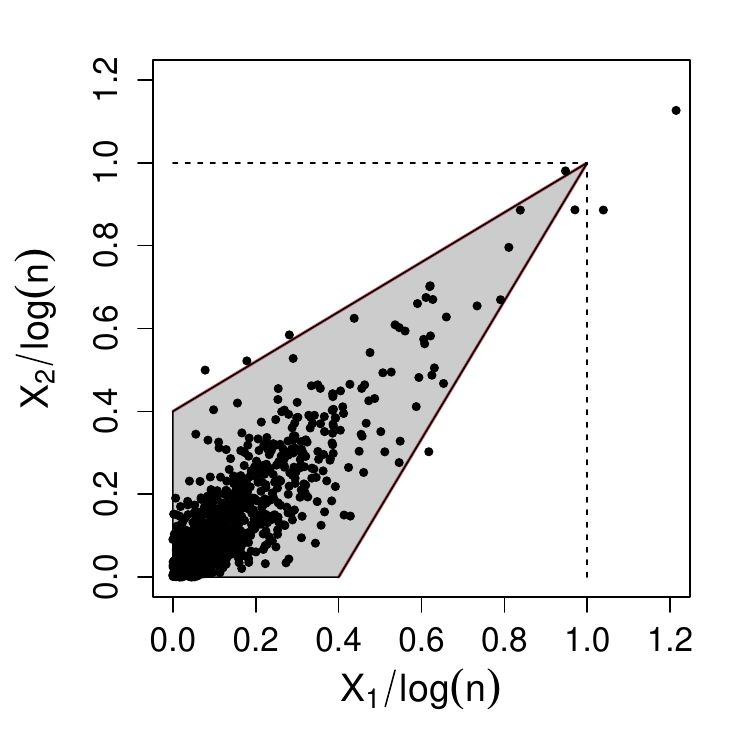}
\includegraphics[width=3.5cm]{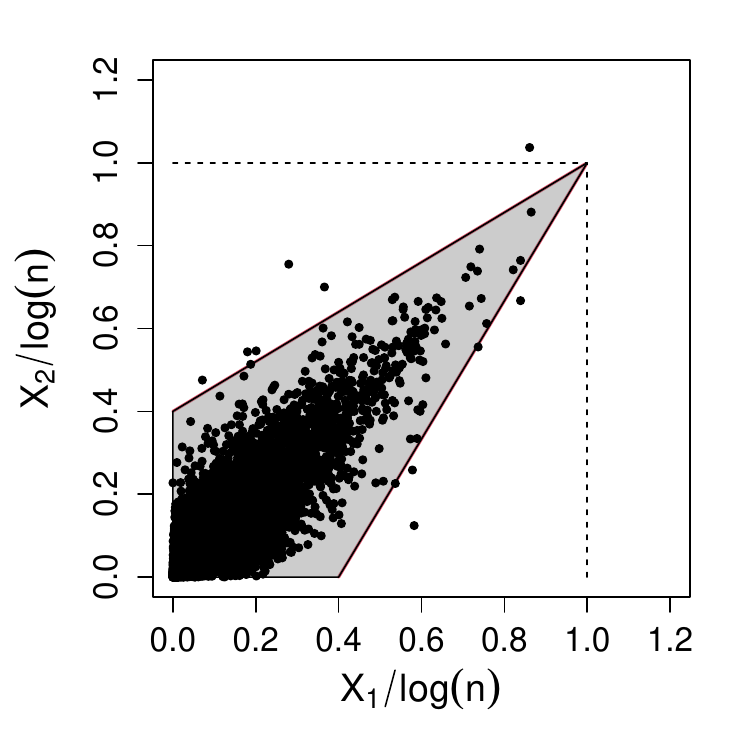}\\
\includegraphics[width=3.5cm]{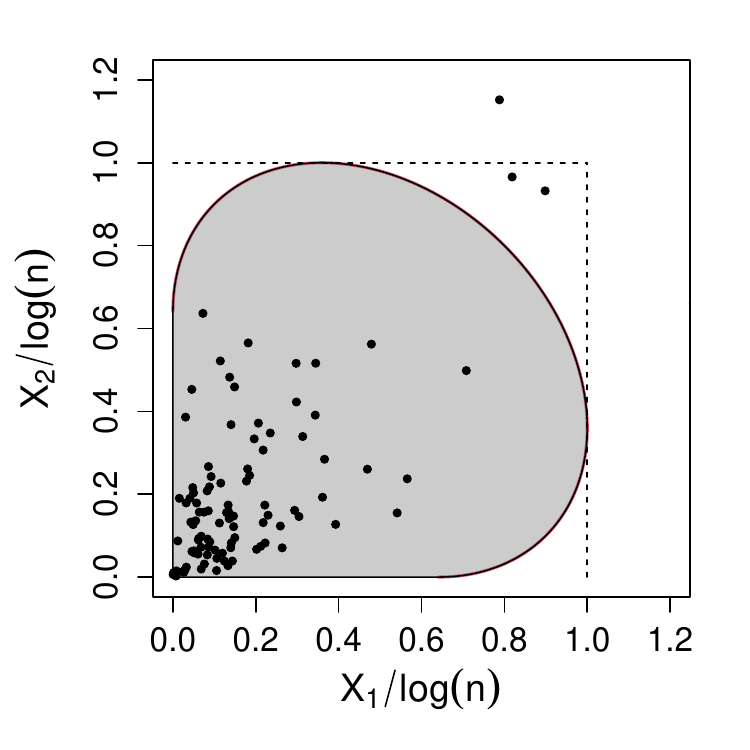}
\includegraphics[width=3.5cm]{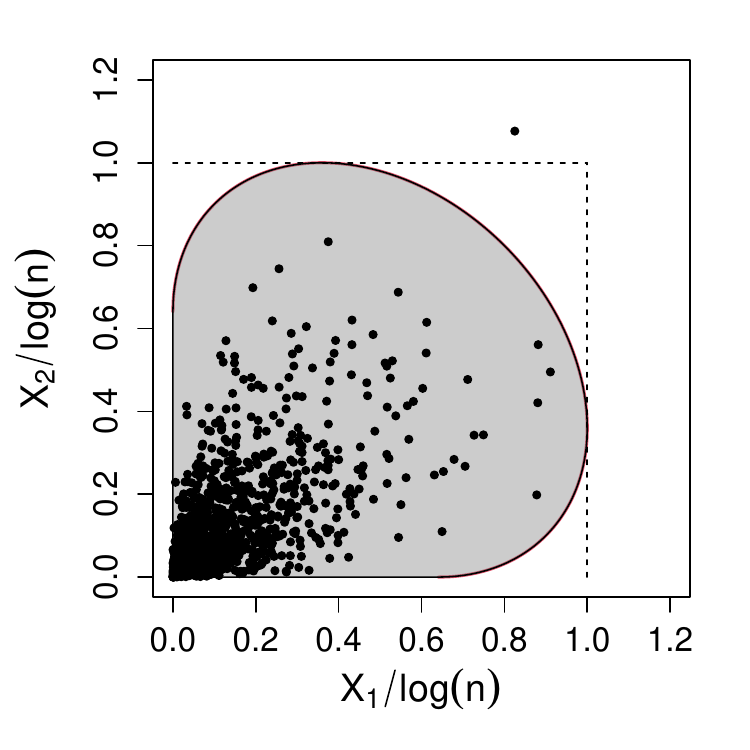}
\includegraphics[width=3.5cm]{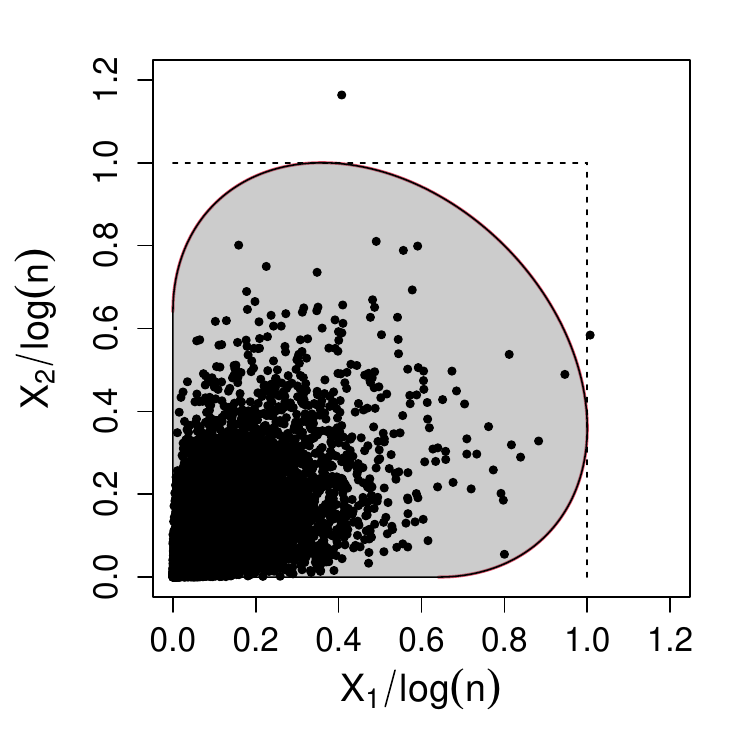}
\caption{\label{fig:SampleCloud} From left to right, the plots depict the scaled sample cloud with the limit set represented by a gray polygon, for different sample sizes, $n\in\cbr{100,1000,10000}$, respectively. The top row shows the convergence of the scaled sample cloud onto the limit set under the logistic dependence structure, while the bottom row represents the convergence under the Gaussian dependence structure.
Margins are exponential.
}
\end{figure}

\cite{balkema2010meta} derive necessary and sufficient conditions for the convergence of $N_n$ onto the limit set $G$.
However, as these conditions do not allow for simple determination of the limit set for a given distribution. \cite{nolde2014geometric} and \cite{nolde2022linking} consider a sufficient condition for the convergence in terms of the Lebesgue density of $\bX$, denoted by $f_{\bX}$.
Specifically, if
\begin{equation}
\label{eq:sufficient}
    \lim_{t\rightarrow\infty}\frac{-\log f_{\bX}(t\bx)}{t}=g_{\bX}(\bx),\quad \bx\in[0,\infty)^d,
\end{equation}
where $g_{\bX}(\bx)$ is a continuous gauge function, then $N_n$ converges onto $G$ defined by $g_{\bX}(\bx).$
An important property of $g_{\bX}(\bx)$ is homogeneity of order 1, i.e., $g_{\bX}(c\bx)=cg_{\bX}(\bx)$ for all $c>0$.
For exponential margins, the coordinatewise supremum of $G$ is $(1,\ldots,1)$, as   $\max_{i=1,\ldots,n}X_{ij}/\log{n}\xrightarrow{p}1$ for $j=1,\ldots,d.$
We will focus on statistical models derived from the sufficient condition~\eqref{eq:sufficient}, specifically in the bivariate case $d=2$.

\subsection{Model assumptions}
\label{sec:Assumption}

We assume that $(X,Y)$ has a joint Lebesgue density $f_{X,Y}$ with standard exponential margins.
The density assumption is standard for likelihood-based inference, while in practice the margins must be estimated and transformed to exponential via the probability integral transform.
This is a standard approach for extreme value dependence modeling.

We follow the statistical framework of \cite{wadsworth2024statistical}.
Using the radial-angular transformation $(X,Y)\mapsto(R,W)$ with $R=X+Y$ and $W=X/R,$ we assume that the conditional variable of $R\mid W=w$ follows a gamma distribution over a high threshold $r_\tau(w)$, i.e.,
\begin{equation}
\label{eq:tGam}
    R\mid \sbr{W=w, R>r_\tau(w)} \sim \text{truncGamma}\rbr{\lambda,g_{\bX}(w,1-w)},
\end{equation}
where $\lambda$ and $g_{\bX}(w,1-w)$ are the gamma shape and rate parameters, respectively, and $r_{\tau}(w)$ is a high quantile of the conditional distribution of $R\mid W$.
We refer to \cite{wadsworth2024statistical} for the theoretical justification of this model.
The gamma shape parameter corresponds to the dimension $\lambda=2$ for most underlying parametric models.
However, we allow for the estimation of this parameter to provide additional flexibility as in \cite{wadsworth2024statistical}.
The gauge function $g_{\bX}(w,1-w)$ can be estimated either parametrically or semi-parametrically \citep{majumder2025semiparametric,campbell2024piecewise} by fitting $g_{\bX}(w,1-w;\bm{\theta})$ under the truncated gamma approximation.
Our analysis focuses on particular classes of parametric models obtained through two types of additive mixing.
These provide flexibility in capturing diverse dependence structures, including both AD and AI forms, while retaining a relatively parsimonious form.

\section{Tail dependence coefficient under the truncated gamma model}
\label{sec:Theory}

The shape of the limit set $G$ describes various extremal dependence properties, including the potential presence or absence of asymptotic dependence.
In the bivariate case, \cite{balkema2010asymptotic} show that a ``blunt'' limit set, where $(1,1)\notin G,$ implies asymptotic independence between $X$ and $Y$.
However, the converse is not true in general: that is, $(1,1)\in G$ can be associated with either AD or AI.
The condition $(1,1)\in G$ alone is not sufficient to distinguish between the two extremal classes, implying further conditions are required.
\cite{BalkemaNoldeAD} addressed this problem by focusing on light-tailed {\em homothetic} densities and providing geometric criteria for AD through local and global conditions on the translated set $G-\bm{e}$ with $\bm{e}=\sup(G)$.

In this work, we focus on geometric criteria for AD under the truncated gamma model~\eqref{eq:tGam}. While there are similarities with the homothetic case considered in \citet{BalkemaNoldeAD}, our assumption is different: a homothetic density for $(X,Y)$ gives a prescribed form for the radial and angular variables, while we only prescribe a form for the distribution of $R|W=w$, allowing for a variety of angular densities $f_W(w)$. We find that it is possible for this density to influence the tail dependence coefficient.
To distinguish between the two extremal dependence classes, in the following section we define a ``pointy'' limit set.
Under the specific assumption of the truncated gamma model, we derive bounds on the tail dependence coefficient for a pointy limit set and show that, as long as the density $f_W(w)$ does not explode at key points away from the center, it implies AD.
In addition, we develop geometric criteria for AI, with these results formulated  in Section~\ref{sec:AI}.

We note a subtle but important point concerning our model. We assume the truncated gamma model to hold exactly on the region $\{(x,y) \in \mathbb{R}_+^2: (x+y)>r_\tau(x/(x+y))\}$; this region is illustrated in Figure~\ref{fig:rtau}, with an example $r_\tau(\cdot)$, the shape of which will depend on the dataset. Combining the truncated gamma radial model with an angular model $f_W$ on this region yields variables with assumed joint density $f_{X,Y}^\star(x,y)$, given in equation~\eqref{eq:tg_cartesian} below, which represents a model for the underlying true density $f_{X,Y}(x,y)$ on the region $\{(x,y) \in \mathbb{R}_+^2: (x+y)>r_\tau(x/(x+y))\}$. The joint density $f_{X,Y}^\star$ of our model may possess slightly different marginal behavior to the underlying margins, although differences will be very small at moderate levels. The root cause of this discrepancy is the assumption of the asymptotic form at finite levels, and there remains an open problem as to whether the modeling approach could be adapted to ensure equivalence of margins to the underlying ones. We will calculate dependence features of the model using the assumed density $f_{X,Y}^\star(x,y)$.

\begin{figure}[ht]
\centering
\includegraphics[width=0.35\textwidth]{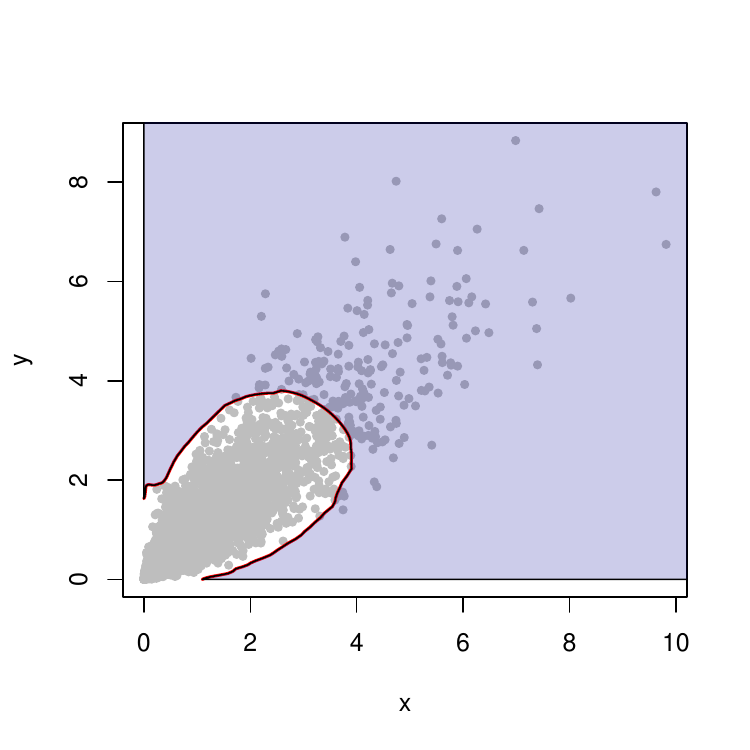}
\caption{Illustration of the model support. The threshold $r_\tau(w)$ is represented by the red line; the shaded region outside of this represents the support of the model.}
\label{fig:rtau}
\end{figure}

\subsection{Asymptotic dependence for a pointy limit set}
We focus on the bivariate case, assuming that the gauge function $g_{\bX}(x,y)$ is continuous for $(x,y)\in[0,\infty)^2$.
The key quantity in determining the extremal dependence class is the shape of the limit set, specifically the behavior of the gauge function along the boundary $\cbr{(x,y):\max(x,y)=1}.$
Since $\sup(G)=(1,1)$, so that $G\subseteq [0,1]^2$, we always have $g_{\bX}(x,y)\ge \max(x,y)$.
Define the gauge function values along this boundary through
\begin{equation}
\label{eq:k(q)}
k(q):=g_{\bX}(1,q)\quad\text{and}\quad \tilde{k}(q):=g_{\bX}(q,1),\quad q\ge 0.
\end{equation}
Note that
$k(q)\ge 1$ and $\tilde{k}(q)\ge 1$ since $g_{\bX}(x,y)\ge \max(x,y)$.

We define a ``pointy'' limit set as one for which $(1,1)\in G$ and $g_{\bX}(1,a)=g_{\bX}(b,1)=1$ hold at only finitely many boundary points $a, b\in [0,1)$, with no vertical and horizontal tangents at these points. Examples~\ref{ex:log} and~\ref{ex:indLog_min} represent such pointy limit sets, while Example~\ref{ex:invloglog} in Section~\ref{sec:AI} is not pointy because it fails the tangent condition.
\begin{definition}
\label{def:pointyset}
Define the set $D=\cbr{q\in[0,1]:k(q)=1}$ and $\Tilde{D}=\cbr{q\in[0,1]:\tilde{k}(q)=1}$.
The bivariate limit set $G=\cbr{(x,y)\in\reals_+^2:g_{\bX}(x,y)\le 1}$ on $[0,1]^2$ is termed a \emph{pointy limit set} if $k(1)=\tilde{k}(1)=1$ and the sets $D$ and $\Tilde{D}$ are finite and thus countable, with $k'(q_i^+),k'(q_i^-)\neq 0$, for all $q_i \in D$, and $\tilde{k}'(\tilde{q}_i^+),\tilde{k}'(\tilde{q}_i^-) \neq 0$, for all $\tilde{q}_i \in \tilde{D}$.

\end{definition}

\begin{example}
\label{ex:log}
Consider a bivariate logistic gauge function with a parameter $\gamma \in (0,1)$, given by
\begin{equation}
\label{eq:Logistic}
    g_{\bX}(x,y;\gamma)=\frac{1}{\gamma}\left(x+y\right)+\rbr{1-\frac{2}{\gamma}}\min\cbr{x,y},\quad (x,y)\in [0,\infty)^2.
\end{equation}
We visualise its limit set $G$ and the function of $k(q),$ $q\ge 0,$ for the logistic model in Figure~\ref{fig:Logistic}.
In this case, $D=\Tilde{D}=\cbr{1}$.
\end{example}
\begin{figure}[ht]
\centering
\includegraphics[width=4.5cm]{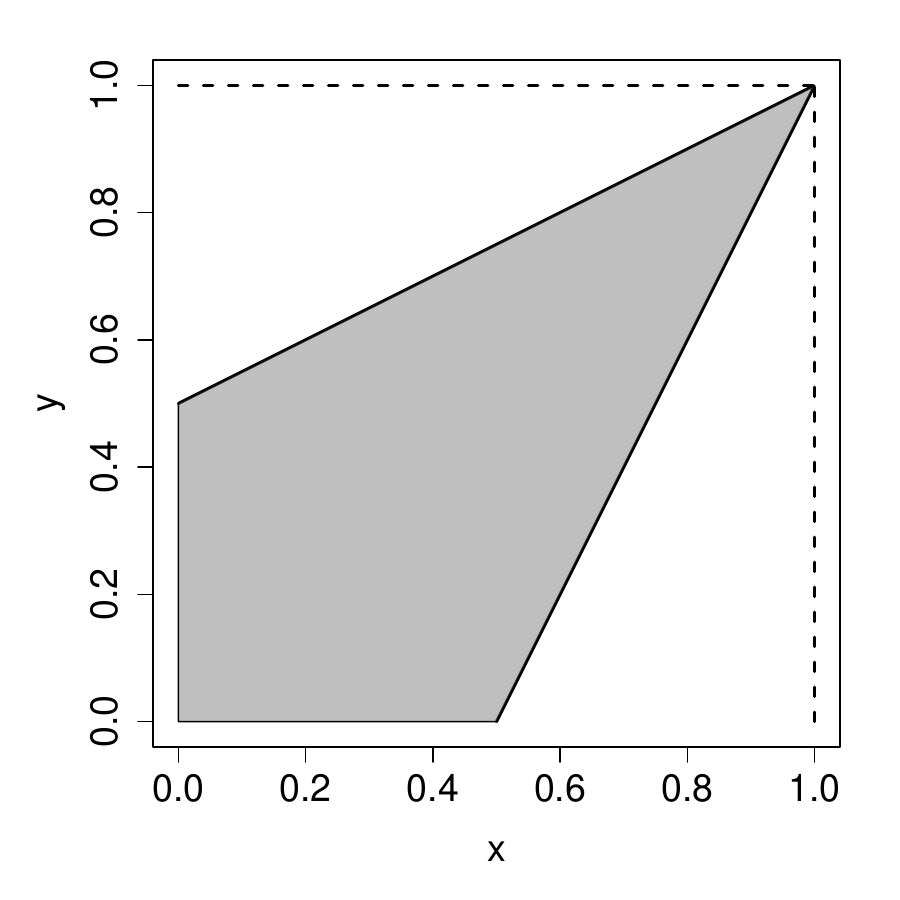}
\hspace{0.4cm}
\includegraphics[width=4.5cm]{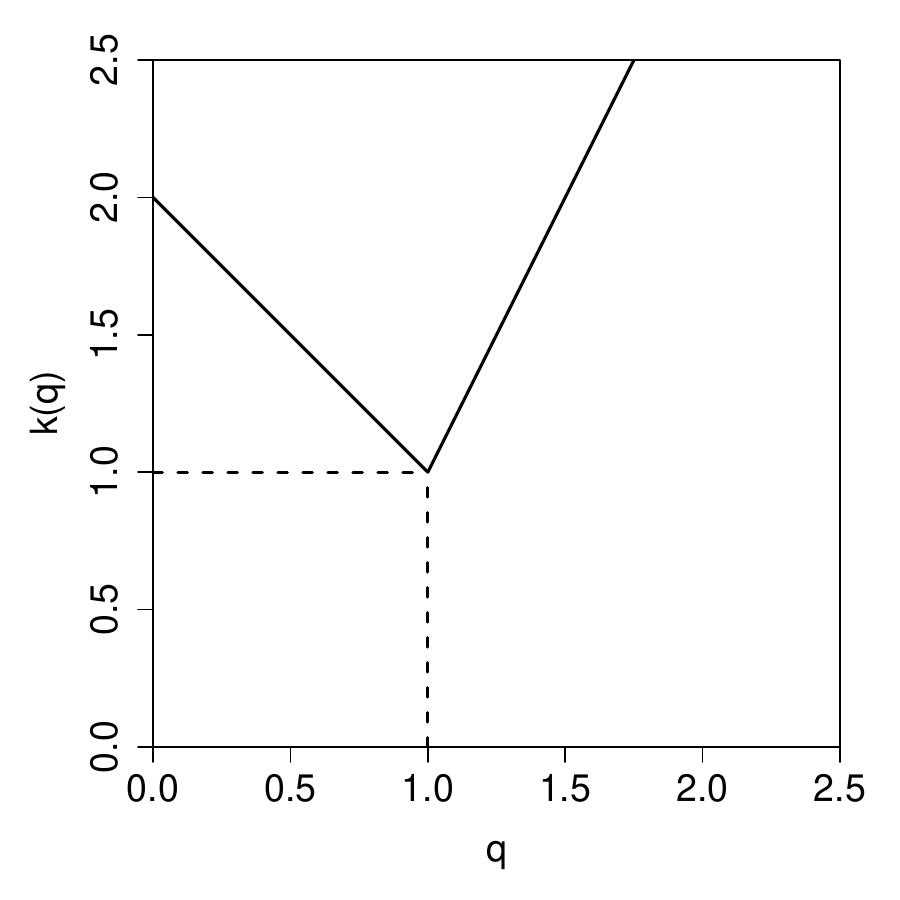}
\caption{\label{fig:Logistic} The shape of the limit set for the logistic model with $\gamma=0.5$ (Left) and the corresponding function $k(q),$ $q\ge 0,$ (Right).
}
\end{figure}

\begin{example}
\label{ex:indLog_min}
    Consider a simple mixing of independent and logistic gauge functions via minimization 
    \begin{equation*}
    \label{eq:IndLogistic}
        g_{\bX}(x,y;\gamma)=\min\cbr{x+y,\,\,\frac{1}{\gamma}\rbr{x+y}+\rbr{1-\frac{2}{\gamma}}\min\rbr{x,y}},\quad(x,y)\in[0,\infty)^2,\quad \gamma\in(0,1).
    \end{equation*}
    The shape of the limit set in the left panel of Figure~\ref{fig:mm_IndLog} retains the most protruding parts of the constituent limit sets. Its corresponding $k(q)$ function is shown in the right panel of Figure~\ref{fig:mm_IndLog}.
    In this case, we have $D=\Tilde{D}=\cbr{0,1}.$
\end{example}

\begin{figure}[ht]
\centering
\includegraphics[width=4.5cm]{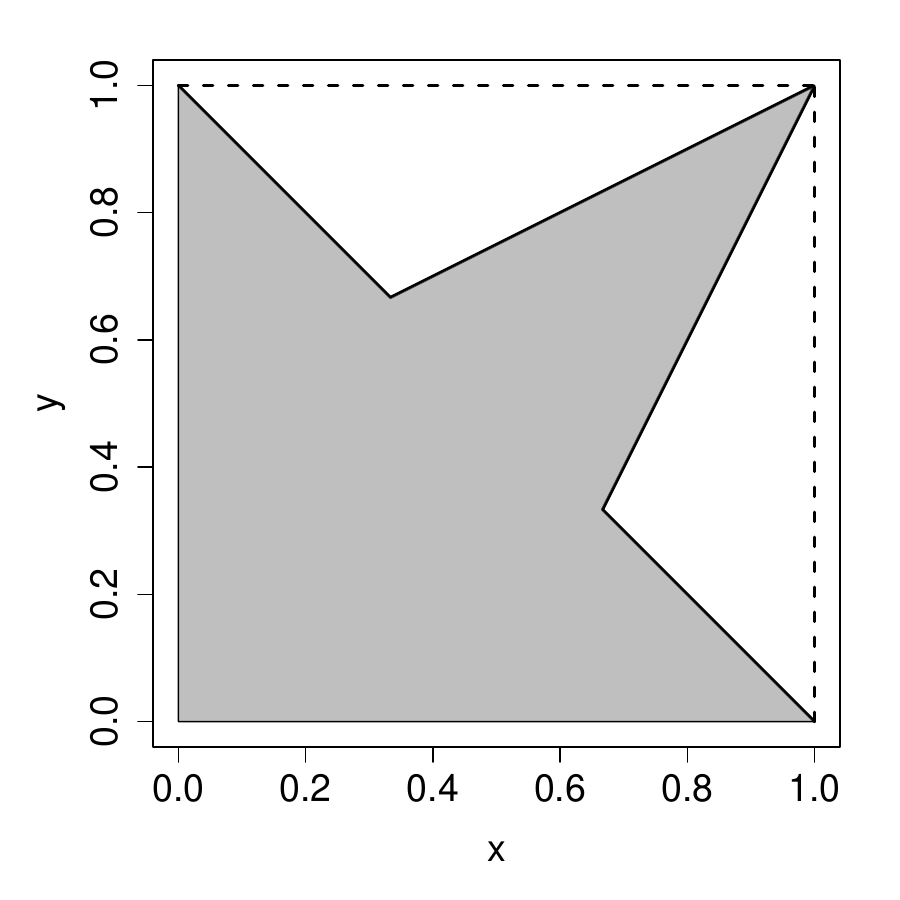}
\hspace{0.4cm}
\includegraphics[width=4.5cm]{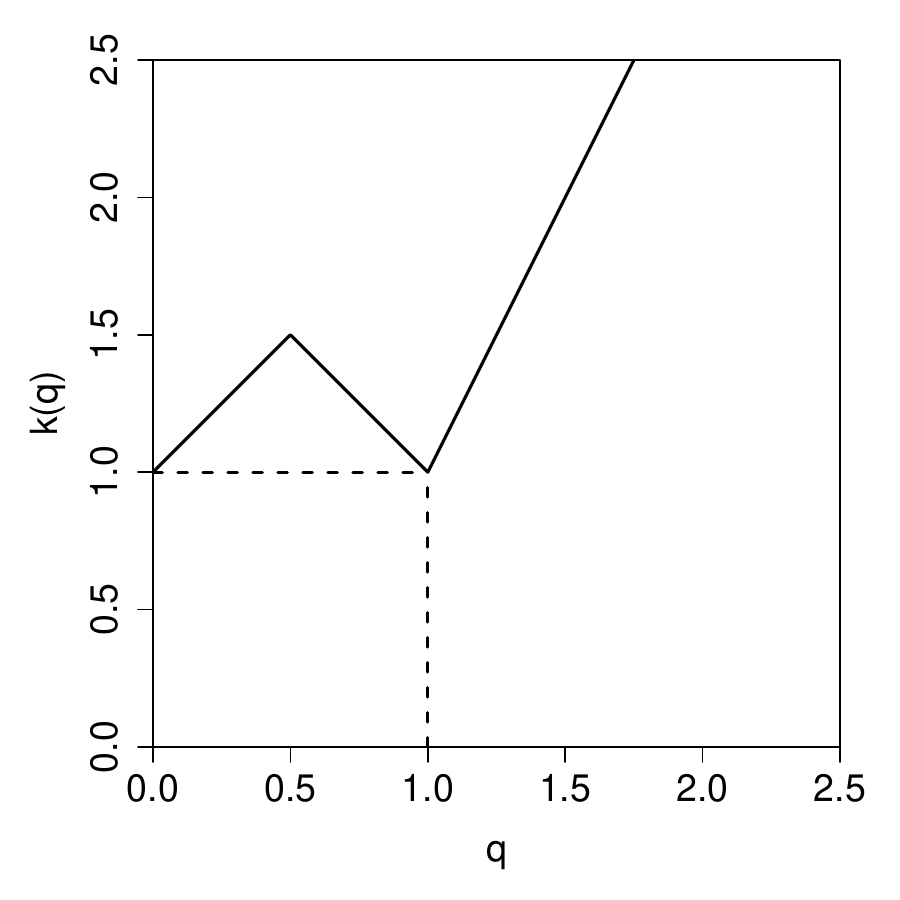}
\caption{\label{fig:mm_IndLog} (Left) the shape of the limit set for the mixture of independent and logistic gauge functions via minimization with $\gamma=0.5$. (Right) its corresponding function $k(q),$ $q\ge 0$.
}
\end{figure}

The pointy limit set may intersect the boundary set $\cbr{(x,y):\max(x,y)=1}$ at a finite number of points. 
Its essential feature is the presence of a sharp upper-right corner, with any other boundary intersections also being sharp.
The strength of asymptotic dependence for a pointy limit set is determined by both the number of boundary intersections and the magnitudes of the left- and right- derivative of $k(q)$ at the intersection points $q$, denoted as $k'(q^-)$ and $k'(q^+)$, respectively.
Similarly, the corresponding derivatives for $\tilde{k}(q)$ also contribute to the dependence strength.

In order to detail the result on tail dependence, we must consider the behavior of $k(q)$. Let $[0,1]=\bigcup_{i=1}^{n-1}[q_i,q_{i+1}]$ such that for any two adjacent points $q_i < q_{i+1}$, $k(q)$ on $[q_i,q_{i+1}]$ is  either bijective on $[q_i,q_{i+1}]$ or takes a constant value greater than 1, i.e., $k(q)=m > 1$ for some $m\in \reals.$
To classify these cases, we define three disjoint index sets, $A,B,C$:
\begin{align}
\label{eq:indexSets}
\begin{split}
    A&=\cbr{i:k(q_i)>1, k(q_{i+1})=1,\,\,\text{and}\,\,k(q)\searrow\,\,\text{on}\,\,[q_i,q_{i+1}]\,\,\text{for}\,\,i=1,\ldots,n-1}\\
    B&=\cbr{i:k(q_i)=1, k(q_{i+1})>1\,\, \text{and}\,\,k(q)\nearrow\,\,\text{on}\,\,[q_i,q_{i+1}]\,\,\text{for}\,\,i=1,\ldots,n-1}\\
    C&=\cbr{i:k(q)=m>1\,\,\text{on}\,\,[q_i,q_{i+1}]\,\,\text{for}\,\,i=1,\ldots,n-1}.
\end{split}
\end{align}


 The sets $\Tilde{A}$, $\Tilde{B}$, and $\Tilde{C}$, are defined as above in relation to $\tilde{k}(q)$. The sets $A$ and $B$, where $k(q)$ intersects the boundary, contribute directly to the tail dependence coefficient.
In contrast, the region in $C$, where $k(q)$ takes a constant value greater than 1, does not intersect the boundary and, therefore, do not contribute to the tail dependence coefficient. 
Assumption~\ref{ass:kq} provides a mild regularity condition on $k$, and its counterpart $\tilde{k}$. 
\begin{assumption}
\label{ass:kq}
\begin{enumerate}[label=(\roman*)]
    \item The functions $k(q)=g_{\bX}(1,q)$ and $\tilde{k}(q) = g_{\bX}(q,1)$ are continuous and three times (piecewise) continuously differentiable, i.e., possess left- and right-hand derivatives of up to order three for $q \in [0,\infty)$.
    \item If, for $i \in A$, $k(q_{i+1})=1$ and $k'(q_{i+1}^-)=0$, then $k(q)-1=a_k(q_{i+1}-q)^{\rho_k}+b_k(q_{i+1}-q)^{\rho_k+\epsilon}+\smallo\rbr{(q_{i+1}-q)^{\rho_k+\epsilon}}$ as $q\rightarrow q_{i+1}^-$ for some $a_k>0$, $b_k\in\mathbb{R}$, $\rho_k>1$, and $\epsilon>0$. If, for $i \in B$, $k(q_{i})=1$ and $k'(q_{i}^+)=0$, then $k(q)-1=a_k(q-q_i)^{\rho_k}+b_k(q-q_i)^{\rho_k+\epsilon}+\smallo\rbr{(q-q_i)^{\rho_k+\epsilon}}$ as $q\rightarrow q_{i}^+$ for some $a_k>0$, $b_k\in\mathbb{R}$, $\rho_k>1$, and $\epsilon>0$. 
    The constants $a_k,b_k,\rho_k,\epsilon$ may be different at each intersection point.
    The same conditions hold for $\tilde{k}$ with $i \in \tilde{A}, i\in\tilde{B}$.
\end{enumerate}
\end{assumption}

\begin{remark}
    By Definition~\ref{def:pointyset}, a pointy limit set does not possess points such that $k(q_i)=1$ and $k'(q_i^{\pm})=0$. However, Assumption~\ref{ass:kq}~(ii) will be useful in Proposition~\ref{prop:AI}.
\end{remark}

Consider a bivariate random vector $(X,Y)$ with exponential margins and possessing a density function.
Under the truncated gamma assumption for $f_{R|W}(r|w)$ in~\eqref{eq:tGam} over a high threshold $r_\tau\rbr{w}$,
we define the bivariate joint density 
\begin{equation}
\label{eq:tg_cartesian}
    f^\star_{X,Y}(x,y)=f_{R|W}\rbr{x+y\,\Bigg\rvert\,\frac{x}{x+y}}f_W\rbr{\frac{x}{x+y}}\frac{1}{x+y},
\end{equation}
with support $\cbr{(x,y):x+y > r_\tau\rbr{\frac{x}{x+y}}}$. 
Note that $f_W$ is unknown in general, though could be estimated.
Letting $y=xq,$ $q>0,$ the density \eqref{eq:tg_cartesian} can be further expressed in terms of $k(q)$ and $b(q)$
\begin{equation}
\label{eq:tg_kqbq}
    f^\star_{X,Y}(x,xq)=x^{\lambda-1} \exp\cbr{-xk(q)}b(q),
\end{equation}
with 
\[
    b(q):=\frac{\frac{1}{\Gamma(\lambda)}(1+q)^{-2}k(q)^\lambda f_W\rbr{\frac{1}{1+q}}}{\widebar{F}_{R|W}\rbr{r_\tau\rbr{\frac{1}{1+q}};\lambda,g_{\bX}\rbr{\frac{1}{1+q},\frac{q}{1+q}}}},
\]
where $\widebar{F}_{R|W}\rbr{\,\cdot\,;\lambda,g_{\bX}}$ is the gamma survival function.
For the detailed derivation of \eqref{eq:tg_kqbq}, refer to Section~\ref{appendix:proofThm3.2.} of the Appendix. We require a further regularity assumption on $b$, and its counterpart $\tilde{b}$, which comes from considering $f^\star_{X,Y}(yq,y) = y^{\lambda-1}\exp\{-y\tilde{k}(q)\}\tilde{b}(q)$ in place of~\eqref{eq:tg_kqbq}.

\begin{assumption}
    \label{ass:b}
    The functions $b(q), \tilde{b}(q)$, are continuous, finite and twice (piecewise) continuously differentiable for all $q \in [0,\infty)$.
\end{assumption}

\begin{proposition}
\label{prop:pointyAD}
Suppose that Assumptions~\ref{ass:kq} and~\ref{ass:b} hold. Under the truncated gamma formulation \eqref{eq:tg_cartesian}, a pointy limit set $G$, as specified in Definition~\ref{def:pointyset}, implies that the tail dependence lies between $\chi_{\text{lower}}$ and $\chi_{\text{upper}}$, with
\begin{align*}
    \chi_{\text{lower}} & = \frac{b(1)/k'(1^+) + \tilde{b}(1)/\tilde{k}'(1^+)}{ \max\left[- \sum_{i \in A} \frac{b(q_{i+1})}{k'(q_{i+1}^-)} +\sum_{i\in B} \frac{b(q_{i})}{k'(q_{i}^+)} + \frac{b(1)}{k'(1^+)} , - \sum_{i \in \tilde{A}} \frac{\tilde{b}(\tilde{q}_{i+1})}{\tilde{k}'(\tilde{q}_{i+1}^-)} +\sum_{i\in\tilde{B}} \frac{\tilde{b}(\tilde{q}_{i})}{\tilde{k}'(\tilde{q}_{i}^+)} + \frac{\tilde{b}(1)}{\tilde{k}'(1^+)} \right]},\\
    \chi_{\text{upper}} & = \frac{b(1)/k'(1^+) + \tilde{b}(1)/\tilde{k}'(1^+)}{ \min\left[- \sum_{i \in A} \frac{b(q_{i+1})}{k'(q_{i+1}^-)} +\sum_{i\in B} \frac{b(q_{i})}{k'(q_{i}^+)} + \frac{b(1)}{k'(1^+)} , - \sum_{i \in \tilde{A}} \frac{\tilde{b}(\tilde{q}_{i+1})}{\tilde{k}'(\tilde{q}_{i+1}^-)} +\sum_{i\in\tilde{B}} \frac{\tilde{b}(\tilde{q}_{i})}{\tilde{k}'(\tilde{q}_{i}^+)} + \frac{\tilde{b}(1)}{\tilde{k}'(1^+)} \right]},
\end{align*}
where the sets $A$ and $B$ are defined in~\eqref{eq:indexSets} with their counterparts $\tilde{A}$ and $\tilde{B}$ defined similarly.
\end{proposition}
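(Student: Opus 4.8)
The plan is to express the tail dependence coefficient as a ratio of a joint tail probability to a marginal tail probability, both computed under the model density $f^\star_{X,Y}$, and to extract the leading asymptotics of each probability by Laplace-type arguments applied to the representation~\eqref{eq:tg_kqbq}. Writing $\bar F_X,\bar F_Y$ for the model marginal survivor functions, I would work from the two representations $\chi=\lim_{t\to\infty}\prob(X>t,Y>t)/\prob(Y>t)$ and $\chi=\lim_{t\to\infty}\prob(X>t,Y>t)/\prob(X>t)$, which coincide when the margins are identical. Since under $f^\star_{X,Y}$ the two margins need not agree exactly, these limits may differ, and the interval they span is exactly $[\chi_{\text{lower}},\chi_{\text{upper}}]$; this is why the statement reports bounds, with the $\max$ and $\min$ selecting the larger and smaller normalising constant respectively.

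After the substitution $y=xq$ (Jacobian $x$), each probability becomes $\iint x^{\lambda}e^{-xk(q)}b(q)\,\diff x\,\diff q$ over the relevant region, with the radial variable running from $a(q)=\max(t,t/q)$ in the joint case and from $t/q$ or $t$ in the two marginal cases. The inner integral obeys $\int_{a}^{\infty}x^{\lambda}e^{-xk(q)}\,\diff x\sim a^{\lambda}e^{-ak(q)}/k(q)$ as $a\to\infty$ by an endpoint Watson's lemma estimate, reducing everything to an outer integral $\int e^{-t\psi(q)}\phi(q)\,\diff q$ with exponent $\psi(q)=k(q)$ or $\psi(q)=k(q)/q=\tilde k(1/q)$ (the last identity from homogeneity of $g_{\bX}$). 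Because $g_{\bX}(x,y)\ge\max(x,y)$ forces $\psi\ge1$ with equality only on $D$ (respectively $\tilde D$), the exponential rate is $e^{-t}$ and the dominant contributions localise at the boundary points of $D$, $\tilde D$, together with the corner $q=1$.

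The core step is the local Laplace analysis at these points. Because $G$ is pointy, Definition~\ref{def:pointyset} guarantees that each minimiser of $\psi$ is a kink with nonzero one-sided slopes, so Assumption~\ref{ass:kq}(i) gives $\psi(q)\approx 1+\psi'(q_0^{\pm})(q-q_0)$ while Assumption~\ref{ass:b} lets me replace $\phi(q)$ by $\phi(q_0)$; a one-sided Laplace estimate then produces contributions proportional to $1/\lvert\psi'(q_0^-)\rvert$ and $1/\psi'(q_0^+)$, yielding the $1/k'(q_i^{\pm})$ and $1/\tilde k'(\tilde q_i^{\pm})$ factors. The numerator $\prob(X>t,Y>t)$ receives a contribution only from $q=1$, the unique direction where both coordinate constraints bind at rate $1$ (any other $q_0\in D$ with $q_0<1$ forces $x>t/q_0$ and hence rate $e^{-t/q_0}<e^{-t}$); treating the two faces of the corner via $k$ and $\tilde k$ gives $b(1)/k'(1^+)+\tilde b(1)/\tilde k'(1^+)$. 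The denominators $\prob(X>t)$ and $\prob(Y>t)$ instead collect the left-contributions at the $A$- (respectively $\tilde A$-) endpoints $q_{i+1}$, the right-contributions at the $B$- (respectively $\tilde B$-) endpoints $q_i$ defined in~\eqref{eq:indexSets}, plus the $q>1$ face at $q=1$, matching the two bracketed expressions; the common prefactor $t^{\lambda-1}e^{-t}$ cancels in each ratio.

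I expect the main obstacle to be the rigorous bookkeeping of the Laplace step: establishing the one-sided $1/k'$ contributions uniformly, showing that the $C$-intervals where $k\equiv m>1$, the non-extremal smooth parts of the boundary, and the truncation of the support to $\{(x,y):x+y>r_\tau(x/(x+y))\}$ all contribute at strictly smaller exponential order and are therefore negligible as $t\to\infty$, and verifying that every bracketed sum is strictly positive (so the bounds are well defined and $\chi>0$, i.e.\ asymptotic dependence holds). A secondary subtlety is making precise the claim that the two marginal normalisations bracket the true $\chi$: this requires relating the model quantile thresholds $\bar F_X^{-1}(1-u)$ and $\bar F_Y^{-1}(1-u)$ to a common level $t$ and checking that the induced constant shift does not alter the leading corner constant.
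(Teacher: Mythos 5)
Your proposal is correct and shares the analytic core of the paper's proof---Laplace-type localization of integrals of $e^{-xk(q)}b(q)$ at the boundary intersection points, one-sided kink contributions $b/\lvert k'\rvert$ organized through the index sets $A$, $B$, $C$, and a two-margin bracketing that produces $\chi_{\text{lower}}$ and $\chi_{\text{upper}}$---but it reaches that core by a different route. The paper never computes tail probabilities directly: it sandwiches $\widebar{F}^\star(x_u,y_u)$ between $\widebar{F}^\star$ evaluated at the common levels $\max(x_u,y_u)$ and $\min(x_u,y_u)$, applies L'H\^{o}pital's rule to reduce each bound to a ratio of one-dimensional integrals (the partial derivatives $-\widebar{F}^\star_1(x,x)-\widebar{F}^\star_2(x,x)$ over the marginal densities $f_X^\star$, $f_Y^\star$), and then invokes Watson's lemma after the change of variables $v=k(q)-1$. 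You instead integrate rather than differentiate: you treat $\prob(X>t,Y>t)$, $\prob(X>t)$, $\prob(Y>t)$ as double integrals, apply the incomplete-gamma endpoint asymptotic to the radial integral, and run a one-sided Laplace argument on the angular integral. Your organization buys a more transparent treatment of the numerator: the observation that any $q_0\in D$ with $q_0<1$ contributes only at rate $e^{-t/q_0}$, and the use of the homogeneity identity $k(q)/q=\tilde{k}(1/q)$ (with $k'(1^-)-1=-\tilde{k}'(1^+)$ and $b(1)=\tilde{b}(1)$) to convert the left face of the corner into $\tilde{b}(1)/\tilde{k}'(1^+)$, replace the paper's structural identification of the numerator with $\mathcal{I}_{[1,\infty)}+\tilde{\mathcal{I}}_{[1,\infty)}$. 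The price is that your two-stage argument needs the radial asymptotic to hold uniformly in $q$ near each critical point, a technicality the paper's one-dimensional Watson argument sidesteps; you flag this correctly.

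Two points to shore up. First, the bracketing claim---that the true $\chi$, defined through both marginal quantile functions, lies between the two single-margin normalizations---is where the paper simply uses monotonicity: $\widebar{F}^\star\rbr{\max(x_u,y_u),\max(x_u,y_u)}\le\widebar{F}^\star(x_u,y_u)\le\widebar{F}^\star\rbr{\min(x_u,y_u),\min(x_u,y_u)}$ together with $1-u=1-F_X^\star(x_u)=1-F_Y^\star(y_u)$. This is immediate and strictly simpler than the quantile-shift argument you sketch at the end; adopt it rather than trying to control the difference of the two model quantile functions. Second, a harmless bookkeeping slip: the bare density is $f^\star_{X,Y}(x,xq)=x^{\lambda-2}e^{-xk(q)}b(q)$ (the paper's own appendix uses this, while its displayed equation~\eqref{eq:tg_kqbq} already absorbs the Jacobian of $y=xq$), so your double integrals should carry $x^{\lambda-1}$, not $x^{\lambda}$; since every power of $t$ cancels in the ratios, this does not affect the stated bounds.
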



The proof is detailed in Section~\ref{appendix:proofThm3.2.} of the Appendix.

\begin{remark}
$\chi_{\text{lower}}>0$, implying AD. In the event that $k=\tilde{k}$ and $b=\tilde{b}$, then the lower and upper bounds are equal, which also holds under the slightly less restrictive condition of equality at the boundary intersection points. When the upper and lower bounds are equal then
\begin{align*}
    \chi & = \frac{2b(1)/k'(1^+)}{-\sum_{i \in A} \frac{b(q_{i+1})}{k'(q_{i+1}^-)} +\sum_{i\in B} \frac{b(q_{i})}{k'(q_{i}^+)} + \frac{b(1)}{k'(1^+)}}.
\end{align*}
\end{remark}


\begin{example}
    Recall the bivariate logistic gauge function in~\eqref{eq:Logistic}.
    The tail dependence coefficient corresponds to the case where $|A|=|\Tilde{A}|=1$ with $q_1=0$ and $q_2=1$, $|B|=|\Tilde{B}|=0$, and $k(q)=\tilde{k}(q)$ due to symmetry. Since we always have $b(1)=\tilde{b}(1)$, we have the precise value of $\chi$:
\begin{align*}
    \chi
    =
    \frac{2k'(1^-)}{k'(1^-)-k'(1^+)} 
    =
    \frac{2-2\gamma}{2-\gamma} \in (0,1),
\end{align*}
where $k(1)=1$, $k(0)=1/\gamma >1 $, $k'(1^+)=1/\gamma$, and $k'(1^-)=1-1/\gamma$.
\end{example}

We compare the tail dependence coefficient based on the geometric approach to the theoretical one for the logistic distribution, $\chi=2-2^\gamma$, in the right panel of Figure~\ref{fig:Chi_Log}.
The small differences stem from the fact that the modeling frameworks are distinct.
However, in practice, we estimate $\chi$ using simulation-based techniques.
The results displayed in Section~\ref{sec:Simulation} show that we can estimate $\chi$ well for a range of $\gamma$ values under the geometric framework.

\begin{figure}[ht]
\centering
\includegraphics[width=4.5cm]{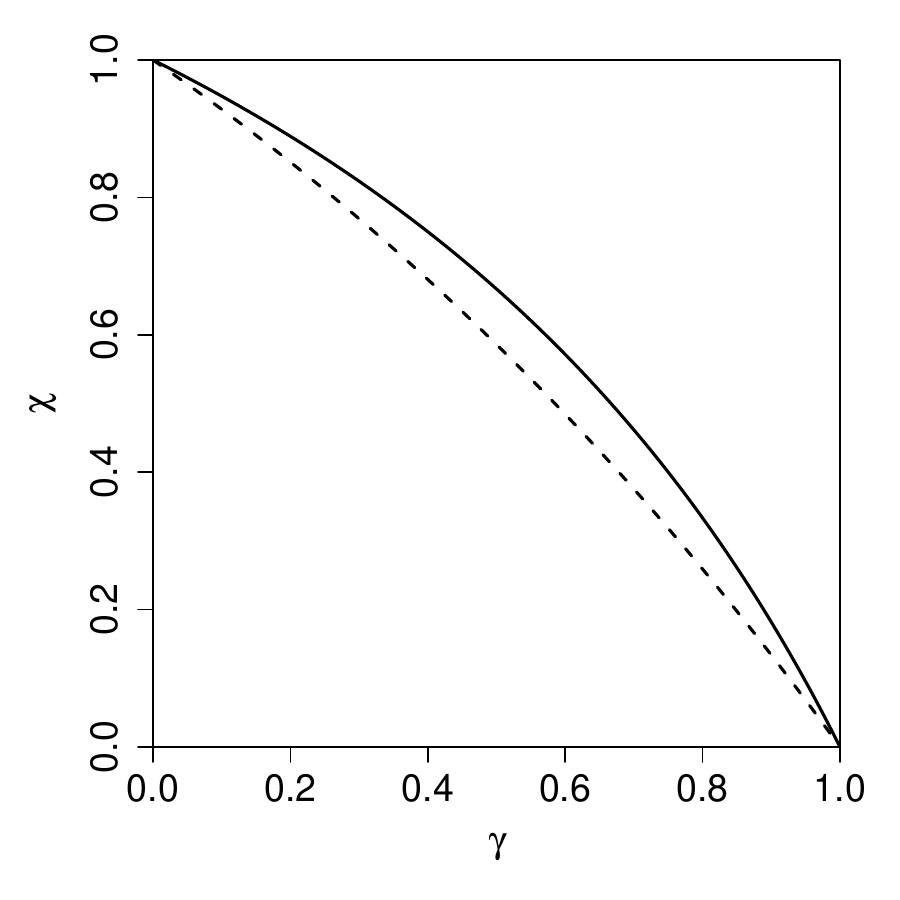}
\caption{\label{fig:Chi_Log} The comparison of the tail dependence coefficient derived from the geometric approach (solid line) with the theoretical value (dashed line).
}
\end{figure}

\begin{example}
\label{ex:indLog_min_chi}
    For the mixture of independent and logistic gauge functions in Example~\ref{eq:IndLogistic}, we have $|A|=|\Tilde{A}|=1$ with $q_1=1-\gamma$ and $q_2=1$, $|B|=|\Tilde{B}|=1$ with $q_1=0$ and $q_2=1-\gamma$, and $k(q)=\tilde{k}(q)$ due to symmetry.
    The tail dependence coefficient under the truncated gamma approximation has bounds
    \begin{align*}
       \chi_{\text{lower}} &= \frac{2b(1)/k'(1^+)}{\frac{b(1)}{k'(1^-)} + \frac{b(1)}{k'(1^+)}+ \frac{\max(b(0),\tilde{b}(0))}{k'(0^+)}}\\ 
              \chi_{\text{upper}} &= \frac{2b(1)/k'(1^+)}{\frac{b(1)}{k'(1^-)} + \frac{b(1)}{k'(1^+)}+ \frac{\min(b(0),\tilde{b}(0))}{k'(0^+)}} ,
    \end{align*}
    where $k(0)=k(1)=1$, 
    $k'(0^+)=1$, $k'(1^+)=1/\gamma$, and $k'(1^-)=1-1/\gamma.$
    In Proposition~\ref{prop:pointyAD}, we assume $b(0)<\infty$, so $\chi>0$. In Proposition~\ref{prop:AI} below, we will allow $b(0)$ and $\tilde{b}(0)$ to diverge, which happens when $f_W(1)$ and $f_W(0)$ diverge. This causes the tail dependence coefficient to be zero.
\end{example}


\subsection{Asymptotic independence}
\label{sec:AI}

We continue to focus on the boundary case, where $k(1)=\tilde{k}(1)=1$, or $(1,1)\in G$, and define criteria for asymptotic independence.
One key feature of a limit set indicating asymptotic independence is the presence of a line segment intersecting this boundary; equivalently, $g_{\bX}(1,a)=g_{\bX}(b,1)=1$ at infinitely many points $a$ and $b$ in $[0,1)$. Another feature of $G$ associated with AI is the presence of vertical and horizontal tangents along the boundary $\cbr{(x,y):\max(x,y)=1}$.
In this case, the contribution of the pointy part at $(1,1)$ becomes negligible compared to that of the smooth tangent at the boundary point shown in Figure~\ref{fig:invLogLog_min}. See also Figure 1 (b) and (c) in~\cite{BalkemaNoldeAD}.
Finally, if $b(q)$ and $\tilde{b}(q)$ diverge at an intersection point of the boundary, this also yields AI. Divergence of $b, \tilde{b}$ is driven by divergence of $f_W$, for which a sensible assumption is to allow explosion only at the endpoints $\{0,1\}$. We formalize the relevant conditions in Proposition~\ref{prop:AI}, proven in Section~\ref{appendix:proofAI} of the Appendix.



\begin{proposition}
\label{prop:AI}
Suppose that Assumption~\ref{ass:kq} holds, and that $(1,1)\in G$. Under the truncated gamma formulation~\eqref{eq:tg_cartesian}, we have asymptotic independence of $(X,Y)$ if
    any of the following conditions apply:
    \begin{enumerate}
        \item Assumption~\ref{ass:b} holds and there exist sets $D,\Tilde{D}\subseteq[0,1]$ (not necessarily identical) with $|D|$, $|\tilde{D}|>0$ such that $k(q)=1$ for all $q \in D$, and $\tilde{k}(q)=1$ for all $q \in \Tilde{D}$;
        \item Assumption~\ref{ass:b} holds and there exist boundary intersection points $q_i,\tilde{q}_i\in[0,1]$ (not necessarily identical), with $b(q_i), \tilde{b}(\tilde{q}_i) \neq 0$, such that 
        \begin{align*}
        \begin{cases}
            k'(q_{i+1}^-)=0,\quad \tilde{k}'(\tilde{q}_{i+1}^-)=0,\quad i \in A,\\
            k'(q_{i}^+)=0,\quad \tilde{k}'(\tilde{q}_i^+)=0,\quad i\in B,
        \end{cases}
        \end{align*}
        implying vertical and horizontal tangents along the boundary $\cbr{(x,y):\max(x,y)=1}$. 
        \item  If $k(0)=\tilde{k}(0)=1$, Assumption~\ref{ass:b} holds, except $b$, $\tilde{b}$ are infinite at $0$, with $b(q)\sim a_b q^{\rho_b}$
        as $q \to 0$, with $a_b>0$ and $\rho_b \in (-1,0)$ (similarly for $\tilde{b}$).
    \end{enumerate}
\end{proposition}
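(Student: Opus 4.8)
The plan is to evaluate $\chi$ directly from its definition. Writing $t_X=F_X^{-1}(u)$, $t_Y=F_Y^{-1}(u)$ and $t=-\log(1-u)$, and using $\prob(Y\ge t_Y)=1-u$, we have $\chi=\lim_{u\uparrow1}\prob(X\ge t_X,Y\ge t_Y)/(1-u)$. Both the joint exceedance probability in the numerator and the marginal $1-u=\prob(Y\ge t_Y)$ decay like $t^{\gamma}e^{-t}$ for suitable polynomial orders $\gamma$, and the strategy is to show that under each of the three conditions the numerator has a strictly smaller polynomial order than the denominator, so that the ratio tends to $0$. To obtain these orders I first reduce every probability to a one-dimensional angular integral. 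Using $y=xq$ and the representation $f^\star_{X,Y}(x,xq)=x^{\lambda-1}e^{-xk(q)}b(q)$ from~\eqref{eq:tg_kqbq}, and the analogous $x=yp$ representation $f^\star_{X,Y}(yp,y)=y^{\lambda-1}e^{-y\tilde k(p)}\tilde b(p)$, the radial integral reduces to an incomplete gamma integral; carrying it out yields, up to constants, $\prob(Y\ge t)\asymp\int \tilde b(p)\,t^{\lambda}\tilde k(p)^{-1}e^{-t\tilde k(p)}\,dp$ and $\prob(X\ge t,Y\ge t)\asymp\int b(q)\,t^{\lambda}k(q)^{-1}e^{-t\psi(q)}\,dq$, with joint rate $\psi(q):=k(q)/\min(1,q)$. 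The leading behaviour of each integral is then governed by the minimizers of its rate function, and the exercise becomes a Laplace/Watson-type analysis at those minimizers.

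The second step is to show that the joint probability is always of order $t^{\lambda-1}e^{-t}$, matching the joint-tail order underlying Proposition~\ref{prop:pointyAD}. Since $g_{\bX}(x,y)\ge\max(x,y)$ and $g_{\bX}$ is homogeneous, $k(q)\ge\max(1,q)$, whence $\psi(q)\ge1$ with equality if and only if $q=1$: for $q<1$ one has $\psi(q)=k(q)/q\ge1/q$, and for $q>1$ one has $\psi(q)=k(q)\ge q$. These same bounds force a genuine corner at $q=1$: $\psi(q)-1\ge 1-q$ as $q\uparrow1$ and $\psi(q)-1\ge q-1$ as $q\downarrow1$, so $\psi(q)-1\asymp|q-1|$ from each side irrespective of any tangency of $k$ (the $1/q$ factor on the left and $k(q)\ge q$ on the right both supply a nonvanishing one-sided slope, equal to $\tilde k'(1^+)=1-k'(1^-)\ge1$ and $k'(1^+)\ge1$ respectively). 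A one-sided Laplace estimate with $b$ finite near $q=1$ (Assumption~\ref{ass:b}) then gives $\prob(X\ge t,Y\ge t)=\bigO\rbr{t^{\lambda-1}e^{-t}}$. The crucial point is that none of the AI features can lower this rate: a line segment of $D$ or $\tilde D$, a tangent boundary point $a\in[0,1)$, or the divergence of $b,\tilde b$ all occur at angular directions where $\psi>1$ strictly (e.g.\ $\psi\to\infty$ as $q\to0$, and a tangent at $(a,1)$ with $a<1$ corresponds to $q=1/a$ with $\psi(1/a)=1/a>1$), so the joint integral is blind to them and remains concentrated at the sharp corner $q=1$.

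The third step is to show that in each case the relevant marginal has strictly larger polynomial order. Working with $\prob(Y\ge t)$, whose rate $\tilde k(p)$ is minimized on $\tilde D=\{p:\tilde k(p)=1\}$: in case~(1) the segment contributes over a set of positive length with $\tilde k\equiv1$, giving $\prob(Y\ge t)\gtrsim t^{\lambda}e^{-t}\int_{\tilde D}\tilde b>0$, so order $\lambda$; in case~(2) a tangent point $a$ with $\tilde k(p)-1\sim \tilde a_k|p-a|^{\tilde\rho_k}$, $\tilde\rho_k>1$ (Assumption~\ref{ass:kq}(ii)) and $\tilde b(a)\neq0$ gives, after the substitution $s=t^{1/\tilde\rho_k}(p-a)$, order $\lambda-1/\tilde\rho_k$; in case~(3) the singular weight $\tilde b(p)\sim\tilde a_b p^{\tilde\rho_b}$, $\tilde\rho_b\in(-1,0)$, at $p=0$ with $\tilde k(0)=1$ gives, via $\int_0^\delta p^{\tilde\rho_b}e^{-t\tilde k'(0^+)p}\,dp\sim\Gamma(\tilde\rho_b+1)(t\tilde k'(0^+))^{-(\tilde\rho_b+1)}$ (convergent since $\tilde\rho_b>-1$), order $\lambda-1-\tilde\rho_b$. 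In all three cases the exponent $\beta$ strictly exceeds $\lambda-1$ (the excess over $\lambda-1$ being $1$, $1-1/\tilde\rho_k>0$, and $-\tilde\rho_b>0$ respectively), so $\chi=\lim_{t\to\infty}\bigO\rbr{t^{\lambda-1-\beta}}=0$. The symmetric statements for $D$, $k$, and $b$ are used only to confirm that the argument is insensitive to whether $\prob(X\ge t)$ or $\prob(Y\ge t)$ is taken as normalizer.

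The main obstacle I anticipate is making the Laplace/Watson estimates rigorous at \emph{non-smooth} extrema, since the minimizers here are one-sided corners, interval endpoints, flat plateaus, or carry integrable singular weights $b$, none of which is covered by the classical smooth-interior-maximum version. This requires a piecewise, one-sided version of the estimate together with uniform control of the incomplete-gamma remainder so that it may be integrated against $b$ over $q$; the regularity in Assumptions~\ref{ass:kq} and~\ref{ass:b} --- piecewise $C^3$ expansions for $k,\tilde k$, finite piecewise $C^2$ functions $b,\tilde b$, and the explicit power laws in~\ref{ass:kq}(ii) and in case~(3) --- is exactly what licenses term-by-term substitution and the extraction of the leading power of $t$. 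A secondary technical point is the matching of the thresholds $t_X,t_Y\sim t$: because the model margins share the common exponential rate $e^{-t}$, one has $t_X/t_Y\to1$, and since the conclusion rests only on a strict gap in polynomial order --- together with the elementary bound $\min_q\max(t_X,t_Y/q)k(q)\ge t_Y$, which guarantees the joint exponential factor is no larger than $e^{-t_Y}$ --- the lower-order threshold discrepancy does not affect the vanishing of the limit.
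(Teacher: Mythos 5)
Your proof is correct in substance and arrives at exactly the order-comparisons that drive the paper's argument, but by a genuinely different reduction. The paper never integrates the survival function directly: it sandwiches $\chi$ between ratios of diagonal survival functions, applies L'H\^opital so that everything reduces to the fixed-$x$ density-level integrals $\mathcal{I}_{[0,1]}(x)=x^{\lambda-1}\int_0^1 e^{-xk(q)}b(q)\,\diff q$ and $\mathcal{I}_{[1,\infty)}(x)$ (plus tilde counterparts), and then applies Watson's lemma after the substitution $v=k(q)-1$; asymptotic independence follows from the limit in~\eqref{eq:chiupper} because in each of the three cases the marginal integrals acquire polynomial order strictly larger than that of the joint-exceedance terms, with precisely your gaps $1$, $1-1/\rho_k$ and $1-(\rho_b+1)/\rho_k$. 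You instead integrate radially first (incomplete gamma) and run Laplace estimates on angular integrals, introducing the joint rate $\psi(q)=k(q)/\min(1,q)$ and proving the uniform corner bound $\psi(q)-1\ge|q-1|$ from $k(q)\ge\max(1,q)$ alone. That is a nice step: it shows in one stroke that the joint tail is $\bigO\rbr{t^{\lambda-1}e^{-t}}$ (in your normalization) \emph{irrespective} of segments, tangencies, or exploding $b$, since all AI features live at angles where $\psi>1$ strictly; the paper obtains the analogous fact only sub-integral by sub-integral (using $k'(1^+)\ge 1$, just as you do). The trade-off is that your route confronts two technicalities the paper's sidesteps: the unequal thresholds $t_X\neq t_Y$ (which you handle correctly via $\max(t_X,t_Y/q)k(q)\ge\max(t_X,t_Y)$, and which the paper absorbs through its min/max diagonal sandwich), and the uniformity in $q$ of the incomplete-gamma remainder, which you flag but do not execute --- the paper's L'H\^opital step is precisely what lets Watson's lemma apply to a fixed-$x$ integral with no interchange of limits needed.

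One concrete gap: case 3. Your computation $\int_0^\delta p^{\tilde\rho_b}e^{-t\tilde{k}'(0^+)p}\,\diff p\sim\Gamma(\tilde\rho_b+1)\rbr{t\tilde{k}'(0^+)}^{-(\tilde\rho_b+1)}$ silently assumes $\tilde{k}'(0^+)>0$. Condition 3 does not exclude $\tilde{k}'(0^+)=0$, i.e.\ a vertical tangent at the same endpoint where $b$ explodes, and Assumption~\ref{ass:kq}(ii) is written exactly to allow $k(q)-1\sim a_k q^{\rho_k}$ with $\rho_k>1$ there; the paper's proof treats the general case $\rho_k\ge1$ and obtains the Watson exponent $\eta=(\rho_b+1)/\rho_k-1\in(-1,0)$. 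The repair in your framework is one line --- replace the linear rate by $a_k p^{\rho_k}$ to get marginal order $\lambda-(\rho_b+1)/\rho_k$, whose excess over the joint order is $1-(\rho_b+1)/\rho_k>0$ because $\rho_b+1<1\le\rho_k$ --- but as written this case is incomplete. (A cosmetic point, not an error: your prefactor $t^{\lambda}$ takes the displayed form of~\eqref{eq:tg_kqbq} at face value; direct computation of $f^\star_{X,Y}(x,xq)$ gives one power of $x$ less, and the paper's own $\mathcal{I}$ integrals use the lower power. Since your numerator and denominator share the convention, only the gaps matter and your conclusion $\chi=0$ is unaffected.)
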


\begin{example}
\label{ex:invloglog}
Consider a mixing of inverted-logistic and logistic gauge functions via minimization 
    \begin{equation*}
    \label{eq:InvLogLog_min}
        g_{\bX}(x,y;\gamma)=\min\cbr{\rbr{x^{1/\theta}+y^{1/\theta}}^\theta,\,\,\frac{1}{\gamma}\rbr{x+y}+\rbr{1-\frac{2}{\gamma}}\min\rbr{x,y}},
    \end{equation*}
    for $(x,y)\in[0,\infty)^2$ and $(\theta,\gamma)\in(0,1]\times(0,1).$
    The shape of the limit set and its corresponding $k(q)$ are illustrated in Figure~\ref{fig:invLogLog_min}.
    Note that when $\theta=1$, this mixture reduces to the mixing of independent and logistic gauge functions obtained via minimization in Example~\ref{ex:indLog_min}.
    For $\theta\in(0,1)$, a vertical tangent arises at the boundary point $q=0$, with $k'(0^+)=0$.  For the inverted-logistic gauge function, we have $k(q)= \tilde{k}(q)=(1+q^{1/\theta})^\theta$, with $\theta\in(0,1)$. Since $k(q)-1=\theta q^{1/\theta}+(\theta(\theta-1)/2)q^{2/\theta} + \bigO(q^{3/\theta})$ as $q\rightarrow  0^+$, $k(q)-1 \sim \theta q^{1/\theta}$, we have $a_k=\theta>0$, $b_k=\theta(\theta-1)/2$,  $\rho_k = 1/\theta>1$ and $\epsilon=\rho_k$ satisfying Assumption~\ref{ass:kq}~(ii).
    \end{example}
\begin{remark}
Informally, we can view the formula for the tail dependence coefficient to be the same as in Example~\ref{ex:indLog_min_chi}, with $k'(0^+)=0$ yielding $\chi=0$ for $b(0),\tilde{b}(0)>0$.
\end{remark}

\begin{figure}[ht]
\centering
\includegraphics[width=4.5cm]{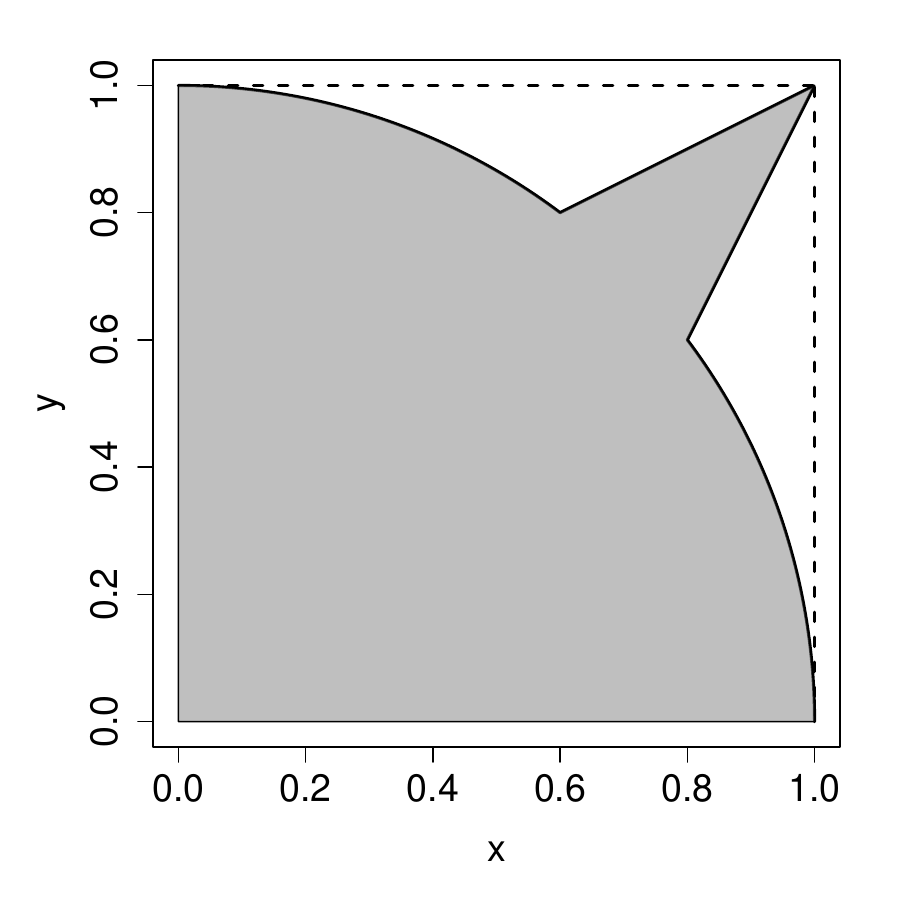}
\hspace{0.4cm}
\includegraphics[width=4.5cm]{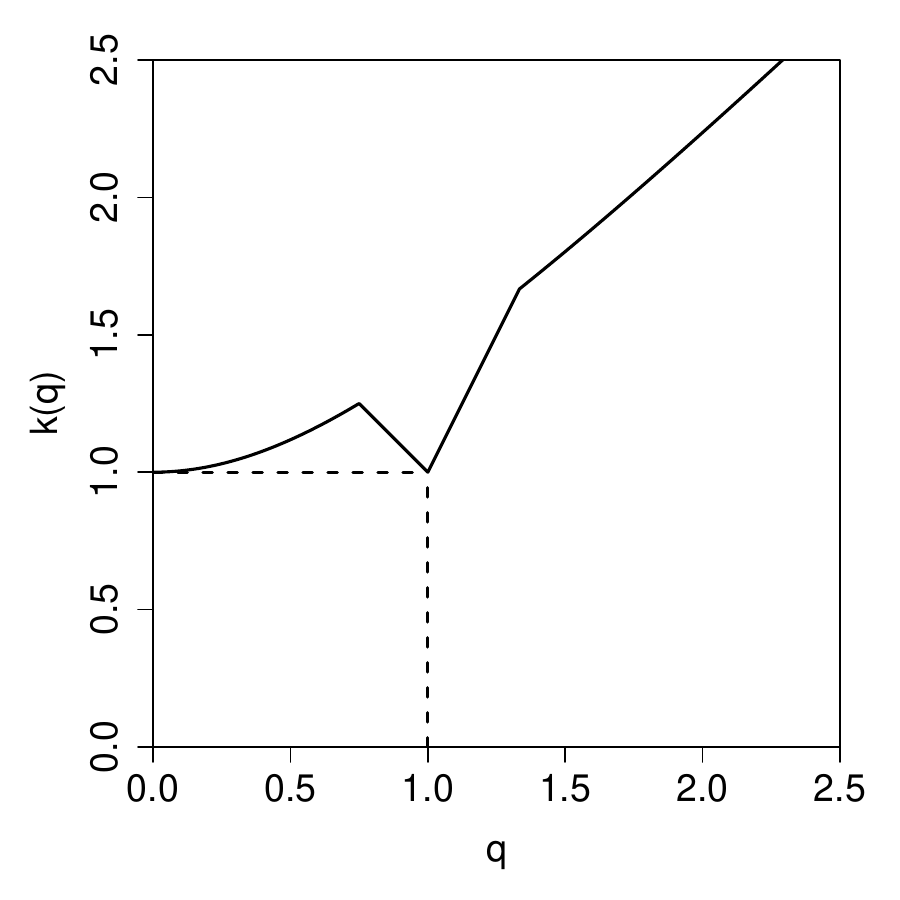}
\caption{\label{fig:invLogLog_min} (Left) the shape of the limit set for the mixture of inverted-logistic and logistic gauge functions via minimization with $(\theta,\gamma)=(0.5,0.5)$.
(Right) its corresponding function $k(q),$ $q\ge 0$.
}
\end{figure}

\section{Additive constructions}
\label{sec:additive}
In Section~\ref{sec:Amix} and \ref{sec:HWmodel}, we study two constructions for gauge functions that allow both AD and AI.
\subsection{Additively mixed gauge functions}
\label{sec:Amix}

A limit set $G$, characterized by a specific parametric gauge function $g_{\bX}(x,y)$, reflects its unique shape and corresponding dependence structure.
\cite{wadsworth2024statistical} suggested additively mixing parametric gauge functions to produce more flexible forms.
They noted the ability of such constructions to exhibit both pointy and blunt forms, so we explore this further, focusing on two component mixtures, defined as follows: 
\begin{equation*}
\label{eq:mixft}
    g_{\bX}^\star(x,y;\theta_1,\theta_2,p)
    =
    pg_{\bX}^{[1]}(x,y;\theta_1)+(1-p)g_{\bX}^{[2]}(x,y;\theta_2),\quad p \in (0,1), \,\, (x,y)\in [0,\infty)^2.
\end{equation*}
In general, the set $G^\star$ defined by $g_{\bX}^\star$ does not satisfy the required scaling condition $\sup(G^\star)=(1,1).$
Therefore, we need to determine the coordinatewise supremum $\sup(G^\star)=(c_1^\star,c_2^\star)$ and define the rescaled additively mixed model as $g_{\bX}(x,y):=g_{\bX}^\star(c_1^\star x,c_2^\star y).$
Expressing the gauge function in terms of angular components simplifies the task of finding the coordinatewise supremum due to the homogeneity property.
Specifically, plotting the points $(v_1,v_2):=(w,1-w)/g_{\bX}^\star(w,1-w)$ for $w \in [0,1]$ results in unit level sets of $g_{\bX}^\star$.
Using this observation, the coordinatewise supremum of the limit set $G^\star$ can be expressed as 
\[
c_1^\star=\max_{w\in[1/2,1]}\frac{w}{g_{\bX}^\star(w,1-w)},\quad c_2^\star=\max_{w\in[0,1/2]}\frac{1-w}{g_{\bX}^\star(w,1-w)}.
\]

While any gauge function in principle can be considered, we facilitate a transition between AD and AI by combining the logistic gauge function in~\eqref{eq:Logistic} with three other gauge functions that exhibit asymptotic independence.
The pointy limit shape of the logistic model is the fundamental feature shown in most parametric models exhibiting asymptotic dependence.
We denote $g_{\bX}^{[2]}(x,y;\gamma)$ as a bivariate logistic gauge function throughout.
Our goals are to derive closed-form expressions for rescaled additive mixture gauge functions and to establish connections between these functions and different extremal dependence measures, linking their parameters to the geometric criteria outlined in Section~\ref{sec:Theory}.

\subsubsection{Finding coordinatewise suprema analytically for specified classes}
\label{sec:findingcoordinatewise}
We aim to identify the coordinatewise supremum analytically for specified classes of additive mixtures.
We assume that both $g_{\bX}^{[1]}$ and $g_{\bX}^{[2]}$ are continuous and symmetric; it follows that $c_1^\star=c_2^\star=:c^\star$.
By expressing $g_{\bX}^\star$ in angular coordinates, we maximize the following objective function
\begin{equation}
\label{eq:s(w)}
    s(w):=\frac{1-w}{g_{\bX}^\star(w,1-w)}=\frac{1}{g_{\bX}^\star\rbr{\frac{w}{1-w},1}}
    =\frac{1}{\tilde{k}^\star\rbr{\frac{w}{1-w}}},\quad w\in[0,1/2],
\end{equation}
where $\tilde{k}^\star\rbr{\cdot}$ is defined as in~\eqref{eq:k(q)}, but for the function $g_{\bm{X}}^\star$. The domain restriction to $w\in[0,1/2]$ follows from symmetry. We observe from equation~\eqref{eq:s(w)} that maximizing $s$ over $[0,1/2]$ is equivalent to minimizing $\tilde{k}^\star$ over $[0,1]$. We will consider examples where both $g_{\bX}^{[1]}$ and $g_{\bX}^{[2]}$ are convex, from which it follows that $g_{\bX}^\star$ and $\tilde{k}^\star$ are convex. We denote the minimizer of $\tilde{k}^\star$ over the interval $[0,1]$ by $\kappa$; we thus have
\begin{equation*}
    c^\star = \frac{1}{\tilde{k}^\star(\kappa)} = \frac{1}{g_{\bX}^\star(\kappa,1)}, 
\end{equation*}
and hence
\begin{equation}
    g_{\bX}(x,y) = \frac{g_{\bX}^\star(x,y)}{g_{\bX}^\star(\kappa,1)}. \label{eq:govergstar}
\end{equation}
From equation~\eqref{eq:govergstar}, we observe $g_{\bX}(\kappa,1)=1$ and hence if there is a unique value of $\kappa$, then $\kappa$ also represents the slope in the conditional extremes model \citep{heffernan2004conditional,nolde2022linking}.

We now explore specific instances of additively mixed gauge functions, recalling that in each case $g_{\bX}^{[2]}$ is taken as a logistic gauge function. For detailed calculations of the derivative and the determination of the coordinatewise supremum, refer to Section 1 of the supplementary material.

\paragraph{Additive mixture of Gaussian and logistic gauge functions}
\label{sec:Ga&log}

We consider a bivariate Gaussian gauge function with a correlation parameter $\rho\in [0,1)$ for $g_{\bX}^{[1]}(x,y;\rho)$, defined as
\begin{equation*}
    g_{\bX}^{[1]}(x,y;\rho)=\left(x+y-2\rho(x y)^{1/2}\right)/(1-\rho^2),\quad (x,y)\in [0,\infty)^2.
\end{equation*}
The unit level sets of the rescaled mixture gauge function, $g_{\bX}(x,y)=1$, are shown in Figure~\ref{fig:Amix}, displaying two representative classes of extremal dependence: AI in the first panel of the top row and AD in the second panel of the top row.
In this case we find that
\begin{align*}
    \kappa=
    \begin{cases}
   \rho^2\left(1-\frac{1-\rho^2}{p}(1-p)\left(\frac{1}{\gamma}-1\right)\right)^{-2} \quad\quad\text{if}\,\,\rho\in(0,1)\,\, \text{and}\,\,(1+\rho)\rbr{\frac{1-p}{p}}\rbr{\frac{1-\gamma}{\gamma}}<1\\
    0 ,\quad\quad\text{if}\,\,\rho=0\,\,\text{and}\,\,\gamma\ge 1-p\\
    1 , \quad\quad\text{if}\,\,\rho=0\,\,\text{and}\,\,\gamma<1-p\,\,\text{or}\,\,\rho\in(0,1)\,\,\text{and}\,\,(1-\rho)\le(1-\rho^2)\rbr{\frac{1-p}{p}}\rbr{\frac{1-\gamma}{\gamma}}<1.
    \end{cases}
\end{align*}

\begin{figure}[ht!]
\centering
\includegraphics[width=4cm]{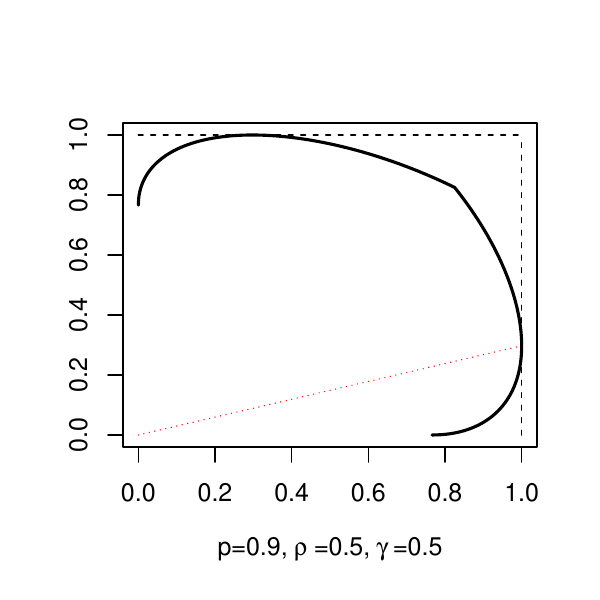}
\includegraphics[width=4cm]{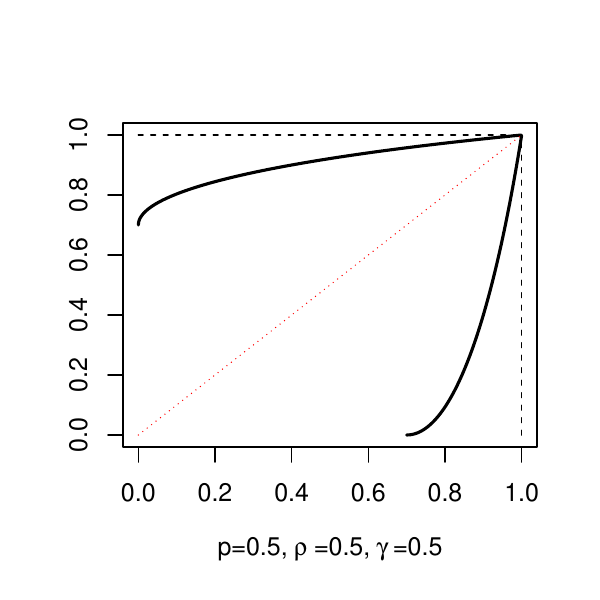}
\includegraphics[width=4cm]{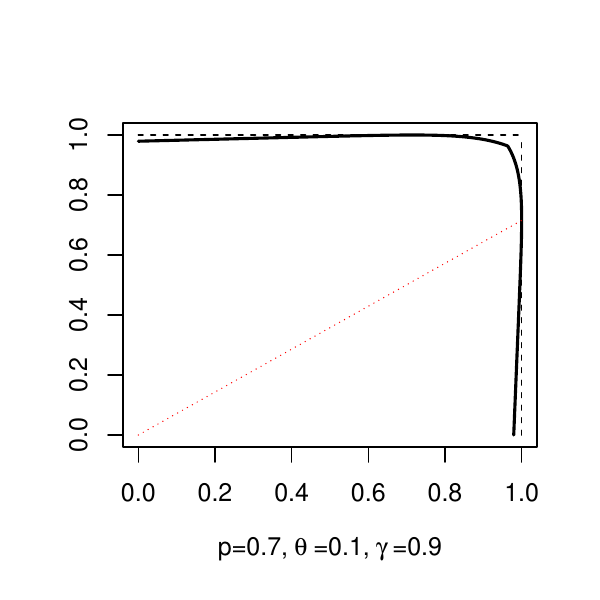}
\\
\vspace{-1cm}
\includegraphics[width=4cm]{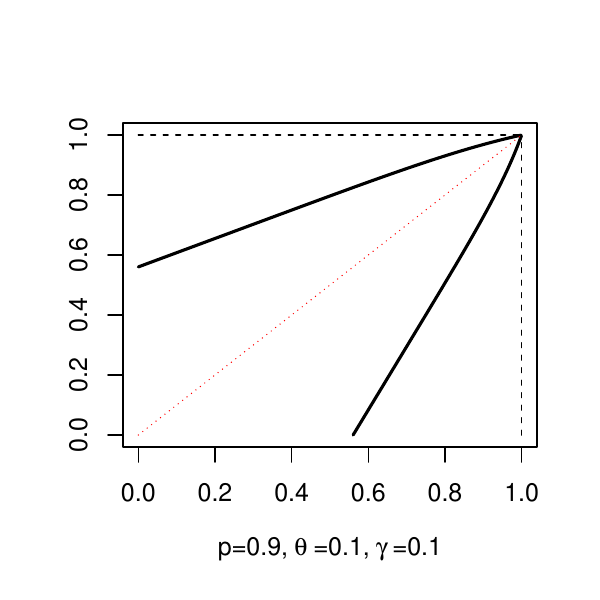}
\includegraphics[width=4cm]{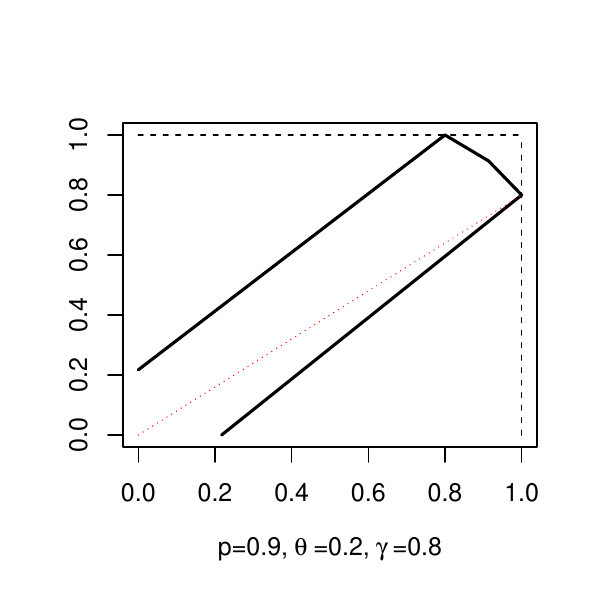}
\includegraphics[width=4cm]{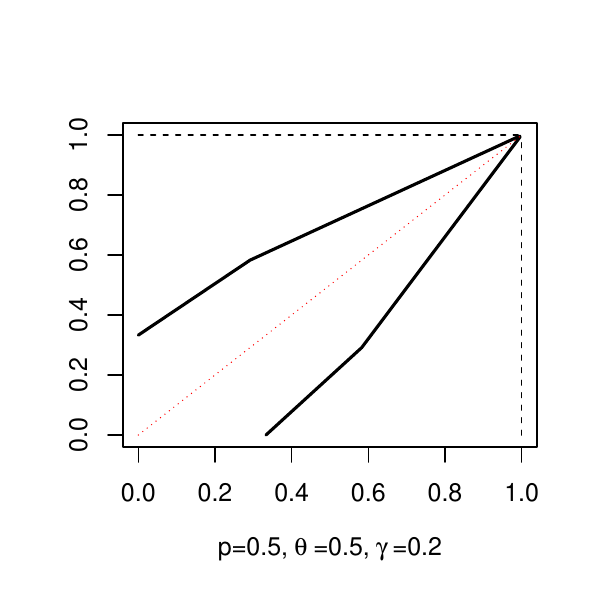}
\caption{\label{fig:Amix} Unit level sets of additively mixed gauge functions for selected parameter values.
Top row: Gaussian and logistic mixture with $(p,\rho,\gamma)=(0.9,0.5,0.5)$ (left) and $(0.5,0.5,0.5)$ (middle); inverted-logistic and logistic mixture with $(p,\theta,\gamma)=(0.7,0.1,0.9)$ (right).
Bottom row: inverted-logistic and logistic mixture with $(p,\theta,\gamma)=(0.9,0.1,0.1)$ (left); rectangle and logistic mixture with $(p,\theta,\gamma)=(0.9,0.2,0.8)$ (middle) and $(0.5,0.5,0.2)$ (right).
The red dotted line indicates the minimizer $\kappa$ or the slope in the conditional extremes model.
}
\end{figure}

\paragraph{Additive mixture of Inverted-logistic and logistic gauge functions}
\label{sec:Invlog&log}

Consider the bivariate inverted logistic gauge function $g_{\bX}^{[1]}(x,y)$ with $\theta\in (0,1]$, defined as
\begin{equation*}
    g_{\bX}^{[1]}(x,y)=\left(x^{1/\theta}+y^{1/\theta}\right)^{\theta},\quad (x,y)\in [0,\infty)^2.
\end{equation*}
The unit level sets of the rescaled mixture gauge function are shown in Figure~\ref{fig:Amix}, illustrating two representative classes of extremal dependence: AI in the third panel of the top row and AD in the first panel of the bottom row. In this case we have
\begin{align*}
    \kappa=
    \begin{cases}
   \left[\left\{\left(\frac{1-p}{p}\right)\left(\frac{1}{\gamma}-1\right)\right\}^{\frac{1}{\theta-1}}-1\right]^{-\theta}&\quad\text{if}\,\,\theta\in(0,1)\,\,\text{and}\,\,\rbr{\frac{1-\gamma}{\gamma}}\rbr{\frac{1-p}{p}}<2^{\theta-1}\\
    0 ,&\quad\text{if}\,\,\theta=1\,\,\text{and}\,\,\gamma\ge 1-p\\
    1 , &\quad\text{if}\,\,\theta=1\,\,\text{and}\,\,\gamma<1-p\,\,\text{or}\,\,\theta\in(0,1)\,\,\text{and}\,\,\rbr{\frac{1-\gamma}{\gamma}}\rbr{\frac{1-p}{p}}\ge 2^{\theta-1}.
    \end{cases}
\end{align*}

\paragraph{Additive mixture of rectangular and logistic gauge functions}
\label{sec:Rect&log}

A bivariate rectangle gauge function with a parameter $\theta \in (0,1]$ is given by
\begin{equation*}
    g_{\bX}^{[1]}(x,y)=\max\left(\frac{1}{\theta}(x-y),\frac{1}{\theta}(y-x),\frac{1}{2-\theta}(x+y)\right),\quad (x,y)\in [0,\infty)^2.
\end{equation*}
Figure~\ref{fig:Amix} shows the unit level sets of the rescaled additively mixed gauge function.
The bottom row displays two distinct classes of extremal dependence: AI in the second panel and AD in the third panel. We have for this case
\begin{align*}
    \kappa=
    \begin{cases}
    1-\theta, &\quad\text{if}\,\,\theta\in(0,1)\,\,\text{and}\,\,\frac{p}{2-\theta}+(1-p)(1-1/\gamma)>0\\
    0, &\quad\text{if}\,\,\theta=1\,\,\text{and}\,\,\gamma\ge 1-p\\
    1,
    &\quad\text{if}\,\,\theta=1\,\,\text{and}\,\,\gamma<1-p\,\,\text{or}\,\,\theta\in(0,1)\,\,\text{and}\,\,\frac{p}{2-\theta}+(1-p)(1-1/\gamma)\le 0.
    \end{cases} 
\end{align*}


\subsection{Additively mixing independent exponential random vectors}
\label{sec:HWmodel}

We now consider an alternative way to develop parametric gauge functions via an additive structure, but this time there is an additive stochastic representation, rather than additive gauge functions.
\cite{nolde2022linking} consider the spatial model proposed by \cite{huser2019modeling} as a case study, exploring its properties in terms of the limit set and the associated gauge function.
Following a similar approach, we consider the bivariate model constructed as:
\begin{align}
\label{eq:HW_amix}
    \bY=
    \begin{cases}
        \gamma S\bone + \bm{V},\quad &\gamma \in (0,1]\\
        S\bone + \gamma^{-1}\bm{V},\quad &\gamma >1,
    \end{cases}
\end{align}
where $S\sim \text{Exp}(1)$ is a standard exponential random variable, independent of the random vector $\bm{V}=(V_1,V_2)^\top$, which also has standard exponential margins and exhibits asymptotic independence.
This additive mixture can be viewed as a mixture of perfect dependence and asymptotic independence.
The quantity $\gamma$ controls the relative contributions of $S$ and $\bm{V}$, determining the strength of the extremal dependence.
\cite{huser2019modeling} showed that the random vector $\bY$ is asymptotically independent for $\gamma \in (0,1]$ and asymptotically dependent for $\gamma >1$.
For $\gamma >1$, we consider the rescaled version of the initial random vector $\bY$, ensuring that the coordinatewise supremum of the limit set for $\bY$ remains $\sup(G)=(1,1)$.
Building on the work of \cite{nolde2022linking}, who derived the gauge function of $\bY$ for $\gamma\le 1$, we extend this construction for any $\gamma >0$ through rescaling
\begin{align}
\label{eq:gauge_HW}
    g_{\bY}(x,y)=
    \begin{cases} \min_{s\in[0,\min(x,y)/\gamma]}\cbr{s+g_{\bm V}(x-\gamma s,y-\gamma s)},\quad &\gamma\in(0,1]\\ \min_{s\in[0,\min(x,y)]}\cbr{s+g_{\bm V}(\gamma x-\gamma s, \gamma y-\gamma s)},\quad &\gamma > 1,
    \end{cases}
\end{align}
where $g_{\bm V}$ is the gauge function for $\bm V$.
\cite{nolde2022linking} derived the residual tail dependence coefficient, $\eta\in(0,1]$ \citep{ledford1997modelling}, in terms of $g_{\bY}$. 
Denoting this by $\eta^{Y}$, they showed that
\begin{align}
\label{eq:eta}
    \eta^{Y}
    =
\left[\min_{(x,y):\min(x,y)=1}g_{\bY}(x,y)\right]^{-1}
    =
\sbr{\min_{s\in[0,1/\gamma]}\cbr{s+g_{\bV}(x^*,y^*)(1-\gamma s)}}^{-1}
    =
    \begin{cases}
        \eta^{V},\quad \gamma \le \frac{1}{g_{\bm V}(x^*,y^*)},\\
        \gamma,\quad \frac{1}{g_{\bm V}(x^*,y^*)} < \gamma \le 1,
    \end{cases}     
\end{align}
where $(x^*,y^*)=\arg\min_{(x,y):\min(x,y)=1}g_{\bm V}(x,y)$ and $\frac{1}{g_{\bm V}(x^*,y^*)}=\eta^{V}$.
When $\gamma > \eta^V$, the gauge function $g_{\bV}$ begins to transform to produce $g_{\bY}$ around the diagonal line.
Although $\bY$ in \eqref{eq:HW_amix} only has asymptotically exponential margins, its limit set is valid for exponential margins.
We henceforth use the notation $g_{\bX}(x,y)$ for the gauge defined in~\eqref{eq:gauge_HW}.


\subsubsection{Analytic form of the gauge function}

We derive an analytic form of the gauge function $g_{\bX}(x,y)$ in~\eqref{eq:gauge_HW}, which is formulated as a minimization problem.
The objective function to be minimized is defined as
\begin{equation}
\label{eq:objectiveft}
 h(s):=s+g_{\bV}(x-\gamma s, y-\gamma s),\quad s\in[0,\min(x,y)/\gamma].
\end{equation}
We focus on the case where $\gamma\in(0,1]$ in equation~\eqref{eq:gauge_HW}, since the second line is identical to the first with $(x,y)$ replaced by $(\gamma x,\gamma y)$. 
We will consider examples where $g_{\bV}$ is convex, symmetric, and (piecewise) differentiable.
We denote the minimizer of $h(s)$ over $s\in[0,\min(x,y)/\gamma]$ by $\hat{s}$.
Note that the location of the minimizer $\hat{s}$ depends on both the point $(x,y)$ and $\gamma.$ 
In the Appendix, we show that $\gamma \leq 1/g_{\bV}(1,1)$ implies that equation~\eqref{eq:objectiveft} is minimized at $\hat{s}=0$. This, together with the behavior for $\gamma>1/g_{\bV}(1,1)$, is formalized in Proposition~\ref{prop:tangentline}, proven in Section~\ref{Appendix:tangentplane} of the Appendix. Lemma~\ref{lemma:point(x_0,y_0)} introduces a concept used in this Proposition.

\begin{lemma}
\label{lemma:point(x_0,y_0)}
    Assume that $g_{\bV}$ is convex, symmetric, and that $\frac{1}{g_{\bV}(1,1)}< \gamma \le 1$. 
    If $g_{\bV}$ is (piecewise) differentiable, there exists a unique segment of the level curve with rightmost endpoint $(x_0,y_0)$, $y>x$, such that the tangent line to the unit level curve $g_{\bV}(x,y)=1$ along the segment, and hence at the point $(x_0,y_0)$, passes through the point $(\gamma,\gamma).$ The segment may be a single point, in which case it is $(x_0,y_0)$.
    
    For $y>x$, the point $(x_0,y_0)$ is a solution to the equation
    \begin{align}
        (y-\gamma)=-\frac{g_{V,1}(x,y)}{g_{V,2}(x,y)}(x-\gamma), \label{eq:tangentgamma}
    \end{align}
    where $y$ is defined as a function of $x$ via the unit level curve $g_{\bV}(x,y)=1$. If there are multiple solutions, then $(x_0,y_0)$ is the solution with the largest value of $x$.
\end{lemma}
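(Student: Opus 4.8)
The plan is to treat the unit level set $G_{\bV}=\{(x,y):g_{\bV}(x,y)\le 1\}$ as a compact convex body and $(\gamma,\gamma)$ as a point exterior to it, and then to translate the two-sided tangency picture into the stated gradient equation. First I would record that $G_{\bV}$ is compact (closed by continuity of $g_{\bV}$, and bounded because $\bV$ has exponential margins so that $g_{\bV}(x,y)\ge\max(x,y)$ forces $G_{\bV}\subseteq[0,1]^2$) and convex (since $g_{\bV}$ is convex and positively homogeneous, its sublevel set is convex). Using homogeneity, $g_{\bV}(\gamma,\gamma)=\gamma\,g_{\bV}(1,1)$, and the hypothesis $\gamma>1/g_{\bV}(1,1)$ gives $g_{\bV}(\gamma,\gamma)>1$, so $(\gamma,\gamma)$ lies strictly outside $G_{\bV}$. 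This configuration, an exterior point together with a symmetric convex body, is what the whole argument rests on.

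Next I would invoke the standard convexity fact that from a point exterior to a compact convex body with nonempty interior there pass exactly two supporting lines, each meeting the boundary in a face that is either a single point or a line segment. Because $g_{\bV}$ is symmetric, $G_{\bV}$ is invariant under reflection across the diagonal $\{y=x\}$, so the two supporting lines through $(\gamma,\gamma)$ are mirror images of one another: one has its contact face in the region $\{y>x\}$ and the other in $\{y<x\}$. I would then rule out contact on the diagonal itself, since the unique boundary point of $G_{\bV}$ on the diagonal is $(t_0,t_0)$ with $t_0=1/g_{\bV}(1,1)<\gamma$, whose supporting line (with normal $(1,1)$ by symmetry) is $\{X+Y=2t_0\}$ and does not contain $(\gamma,\gamma)$. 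Hence the contact face in $\{y>x\}$ is a genuine segment, possibly a single point, lying strictly off the diagonal; its rightmost point, namely the one of largest $x$, is well defined and is taken to be $(x_0,y_0)$. Uniqueness of the segment follows from uniqueness of the supporting line on that side.

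Finally I would convert the tangency condition into the displayed equation. At a point $(x_0,y_0)$ where $g_{\bV}$ is differentiable, the supporting line of the level curve has outward normal $(g_{V,1}(x_0,y_0),g_{V,2}(x_0,y_0))$, so the line is $\{(X,Y):g_{V,1}(x_0,y_0)(X-x_0)+g_{V,2}(x_0,y_0)(Y-y_0)=0\}$. Imposing that it contain $(\gamma,\gamma)$ and rearranging, dividing by $g_{V,2}(x_0,y_0)$ which is nonzero at this contact point, yields $(y_0-\gamma)=-\frac{g_{V,1}(x_0,y_0)}{g_{V,2}(x_0,y_0)}(x_0-\gamma)$, which is exactly equation~\eqref{eq:tangentgamma}. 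For the multiplicity clause I would observe that along a flat face the gradient direction is constant and the supporting line is a single line through $(\gamma,\gamma)$, so every point of that face solves \eqref{eq:tangentgamma}; since additional solutions coming from other flat portions may also satisfy the algebraic relation, selecting the solution of largest $x$ in $\{y>x\}$ recovers precisely the rightmost endpoint identified above.

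The step I expect to be the main obstacle is the non-smooth case: since $g_{\bV}$ is only piecewise differentiable, the contact set may be a segment and the gradient may fail to exist at corners, so the clean \emph{gradient-equals-normal} statement must be replaced by supporting-hyperplane arguments and then matched to the gradient on the relative interior of the face where it does exist. Carefully establishing that the chosen face lies in $\{y>x\}$, that its rightmost endpoint is the correct representative, and that $g_{V,2}\neq 0$ there (so the ratio form is legitimate), is where the real work lies; the existence and symmetry of the two tangent lines are comparatively routine.
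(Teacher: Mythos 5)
Your route is genuinely different from the paper's. The paper never argues via supporting lines from an exterior point: it uses Euler's homogeneous function theorem to write the tangent line at $(x^\star,y^\star)$ as $xg_{V,1}(x^\star,y^\star)+yg_{V,2}(x^\star,y^\star)=1$, so that passage through $(\gamma,\gamma)$ becomes the one-dimensional equation $\psi(z):=\gamma\phi(z)-1=0$ in $z=x^\star/y^\star$, where $\phi(z)=g_{V,1}(z,1)+g_{V,2}(z,1)$ is homogeneous of order zero; a separate lemma (Lemma~\ref{lem:phi}, again Euler plus convexity) shows $\phi$ is non-decreasing on $[0,1]$, and existence, uniqueness of the rightmost solution, and the interval structure of the solution set then follow from the Intermediate Value Theorem and monotonicity, with subdifferentials covering the piecewise-differentiable case. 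Your convex-geometry argument (exterior point since $g_{\bV}(\gamma,\gamma)=\gamma g_{\bV}(1,1)>1$, exactly two supporting lines, symmetry swapping them across the diagonal) is a legitimate alternative and gives a cleaner picture of existence and uniqueness; the paper's parametrization buys the monotone structure in $z$, which is what makes the ``largest $x$'' clause and the subsequent computations in Proposition~\ref{prop:tangentline} mechanical.

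There is, however, one genuine gap, and it is not the non-smoothness issue you flag at the end. You rule out contact on the diagonal, but you never rule out that the contact face of your supporting line is the endpoint of the level curve on an axis. The boundary of $G_{\bV}$ consists of the level curve together with two axis segments; at the point $(0,1/g_{\bV}(0,1))$ where they meet, the body has a corner even when $g_{\bV}$ is differentiable there, so the supporting lines of the body at that point form a fan, and the one passing through $(\gamma,\gamma)$ need not have normal proportional to $\rbr{g_{V,1},g_{V,2}}$. Hence your final step---identifying the supporting line's normal with the gradient and dividing by $g_{V,2}$---fails exactly there. This is not a vacuous worry: for $g_{\bV}(x,y)=x+y$, which satisfies every stated hypothesis with $1/g_{\bV}(1,1)=1/2<\gamma\le 1$, the two supporting lines from $(\gamma,\gamma)$ touch $G_{\bV}$ only at $(0,1)$ and $(1,0)$, and equation~\eqref{eq:tangentgamma} has no solution at all, since the only tangent line to the level curve is $x+y=1$, which misses $(\gamma,\gamma)$; your outline would nonetheless output $(x_0,y_0)=(0,1)$, i.e., it proves too much. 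Excluding axis contact is precisely where the paper's proof spends its effort: the case analysis establishing $\phi(0)<1/\gamma$ (equivalently $\psi(0)<0$), which uses convexity, the scaling $\sup(G_{\bV})=(1,1)$, and a condition on the slope of the level curve at $(0,1)$. Your argument must reproduce that step---in geometric language, show that the contact face lies where the body's boundary locally coincides with the level curve---and it cannot be extracted from supporting-line existence alone.
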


\begin{proposition}
\label{prop:tangentline}
Assume that $g_{\bV}$ is convex, symmetric, and twice (piecewise) differentiable.
For $\gamma \leq \frac{1}{g_{\bV}(1,1)}$, $g_{\bX}(x,y)=g_{\bV}(x,y)$.
For $\frac{1}{g_{\bV}(1,1)}< \gamma \le 1$, the explicit minimizer of equation~\eqref{eq:objectiveft} is given by
\begin{equation*}
\hat{s}=
\begin{cases}
    0,\quad&\text{if}\,\,\,\, \frac{x}{x+y}\le \frac{x_0}{x_0+y_0}\,\,\,\,\text{for}\,\,\,\,y\ge x\,\,\,\, \text{and}\,\,\,\,\frac{y}{x+y}\ge \frac{y_0}{x_0+y_0}\,\,\,\,\text{for}\,\,\,\,y<x\\
    \frac{x_0 y-y_0 x}{\gamma(x_0-y_0)},\quad&\text{if}\,\,\,\, \frac{x}{x+y}> \frac{x_0}{x_0+y_0}\,\,\,\,\text{for}\,\,\,\,y \ge x\,\,\,\,\text{and}\,\,\,\,\frac{y}{x+y}< \frac{y_0}{x_0+y_0}\,\,\,\,\text{for}\,\,\,\,y<x,
\end{cases}
\end{equation*}
and the analytic form of the gauge function $g_{\bX}$ in~\eqref{eq:gauge_HW} is
\begin{align*}
    g_{\bX}(x,y)=
    \begin{cases}
        g_{\bV}(x,y),\quad&\text{if}\,\,\,\, \frac{x}{x+y}\le \frac{x_0}{x_0+y_0}\,\,\,\,\text{for}\,\,\,\,y\ge x\,\,\,\, \text{and}\,\,\,\,\frac{y}{x+y}\ge \frac{y_0}{x_0+y_0}\,\,\,\,\text{for}\,\,\,\,y<x\\
        \frac{y-\rbr{\frac{\gamma-y_0}{\gamma-x_0}}x}{\gamma\rbr{1-\rbr{\frac{\gamma-y_0}{\gamma-x_0}}}},\quad&\text{if}\,\,\,\, \frac{x}{x+y}> \frac{x_0}{x_0+y_0}\,\,\,\,\text{for}\,\,\,\,y \ge x\,\,\,\,\text{and}\,\,\,\,\frac{y}{x+y}< \frac{y_0}{x_0+y_0}\,\,\,\,\text{for}\,\,\,\,y<x.
    \end{cases}
\end{align*}
The second form of $g_{\bX}$ corresponds to the gauge function derived from the tangent line to the unit level curve $g_{\bV}(x,y)=1$ at the point $(x_0,y_0)$ for $y>x$, passing through the point $(\gamma,\gamma)$.
The point $(x_0,y_0)$ is uniquely determined by the tangent line equation, shown in Lemma~\ref{lemma:point(x_0,y_0)}.
\end{proposition}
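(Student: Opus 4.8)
The plan is to exploit the convexity of the objective $h$ in~\eqref{eq:objectiveft} together with the homogeneity of $g_{\bV}$. Since $s \mapsto (x-\gamma s,\, y-\gamma s)$ is affine and $g_{\bV}$ is convex, $h$ is convex in $s$, so its minimizer over $[0,\min(x,y)/\gamma]$ is governed entirely by the sign of $h'$. Writing $\phi := g_{V,1}+g_{V,2}$, which is homogeneous of degree $0$ because $g_{\bV}$ is homogeneous of degree $1$, differentiation of~\eqref{eq:objectiveft} gives $h'(s) = 1 - \gamma\,\phi(x-\gamma s,\, y-\gamma s)$. Convexity of $h$ then forces $h'$ to be nondecreasing, equivalently $s\mapsto \phi(x-\gamma s,\,y-\gamma s)$ nonincreasing; since this map sweeps every angle $w\in(0,1/2)$ as $s$ varies along the $45^{\circ}$ translation direction, $\phi$ is a monotone function of the angle, increasing toward the diagonal. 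This monotonicity is the only structural fact I need, and it is a consequence of convexity alone.

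For the first regime I would first record the sharp global bound $\phi(x,y)\le g_{\bV}(1,1)$ on $\reals_+^2$. This follows in one line: the subgradient inequality $g_{\bV}(1,1)\ge g_{\bV}(x,y) + g_{V,1}(x,y)(1-x) + g_{V,2}(x,y)(1-y)$ combined with Euler's identity $x\,g_{V,1}(x,y)+y\,g_{V,2}(x,y)=g_{\bV}(x,y)$ collapses the right-hand side exactly to $\phi(x,y)$. Hence when $\gamma\le 1/g_{\bV}(1,1)$ we obtain $h'(0)=1-\gamma\,\phi(x,y)\ge 1-\gamma\,g_{\bV}(1,1)\ge 0$, and since $h'$ is nondecreasing, $h$ is nondecreasing on $[0,\min(x,y)/\gamma]$. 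Thus $\hat s=0$ and $g_{\bX}=g_{\bV}$, as claimed.

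For the regime $1/g_{\bV}(1,1) < \gamma \le 1$ the crux is to identify the interior critical ray with the tangent-line point $(x_0,y_0)$ of Lemma~\ref{lemma:point(x_0,y_0)}. I would establish the equivalence: the tangent to $g_{\bV}=1$ at a boundary point passes through $(\gamma,\gamma)$ if and only if $\phi=1/\gamma$ there. Indeed, by Euler the tangent at $(x_0,y_0)$ is $g_{V,1}(x_0,y_0)\,x + g_{V,2}(x_0,y_0)\,y = 1$, which contains $(\gamma,\gamma)$ precisely when $\gamma\,\phi(x_0,y_0)=1$. Because $\phi$ is constant along rays and strictly ordered in the angle by the monotonicity above, the stationarity equation $h'(\hat s)=0$, i.e.\ $\phi(x-\gamma\hat s,\,y-\gamma\hat s)=1/\gamma$, forces $(x-\gamma\hat s,\,y-\gamma\hat s)$ onto the ray through $(x_0,y_0)$ when $y\ge x$ (and onto the reflected ray through $(y_0,x_0)$ when $y<x$). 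Solving the collinearity condition $(x-\gamma\hat s)/(y-\gamma\hat s)=x_0/y_0$ yields $\hat s = (x_0 y - y_0 x)/(\gamma(x_0-y_0))$, the stated expression, while the boundary-versus-interior split follows from $h'(0)\ge 0 \iff \phi(x,y)\le 1/\gamma$, which by monotonicity of $\phi$ in the angle is exactly $x/(x+y)\le x_0/(x_0+y_0)$ (and symmetrically for $y<x$).

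Finally, to reach the closed form I would substitute $\hat s$ back. As $(x-\gamma\hat s,\,y-\gamma\hat s)=t(x_0,y_0)$ with $t=(x-\gamma\hat s)/x_0$, homogeneity gives $g_{\bV}(x-\gamma\hat s,\,y-\gamma\hat s)=t\,g_{\bV}(x_0,y_0)=t$, so $g_{\bX}=\hat s + t$, and simplifying collapses $g_{\bX}$ to the linear function whose unit level set is the straight line through $(x_0,y_0)$ and $(\gamma,\gamma)$; one checks directly that this equals $(y-mx)/\{\gamma(1-m)\}$ with $m=(\gamma-y_0)/(\gamma-x_0)$, and that it takes the value $1$ at both $(x_0,y_0)$ and $(\gamma,\gamma)$, confirming the stated form. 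The main obstacles I anticipate are, first, the bookkeeping that pins the stationary ray to the tangent point $(x_0,y_0)$ via the Euler-plus-subgradient computation, and second, the feasibility and degeneracy checks: verifying $\hat s\in(0,\min(x,y)/\gamma)$ so the interior minimizer is admissible, and handling the non-strictly-convex case in which the level curve contains a flat segment, where the rightmost-endpoint convention of Lemma~\ref{lemma:point(x_0,y_0)} selects the correct $(x_0,y_0)$ and the minimum is attained over an interval of $s$ with a common value.
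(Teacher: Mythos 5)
Your proposal is correct, and its skeleton is the same as the paper's: convexity of $h$ (the paper's Lemma~\ref{lem:hconvex}), the first-order condition $h'(s)=1-\gamma\phi(\cdot)$ with $\phi=g_{V,1}+g_{V,2}$ homogeneous of degree zero, the identification via Euler's identity of ``tangent through $(\gamma,\gamma)$'' with $\gamma\phi(x_0,y_0)=1$, collinearity with the ray through $(x_0,y_0)$ to solve for $\hat{s}$, and homogeneity of $g_{\bV}$ to collapse $g_{\bX}$ to the tangent-line gauge. Where you genuinely differ is in how the two supporting facts are established. The paper's Lemma~\ref{lem:phi} obtains monotonicity of $\phi$ by differentiating once more: Euler's theorem applied to the degree-zero gradient gives $\phi'(z)=g_{V,11}(z,1)(1-z)$, which is where the assumed twice differentiability is used; and its Lemma~\ref{lem:mins0} then deduces $\hat{s}=0$ for $\gamma\le 1/g_{\bV}(1,1)$ through a three-case analysis. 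You instead read the monotonicity of $\phi$ off the convexity of $h$ itself (nondecreasing $h'$ as $s$ sweeps the angles), and you get the bound $\phi(x,y)\le g_{\bV}(1,1)$ in one line from the subgradient inequality combined with Euler's identity. Your route thus needs only first-order convexity tools, so it would go through for convex gauges that are merely (piecewise) once differentiable, whereas the paper's computation buys the explicit sign formula for $\phi'$. Two small cautions: your phrase ``strictly ordered in the angle'' overstates what you have proved---$\phi$ is only nondecreasing---but you correctly flag and resolve the flat-segment degeneracy at the end via the rightmost-endpoint convention of Lemma~\ref{lemma:point(x_0,y_0)}, exactly as the paper does; and the feasibility check $\hat{s}\in(0,\min(x,y)/\gamma)$ that you list as an obstacle follows quickly from $\phi(x/y)>1/\gamma=\phi(x_0/y_0)$ and monotonicity, which force $x_0/y_0<x/y$ so that the stationary ratio is attained at an interior $s$.
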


Similarly to the additively mixed gauge functions considered in Section~\ref{sec:Amix}, we analyze specific models for the gauge function $g_{\bV}$ and derive their closed-form expressions for $\frac{1}{g_{\bX}(1,1)}< \gamma \le 1$.
For detailed calculations of finding the minimizer $\hat{s}$ and the unique point $(x_0,y_0)$, refer to Section 2 of the supplementary material.


\paragraph{Gaussian gauge function for $g_{\bV}$}
The gauge function $g_{\bX}$ is 
\begin{equation*}
\label{eq:HWGauss}
    g_{\bX}(x,y)=\min_{s\in[0,\min\{x,y\}/\gamma]}s+\rbr{x+y-2\gamma s -2\rho\sqrt{(x-\gamma s)(y-\gamma s)}}/\rbr{1-\rho^2},\quad \gamma \le 1,\,\, \rho\in[0,1).
\end{equation*}
For $y> x$, we obtain the unique point $(x_0,y_0)$ from the tangent equation~\eqref{eq:tangentgamma}, where $y$ is obtained from $g_{\bV}(x,y)=1$, that is,
\begin{equation}
\label{eq:y_Ga}
    y = 2\rho\sqrt{(1-\rho^2)(x-x^2)}-x(1-2\rho^2)+(1-\rho^2).
\end{equation}
After simplification, the tangent equation reduces to a quadratic form
\begin{equation*}
    (2\gamma-1)^2x^2+(2\gamma-1)(1-\rho^2-2\gamma)x+\rho^2\gamma^2=0.
\end{equation*}
The unique solution is given by
\begin{equation*}
    x_0 = \min\cbr{\frac{-b - \sqrt{b^2-4ac}}{2a},\frac{-b + \sqrt{b^2-4ac}}{2a}},
\end{equation*}
where $a:=(2\gamma-1)^2$, $b:=(2\gamma-1)(1-\rho^2-2\gamma)$, and $c:=\rho^2\gamma^2$, with $y_0$ obtained by evaluating \eqref{eq:y_Ga} at $x_0.$
In this example, differentiating $h(s)$ yields a closed-form minimizer $\hat{s}$ solving the quadratic equation.
The non-zero solution to $h'(s)=0$ has the following explicit linear form:
\begin{equation}
\label{eq:shat_Gauss}
    \hat{s}=\frac{K(x+y)+\sqrt{-K\hatc}|x-y|}{2\gamma K},
\end{equation}
where $\hat{c}:=(1-\rho^2-2\gamma)^2 \ge0$ and $K=4\rho^2\gamma^2-\hatc\le 0$ for $\frac{1+\rho}{2}<\gamma\le1$.
Given the unique point $(x_0,y_0)$ obtained from the tangent line, the minimizer $\hat{s}$ in~\eqref{eq:shat_Gauss} reduces to $\frac{x_0 y-y_0 x}{\gamma(x_0-y_0)}$ as in Proposition~\ref{prop:tangentline}.
For the details, refer to Section 2 of the supplement.

When $\gamma=1$, the gauge function $g_{\bX}$ intersects the $\max\{x,y\}=1$ boundary, as shown in Figure~\ref{fig:HWGauss}, implying AI by Proposition~\ref{prop:AI}.
Specifically, there exist sets $D=\Tilde{D}=[\rho^2,1]\subseteq[0,1]$ such that $k(q)=\tilde{k}(q)=\1_D(q)$.
For $\gamma=1$ and for $q\ge 1$, the minimizer is $\hat{s}=1$, yielding $k(q)=1+(q-1)/(1-\rho^2)$ for $\rho\in(0,1)$.
Equivalently, $k(q)=mq+\rbr{1-m}$ for $q\ge 1$, where $m=\frac{1}{1-\rho^2}>1$ for $\rho\in(0,1)$, as illustrated in the middle panel of Figure~\ref{fig:HWGauss}.

\begin{figure}[ht!]
\centering
\includegraphics[width=4.5cm]{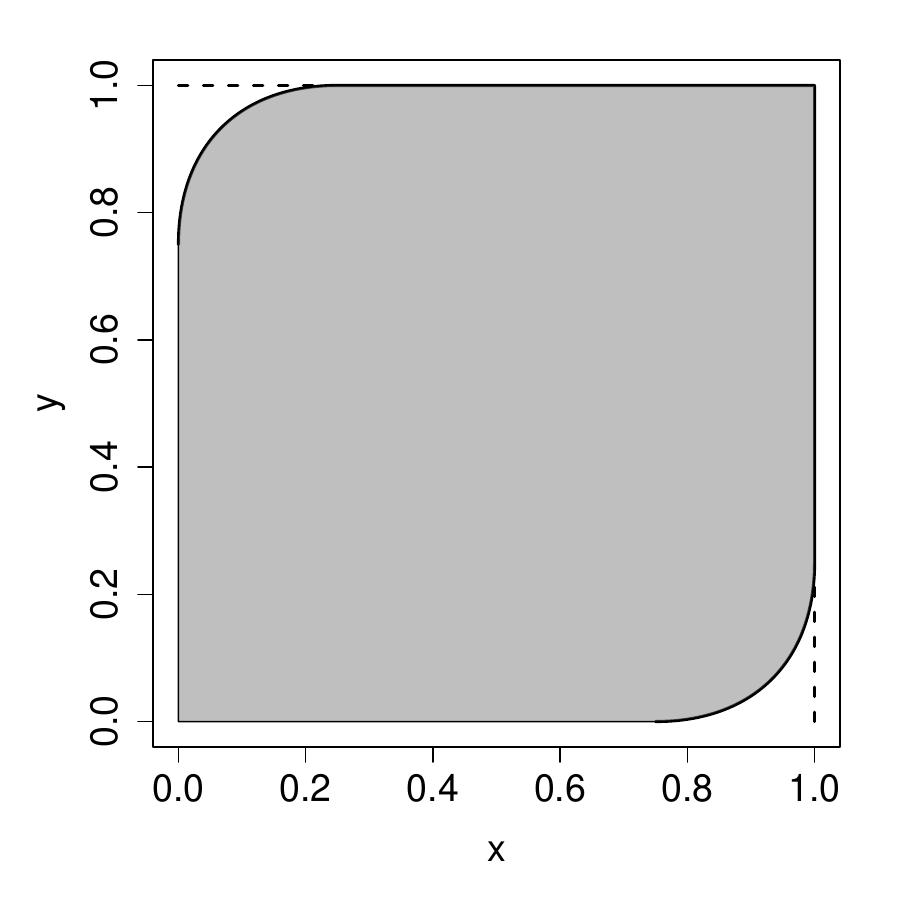}
\hspace{0.4cm}
\includegraphics[width=4.5cm]{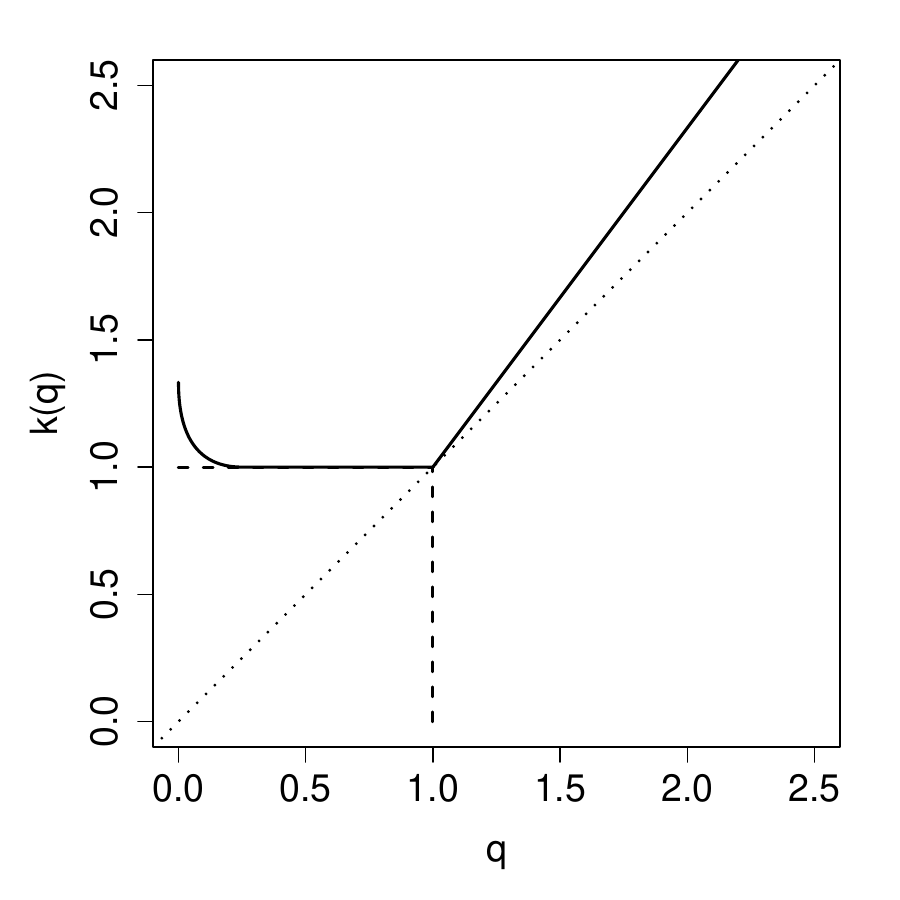}
\hspace{0.4cm}
\includegraphics[width=4.5cm]{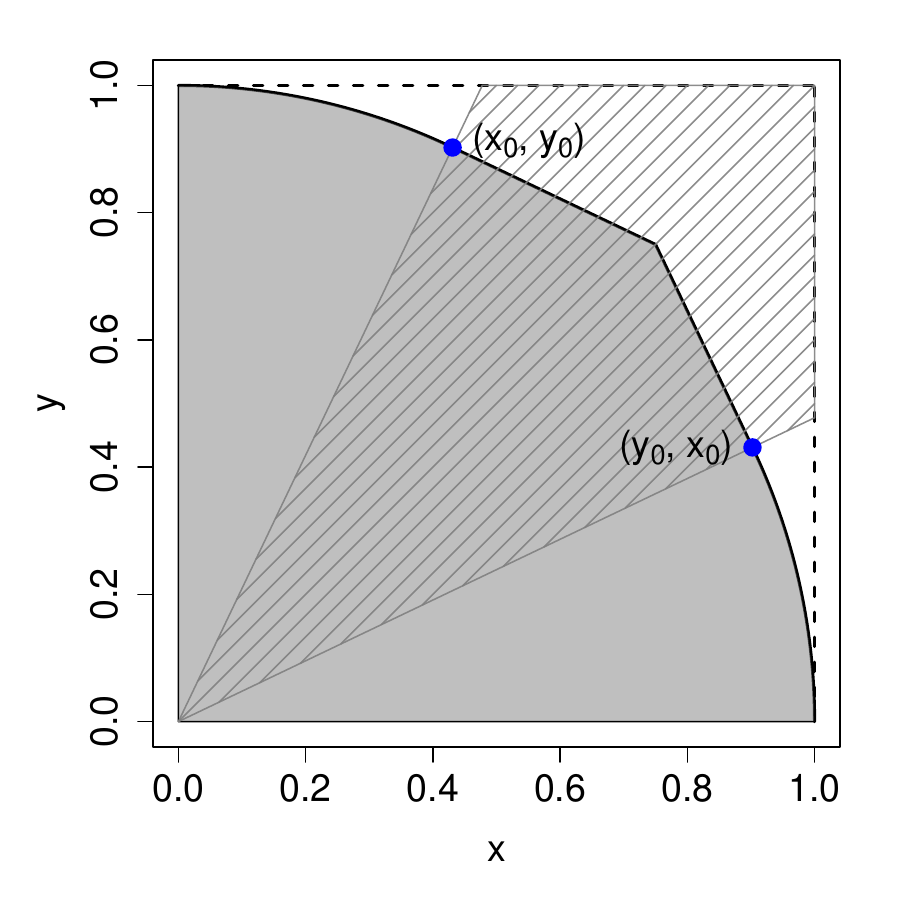}
\caption{\label{fig:HWGauss} (Left) unit level set of the additive mixture gauge function~\eqref{eq:gauge_HW} with Gaussian gauge $g_{\bV}$ for $\rho=0.5$ and $\gamma=1$. (Middle) corresponding function $k(q),$ $q\ge 0$. 
(Right) unit level set of the additive mixture gauge function with the inverted-logistic gauge $g_{\bV}$ for $\theta=0.5$ and $\gamma=0.75$. 
The blue solid circle marks the transition point $(x_0,y_0)=(0.43,0.90)$ for $y>x$ where the gauge switches form.
The hatched area indicates the region in which $g_{\bX}$ is determined by the tangent line equation. 
}
\end{figure}

\paragraph{Inverted-logistic gauge function for $g_{\bV}$}
The gauge function $g_{\bX}$ is
\begin{equation*}
    g_{\bX}(x,y)= \min_{s\in[0,\min\cbr{x,y}/\gamma]} s + \rbr{(x-\gamma s)^{1/\theta}+(y-\gamma s)^{1/\theta}}^\theta,\quad \gamma \le 1,\,\, \theta \in(0,1].
\end{equation*}
Unlike the Gaussian case, a closed-form expression for the minimizer $\hat{s}\in[0,\min\{x,y\}/\gamma]$ is not obtainable from the derivative due to the non-linear power term.
For $y> x$, the unique point $(x_0,y_0)$ is determined by the tangent line equation~\eqref{eq:tangentgamma}, where $y$ is obtained from the unit level curve $g_{\bV}(x,y)=1$, given by $y = \rbr{1-x^{1/\theta}}^\theta.$
Letting $z:=x^{1/\theta}$, it leads to the implicit equation $\frac{1}{\gamma}=(1-z)^{1-\theta}+z^{1-\theta},$ with $\gamma\in(2^{-\theta},1]$, $\theta\in(0,1]$, and $z\in[0,1]$.
The unique solution $x_0$ and corresponding $y_0$ are obtained numerically; see Section 2 of the supplement for more details, which then gives the explicit $\hat{s}$ via Proposition~\ref{prop:tangentline}.



\paragraph{Rectangular gauge function for $g_{\bV}$}
The gauge function $g_{\bX}$ is
\begin{equation*}
g_{\bX}(x,y)=\min_{s\in[0,\min\cbr{x,y}/\gamma]} s+ \max\cbr{\frac{1}{\theta}(x-y),\frac{1}{\theta}(y-x),\frac{1}{2-\theta}(x+y-2\gamma s)},\quad \gamma\le 1,\,\,\theta\in(0,1].
\end{equation*}
Note that the only term in $g_{\bV}$ depending on $s$ is $(x+y-2\gamma s)/(2-\theta)$. When $\gamma > 1/g_{\bV}(1,1) = 1-\theta/2$, the minimizer $\hat{s}\in[0,\min(x,y)/\gamma]$ is obtained by equating the two sides:
\begin{equation*}
    \frac{1}{\theta}(y-x)=\frac{1}{2-\theta}(x+y-2\gamma s),
\end{equation*}
yielding 
$
    \hat{s}=\frac{1}{\gamma\theta}\rbr{(\theta-1)y+x}.
$
The unique point $(x_0,y_0)=(1-\theta,1)$ for $y>x$ is a point of discontinuity in the derivative of $g_{\bV}$, but may be determined by setting $\hat{s}=\frac{1}{\gamma \theta}\rbr{(\theta-1)y_0+x_0}$ to zero and $y_0$ to one.



\section{Simulation study}
\label{sec:InferenceSimulation}

\subsection{Statistical inference procedure}
\label{sec:Inference}
We now explore the ability of the geometric framework, with gauge functions outlined in Sections~\ref{sec:Amix} and~\ref{sec:HWmodel}, to distinguish between the regimes of asymptotic dependence and asymptotic independence.
We follow the statistical inference and prediction procedure outlined in \cite{wadsworth2024statistical} and briefly summarize it here for completeness.

To fit the truncated gamma distribution \eqref{eq:tGam} over a high threshold, we first compute a high threshold $r_{\tau}(w)$ of the distribution $R\mid W=w$ across all $w\in [0,1]$.
Let $\widebar{F}(r\mid w)$ be the gamma survival function of $R\mid W=w.$
A candidate for the radial threshold is the quantile $r_{\tau}(w)$, defined such that $\widebar{F}_{R|W}(r_{\tau}(w)\mid w)=1-\tau$ for $\tau$ near 1.
This threshold can be estimated using either a rolling-windows quantile approach or additive quantile regression \citep{fasiolo2021fast}.
The rolling-windows approach partitions the simplex into overlapping blocks and calculates empirical $\tau$-quantiles of $R\mid W=w$ within each block.
For new angle values, the radial threshold is determined using local means of the threshold values from the overlapping blocks.
Using the threshold exceedances $(r_i,w_i)$, $i=1,\ldots,n'$, such that $r_i>r_\tau(w_i)$, the parameters are estimated maximizing the likelihood
\begin{equation*}
\label{eq:tglikeli}
    L\rbr{\bm{\Omega};(r,w)}=\prod_{i=1}^{n'}\frac{g_{\bX}(w_i,1-w_i;\bm{\theta})^\lambda}{\Gamma(\lambda)}\frac{r_i^{\lambda-1}\exp\rbr{-r_i g_{\bX}(w_i,1-w_i;\bm{\theta})}}{\widebar{F}\rbr{r_\tau(w_i);\lambda,g_{\bX}(w_i,1-w_i,\bm{\theta})}},
\end{equation*}
with $\bm{\Omega}=\rbr{\bm{\theta}^\top,\lambda}^\top$, and $\widebar{F}(\,\cdot\,;\lambda,g_{\bX}(w;\bm{\theta}))$ is the gamma survival function.
For model selection, standard criteria such as AIC or BIC can be used, while model assessment can be performed using PP-plots.

To estimate the probability of extreme sets of interest, letting $R'=R/r_\tau(W)$, we consider the equation
\begin{equation*}
\label{eq:tailprob}
    \prob(\bX \in C)=\prob(\bX\in C \mid R' > 1)\prob(R' > 1),
\end{equation*}
for any set $C$ contained within the region $\cbr{(x,y)\in\reals_+^{2}:x+y > r_\tau\rbr{x/(x+y)}}.$
We can simulate samples from the distribution $\bX\mid R'>1$ by multiplying samples $w^*$ from the empirical distribution of $W\mid R'>1$, and samples $r^*$ from the fitted truncated gamma distribution $R\mid \cbr{W=w, R>r_\tau(\bw)}$, conditional on $w^*$, to give $\bx^*=(r^*w^*,r^*(1-w^*))$.
The quantity $\prob(R'>1)$ can be empirically estimated as the proportion of threshold exceedances.
For extrapolation, these procedures can be extended to simulation from $\bX \mid R'>k$ with a suitably chosen $k > 1$.
The adapted formula for extrapolation is
\begin{equation}
\label{eq:tailprob_extrapo}
    \prob(\bX \in C)=\prob(\bX\in C\mid R'>k)\prob(R'>k\mid R'>1)\prob(R'>1),
\end{equation}
where we choose the largest value of $k$ such that $C \subset \cbr{(x,y)\in\reals_+^2:x+y>k r_\tau\rbr{x/(x+y)}}$.
For further details on estimation of $\prob(R'>k\mid R'>1)$ and the choice of $k$, see \cite{wadsworth2024statistical}.

\subsection{Simulation study}
\label{sec:Simulation}

We compare the performance of the geometric approach with flexible additively mixed gauge functions to existing random-scale copula models that are able to interpolate between AD and AI, focusing on their ability to identify the class of extremal dependence.
The geometric criteria developed in Section~\ref{sec:Theory} are characterized by the behavior of $k(q)$ and $\tilde{k}(q)$.
In particular, for a limit set to indicate AD, it must exhibit a pointy shape.
We implement the geometric methodology using the R package $\texttt{geometricMVE}.$
The implementation is computationally fast, leveraging the explicit forms of the additively mixed functions, obtained in Section~\ref{sec:Amix} and \ref{sec:HWmodel}.
We also consider the following very simple, single parameter gauge function, referred to as the max-min gauge, defined as
\begin{equation}
\label{eq:gauge_mm}
    g_{\bX}(x,y)=
    \begin{cases}
        \frac{1}{\theta}\max\rbr{x,y} + \rbr{1-\frac{1}{\theta}}\min\rbr{x,y},\quad \theta\in(0,1),\\
        \max\rbr{x,y} + \rbr{\frac{1}{\theta}-1}\min\rbr{x,y},\quad \theta\in[1,\infty).
    \end{cases}
\end{equation}
For $\theta < 1 $, the function reduces to the logistic gauge function, which yields a pointy limit set, indicating asymptotic dependence.
Parameter values $\theta \ge 1$ correspond to asymptotic independence, where we rescale the gauge function to ensure the coordinatewise supremum equals 1.

For copula-based models, we specifically consider the bivariate models from \cite{wadsworth2017modelling} (EV) and \cite{huser2019modeling} (HW), which are implemented in the R packages $\texttt{EVcopula}$ and $\texttt{spatialADAI}$, respectively.
The EV model represents the copula of the random vector $\bY=(X,Y)^\top=S(V_1,V_2)^\top$, where $S\sim GP(1,\xi)$ follows a unit-scale generalized Pareto and is independent of the random vector $(V_1,V_2)$ such that $\max(V_1,V_2)=1$.
Specifically, $(V_1,V_2)^\top=\rbr{\frac{V}{\max(V,1-V)},\frac{1-V}{\max(V,1-V)}}^\top$,
with $V\sim \text{Beta}(\alpha,\alpha)$, whose simple and flexible shape leads to a straightforward model \citep{wadsworth2017modelling}.
Asymptotic dependence is implied when $\xi >0 $, whereas $\xi \le 0$ indicates asymptotic independence.
The HW model is defined through the random scale representation $\bY=(X,Y)^\top=R^{\delta}(W_1^{1-\delta},W_2^{1-\delta})^\top$ for $\delta\in[0,1]$, where $R$ is distributed to a standard Pareto and is independent of the random vector $(W_1,W_2)$ with standard Pareto margins.
We consider a bivariate Gaussian dependence structure for $(W_1,W_2)$. 
\cite{huser2019modeling} showed that $\delta > 1/2$ implies AD, whereas $\delta \le 1/2$ indicates AI.
Note that this is the same dependence structure as the model given in~\eqref{eq:HW_amix}, with $\gamma=\delta/(1-\delta).$

We fit the copula models to data transformed to uniform margins.
Parameter estimation was carried out using censored likelihood, to ensure models are fitted only to extreme data.
For both models we consider {\em max-censoring}, in which the contributions of observations are fully censored if all variables are below the threshold, but are left completely uncensored otherwise.
For the HW model, we additionally consider {\em partial censoring}, where only observations above a high threshold contribute to the estimation, with non-extreme values being censored.

We consider five distinct scenarios, each characterized by two different dependence structures with nearly equivalent strengths of dependence, as measured by $\chi$ or $\eta$:

\begin{enumerate}[topsep=0.5pt,itemsep=-0.6ex]
    \item \textbf{Strongly dependent AD (st.d.AD)}: Logistic model with $\gamma=0.2$ and Dirichlet model \citep{coles1991modelling} with $\alpha=\beta=14$, resulting in $\chi=0.85$.
    \item \textbf{Moderately strongly dependent AD (mst.d.AD)}: Logistic model with $\gamma=0.4$ and Dirichlet mode with $\alpha=\beta=2.85$, yielding $\chi=0.68$.
    \item \textbf{Weakly dependent AD (w.d.AD)}: Logistic model with $\gamma=0.8$ and Dirichlet model with $\alpha=\beta=0.285$, producing $\chi=0.26$.
    \item \textbf{Strongly dependent AI (st.d.AI)}: Inverted-logistic model with $\theta=0.2$ and Gaussian model with $\rho=0.74$, yielding $\eta=0.87$.
    \item \textbf{Weakly dependent AI (w.d.AI)}: Inverted-logistic model with $\theta=0.8$ and Gaussian model with $\rho=0.14$, resulting in $\eta=0.57$. 
\end{enumerate}

We draw samples of size $n=5,000$ from each model, repeating the process across $1,000$ iterations.
For the geometric approach in the bivariate case, we employ a rolling-windows quantile method to calculate the high threshold $r_\tau(w)$.
We then fit the truncated gamma distribution~\eqref{eq:tGam} with different additively mixed gauge functions to the threshold exceedances $(r_i,w_i)$, $i=1,\ldots,n',$ such that $r_i\ge r_\tau(w_i)$, and estimate the associated parameters for each case.
We consider seven different additive mixtures: additive mixtures of an independent exponential random vector with Gaussian (ExpGa), Inverted-logistic (ExpInv), and Rectangle (ExpRect) for $g_{\bV}$, as described in Section~\ref{sec:HWmodel}; additive mixtures of logistic with Gaussian (GaLog), Inverted-logistic (InvLog), and rectangle (RectLog), respectively as detailed in Section~\ref{sec:Amix}; and the max-min gauge function (MM) in~\eqref{eq:gauge_mm}.

We fix $\tau=0.95$ as the threshold for identifying radial threshold exceedances in the geometric approach.
The corresponding threshold for copula-based models is then set to ensure they are fitted to the same number of exceedances.
We summarize the sensitivity (correctly identifying AD) and specificity (correctly identifying AI) as measures of performance for identifying the class of extremal dependence across the five scenarios, as shown in Table~\ref{tab:coverage}.
Particular attention is given to scenarios 3 and 4 due to the challenges in correctly identifying these cases.

Across the strongly dependent AD and weakly dependent AI scenarios, all three methods perform well in identifying the extremal dependence class under two different dependence structures.
In the challenging weakly dependent AD case, the copula-based model of \cite{wadsworth2017modelling} performs best, whereas the model of \cite{huser2019modeling} shows noticeably weaker performance.
By contrast, the geometric approach exhibits consistently robust behavior, with particularly strong coverage for additive mixtures such as ExpInv and MM.
In the strongly dependent AI scenario, the geometric approach continues to perform reliably, although the single-parameter MM gauge function shows lower specificity.
Interestingly, in this setting, the bivariate model of \cite{huser2019modeling} outperforms the other methods, while the model of \cite{wadsworth2017modelling} struggles to correctly identify asymptotic independence.
Overall, the EV model under the random-scale copula framework achieves the highest average coverage rates across all scenarios.
Nevertheless, the geometric additive mixtures attain coverage rates exceeding 0.90 for the `ExpGa', `ExpInv', `ExpRect', and `GaLog' models.
Notably, across weakly dependent AD and strongly dependent AI regimes, the geometric approach exhibits lower dispersion in coverage rates across scenarios than the copula-based methods.

\begin{table}[ht!]
    \centering
    \begin{tabular}{|c||c|c|c|c|c|c|} 
      \hline
        & 1. st.d.AD & 2. mst.d.AD & 3. w.d.AD & 4. st.d.AI & 5. w.d.AI & Overall \\
        & log / diri & log / diri & log / diri & Inv / Ga & Inv / Ga & average \\
      \hline\hline
      ExpGa & 0.982 / 0.938 & 0.971 / 0.942 & 0.859 / 0.890 & 0.828 / 0.768  & 0.993 / 0.977 & 0.917\\
      \hline
      ExpInv & 1.000 / 1.000 & 1.000 / 1.000 & 0.927 / 0.939 & 0.741 / 0.485 & 0.986 / 0.994 & 0.907\\
      \hline
      ExpRect & 0.991 / 0.941 & 0.977 / 0.944 & 0.774 / 0.781 & 0.832 / 0.803 & 0.996 / 0.999 & 0.904\\
      \hline
      GaLog & 0.980 / 0.833 & 0.958 / 0.969 & 0.776 / 0.789 & 0.901 / 0.874 & 0.985 / 0.994 & 0.906\\
      \hline
      InvLog & 1.000 / 1.000 & 0.999 / 1.000 & 0.616 / 0.668 & 0.848 / 0.664 & 0.986 / 0.991 & 0.877\\
      \hline
      RectLog & 0.991 / 0.946 & 0.905 / 0.840 & 0.727 / 0.724 & 0.795 / 0.769 & 0.988 / 0.992 & 0.868\\
      \hline
      MM & 1.000 / 1.000 & 1.000 / 1.000 & 0.926 / 0.938 & 0.691 / 0.356 & 0.987 / 0.994 & 0.889\\
      \hline\hline
      HW-Cmax & 1.000 / 1.000 & 0.921 / 0.260 & 0.323 / 0.316 & 0.999 / 1.000 & 1.000 / 1.000 & 0.777\\
      HW-Cpar & 1.000 / 1.000 & 0.921 / 0.921 & 0.191 / 0.173 & 0.963 / 0.978 & 1.000 / 1.000 & 0.811\\
      \hline\hline
      EV & 1.000 / 1.000 & 1.000 / 1.000 & 0.998 / 1.000 & 0.428 / 0.847 & 1.000 / 1.000 & 0.927\\
      \hline
    \end{tabular}
\caption{The sensitivity and specificity for classifying AD and AI for each model across the five scenarios.
log: logistic dependence structure, diri: Dirichlet dependence structure, HW: the model of \cite{huser2019modeling} with a Gaussian dependence structure for $(W_1,W_2)^\top$, where Cmax and Cpar represents the max- and partial-censoring schemes, respectively, and EV: the copula-based model of \cite{wadsworth2017modelling} with a beta distribution for $V$.
}
\label{tab:coverage}
\end{table}

We now consider summary dependence measures: the slope $\alpha$ in the conditional extremes model, which coincides with the minimizer $\kappa$ in Section~\ref{sec:findingcoordinatewise}, the residual tail dependence coefficient $\eta$ in~\eqref{eq:eta}, and the tail dependence coefficient $\chi$ for the seven additively mixed gauge functions considered.
To calculate the model-based $\chi_{m}(u)$, we evaluate the probability of extreme sets
$C_1=\rbr{-\log(1-u),\infty}\times\rbr{-\log(1-u),\infty}$ and $C_2=\rbr{-\log(1-u),\infty}\times\rbr{0,\infty}$, where $-\log(1-u)$ for $u$ near 1 corresponds to a high quantile of a standard exponential distribution.
The model-based tail dependence coefficient is then defined as
\begin{equation}
\label{eq:modelChi}
    \chi_{m}(u)=\prob\rbr{\rbr{X,Y}\in C_1\mid(X,Y)\in C_2}.
\end{equation}
To estimate $\chi_{m}(u)$, we employ equation \eqref{eq:tailprob_extrapo} for a suitably chosen $k>1$ and draw sufficiently large samples accordingly.
For each iteration, the largest $k$ is chosen such that \newline $C_{i}\subset \cbr{(x,y):(x+y)>k r_\tau\rbr{x/x+y}}$ for $i=1,2.$
Setting $u=0.9999$, we present box-plots of the estimated model-based $\hat{\chi}_m(u)$ across five scenarios for the seven additively mixed gauge functions in Figure~\ref{fig:chi_ks}.
In weakly dependent AI cases, the model-based estimates $\hat{\chi}_m(u)$ tend to be close to zero, so we take the logarithm of $\hat{\chi}_m(u)$ for better visualization.

\begin{figure}[ht!]
\centering
\includegraphics[width=4cm]{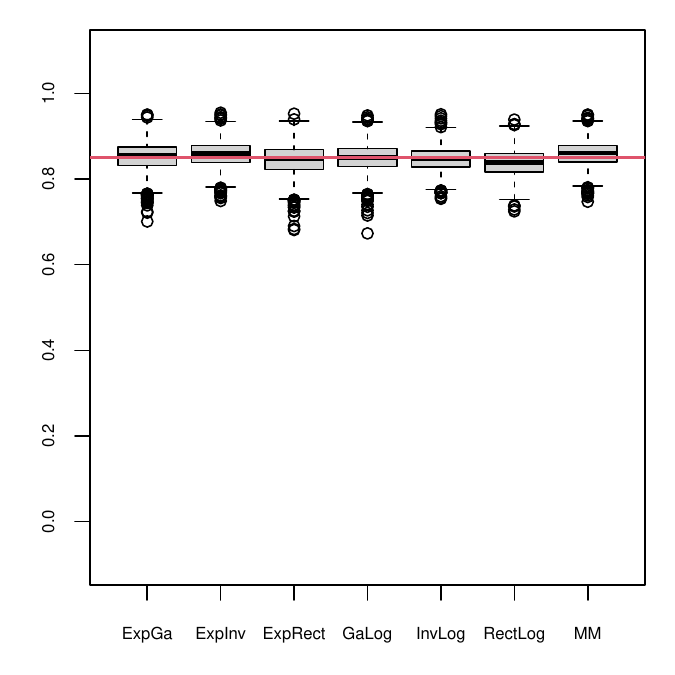}
\includegraphics[width=4cm]{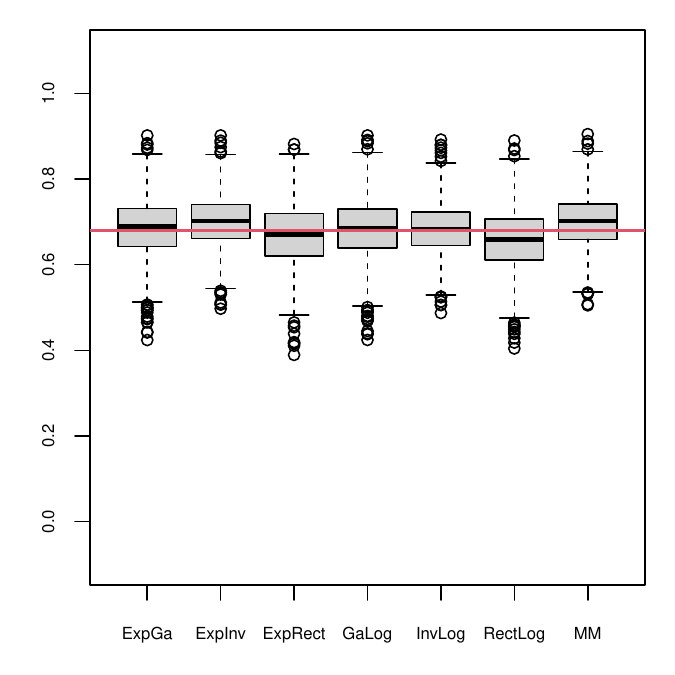}
\includegraphics[width=4cm]{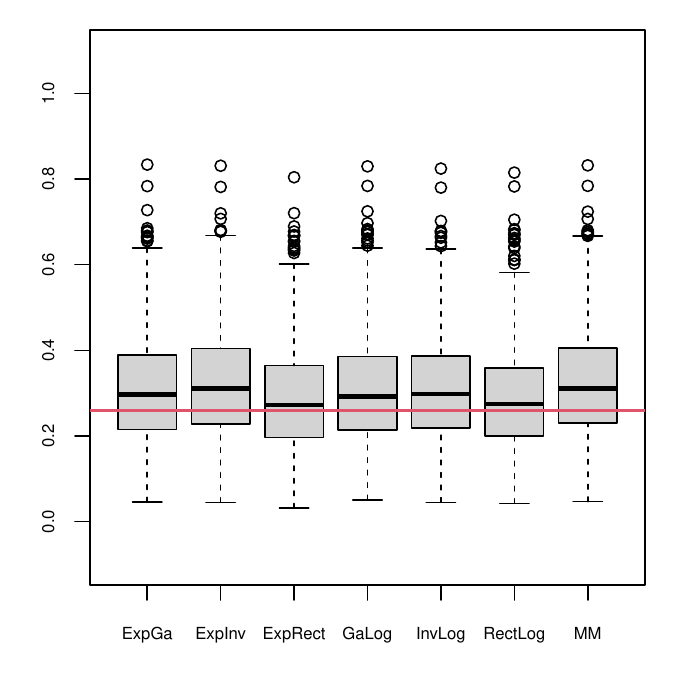}\\
\includegraphics[width=4cm]{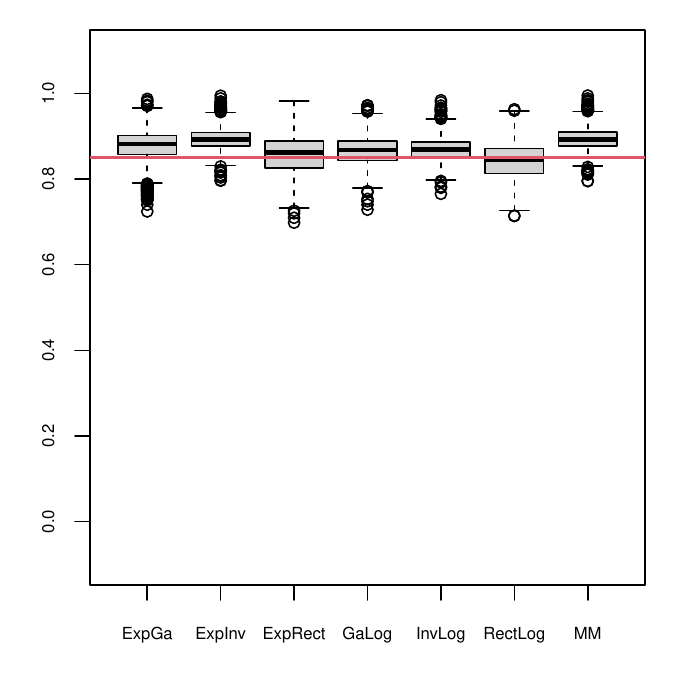}
\includegraphics[width=4cm]{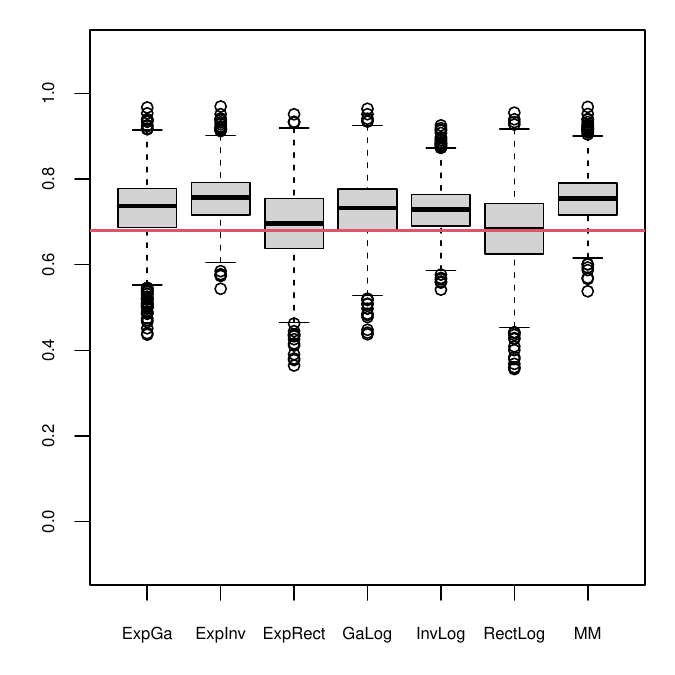}
\includegraphics[width=4cm]{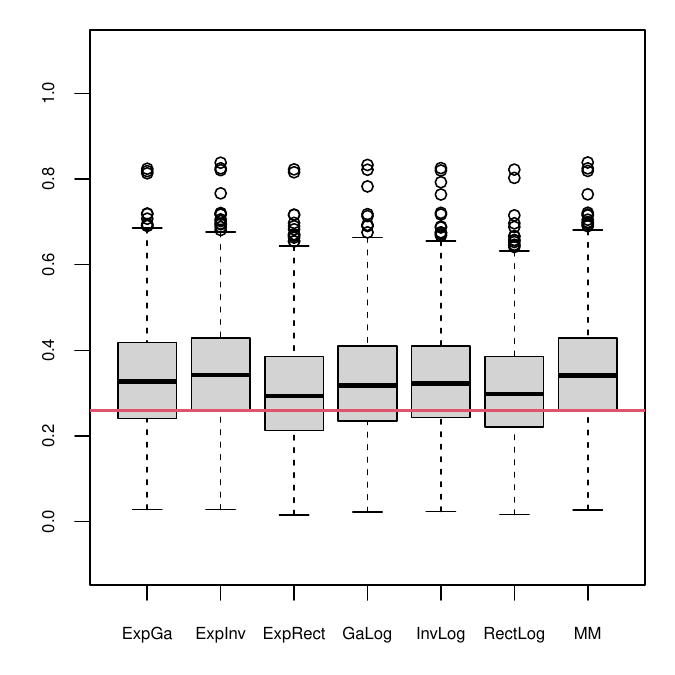}\\
\includegraphics[width=4cm]{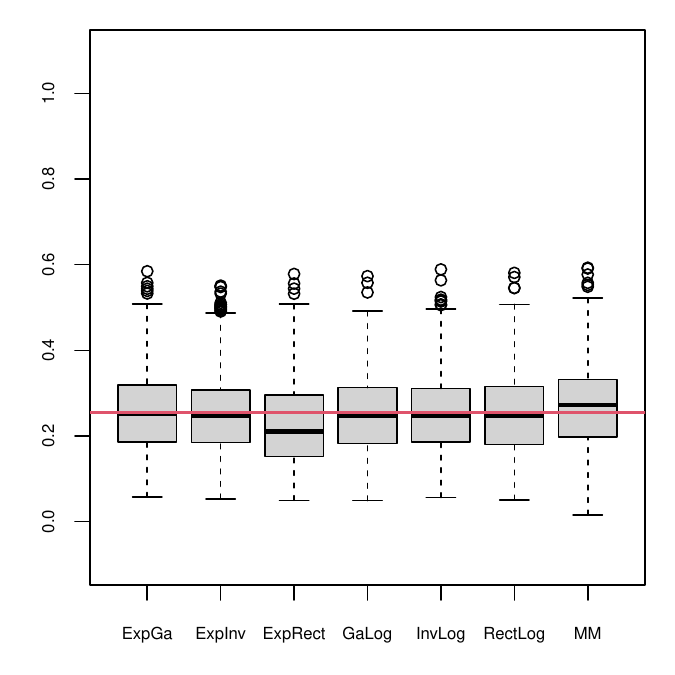}
\includegraphics[width=4cm]{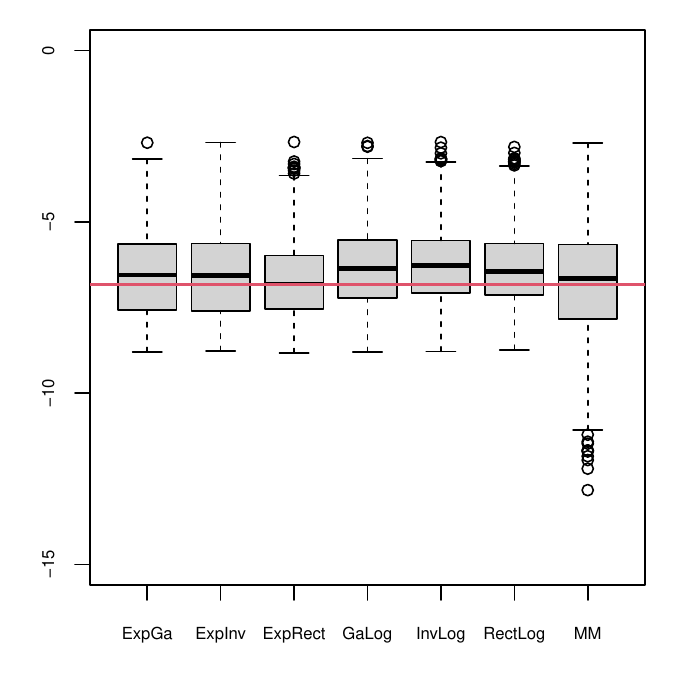}
\includegraphics[width=4cm]{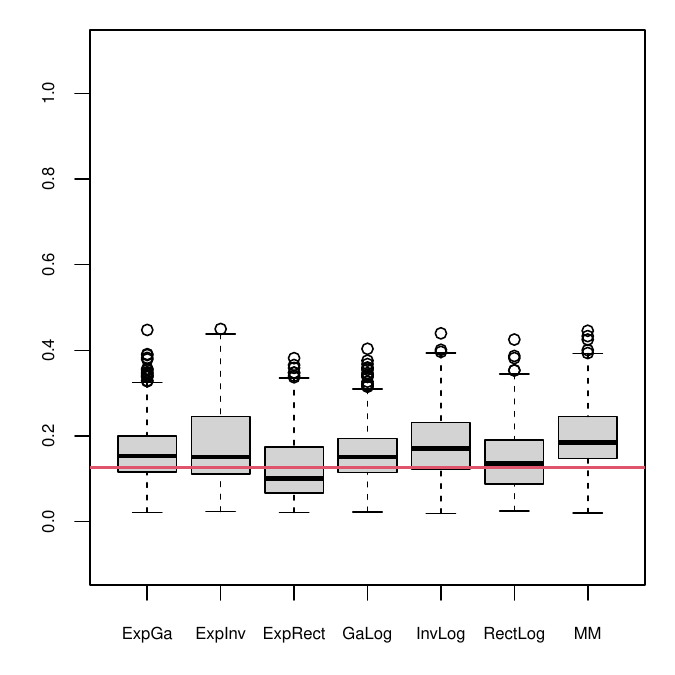}
\includegraphics[width=4cm]{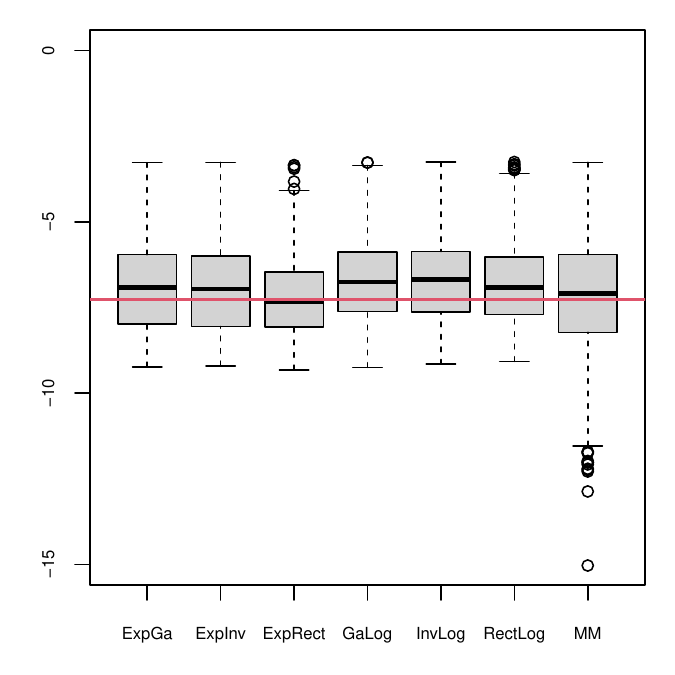}
\caption{\label{fig:chi_ks} Box-plots of the estimated model-based $\hat{\chi}_m(u)$ over 1,000 iterations. 
From left to right, the top row represents the scenarios for st.d.AD, mst.d.AD, w.d.AD under the logistic dependence structure.
The second row represents the same scenarios (st.d.AD, mst.d.AD, w.d.AD) under the Dirichlet dependence structure.
In the bottom row, the first two plots show s.d.AI and w.d.AI under the Inverted-logistic dependence structure, while the last two plots indicate st.d.AI and w.d.AI under the Gaussian dependence structure.
The box-plots are ordered as follows for each plot: ExpGa, ExpInv, ExpRect, GaLog, InvLog, RectLog, and MM.
The red lines indicate the true values of $\chi(0.9999)$ for each case.
For weakly dependent AI cases, the values are shown on the logarithm scale.}
\end{figure}

Additionally, we present box-plots of the estimated dependence measures $\hat{\kappa}$, $\hat{\eta}$, as well as box-plots of the estimated probabilities of the lower-right corner set $C_{lr}=\rbr{8,\infty}\times\rbr{0,7}$ for the seven additively mixed gauge functions across five scenarios, included in Section 3 of the supplementary material.
Considering sets such as $C_{lr}$, where not all variables are extreme, is of particular interest in the geometric approach, as it allows for extrapolation in any direction and provides better performance in comparison to other existing extreme value methodologies \citep{wadsworth2024statistical}.
All additive mixtures provide reasonably accurate estimates of the probability $\prob((x,y)\in C_{lr})$.

\section{Application to river flow data}
\label{sec:Application}

We apply the geometric approach, using multiple additively mixed gauge functions, to analyze a dataset of river flow measurements.
Our primary objective is to investigate the identification of the extremal dependence class between pairs of sites, using the geometric criteria developed in Section~\ref{sec:Theory}.
For comparison, we fit the random scale copula models outlined in Section~\ref{sec:Simulation}.

\subsection{River flow data}
\label{sec:riverflow}

We apply the geometric approach with additively mixed gauge functions to the daily average river flow $(m^3/s)$ collected from four gauging stations in the north-west of England, denoted as $(X_1,X_2,X_3,X_4)$.
This river flow dataset was initially analyzed by \cite{simpson2020determining} who investigated which groups of variables exhibited simultaneous extremes.
The analysis was followed by \cite{wadsworth2024statistical} who fitted the geometric model using three candidate gauge functions, investigating the joint tail dependence coefficients and assessing the goodness of fit.

Our primary focus is on identifying the class of extremal dependence for pairwise variables.
For the geometric approach, marginal transformations are applied using a semi-parametric approach, where a generalized Pareto distribution is fitted above a high threshold, and the empirical distribution is fitted below the threshold \citep{coles1991modelling}.
The threshold values for each variable are set at the 0.95 quantiles of their respective distributions.
Using a rolling-windows quantile method to calculate a high threshold $r_\tau(w)$ with $\tau=0.95$, we fit a truncated gamma distribution~\eqref{eq:tGam} with the seven additively mixed gauge functions for $g_{\bX}(x,y;\bm{\theta})$ and obtain the associated ML estimates as described in Section~\ref{sec:Simulation}.

Based on the geometric criteria in Section~\ref{sec:Theory}, asymptotic dependence is identified only for the pair $(X_2,X_3)$ across most of the additively mixed gauge functions (ExpGa, ExpInv, ExpRect, InvLog, RectLog, and MM), resulting in pointy limit sets.
Numerical summaries for the pair $(X_2,X_3)$ are provided in Section 4 of the supplementary material.
To further illustrate the results across different variable pairs, we visualize the shape of the limit sets characterized by the fitted additive gauge functions for each pair.
For each pair, the best and second-best fitted gauge functions, as determined by AIC, are overlaid in Figure~\ref{fig:fittedgauge_riverflow}: MM and ExpGa for $(X_1,X_2)$; MM and ExpRect for $(X_1,X_3)$; ExpRect and RectLog for $(X_1,X_4)$; MM and ExpGa for $(X_2,X_3)$; MM and ExpRect for $(X_2,X_4)$; and ExpRect and MM for $(X_3,X_4)$.

\begin{figure}[ht!]
\centering
\includegraphics[width=4.5cm]{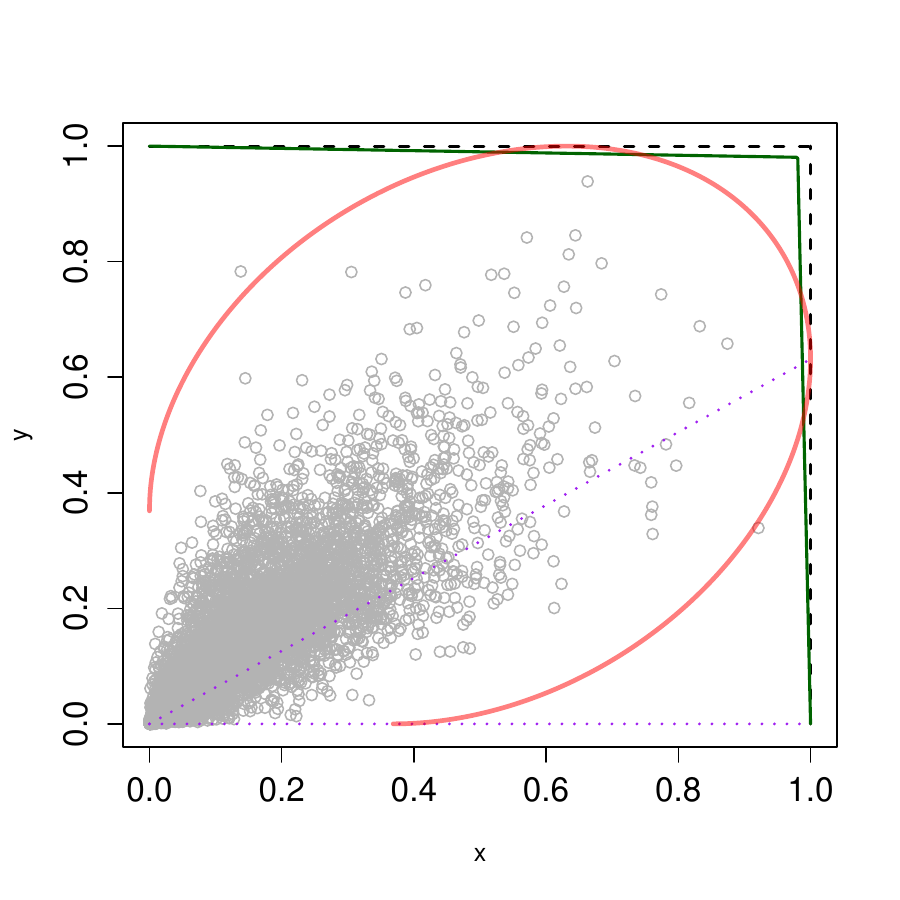}
\includegraphics[width=4.5cm]{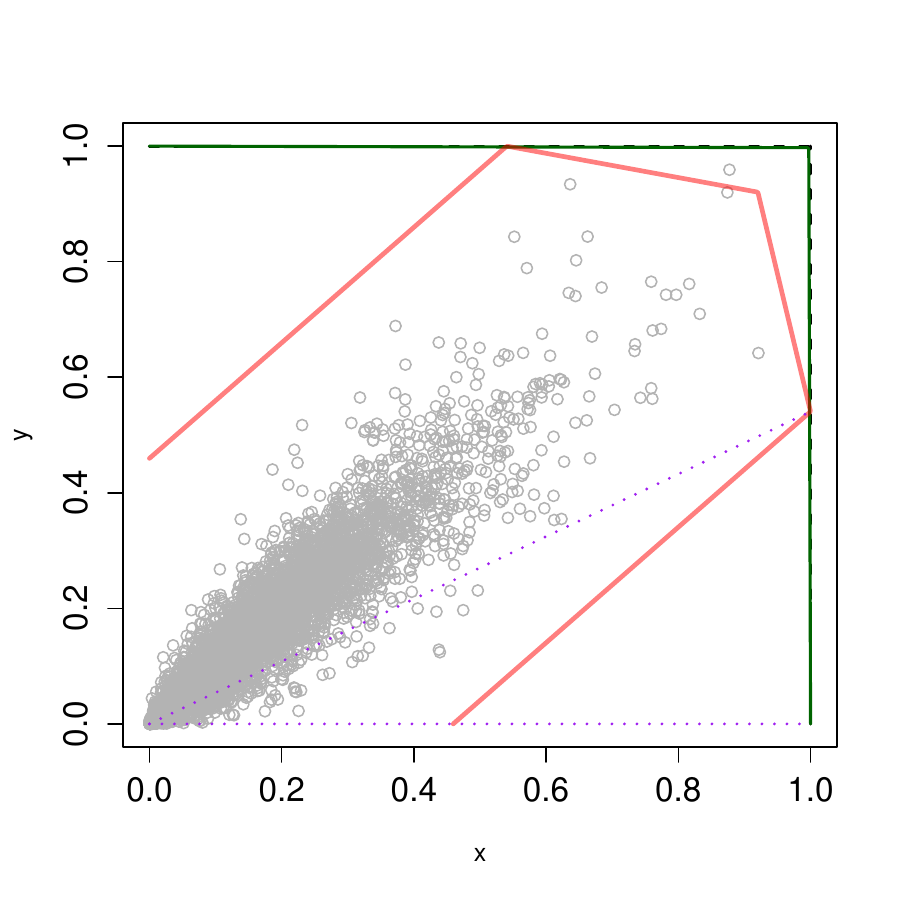}
\includegraphics[width=4.5cm]{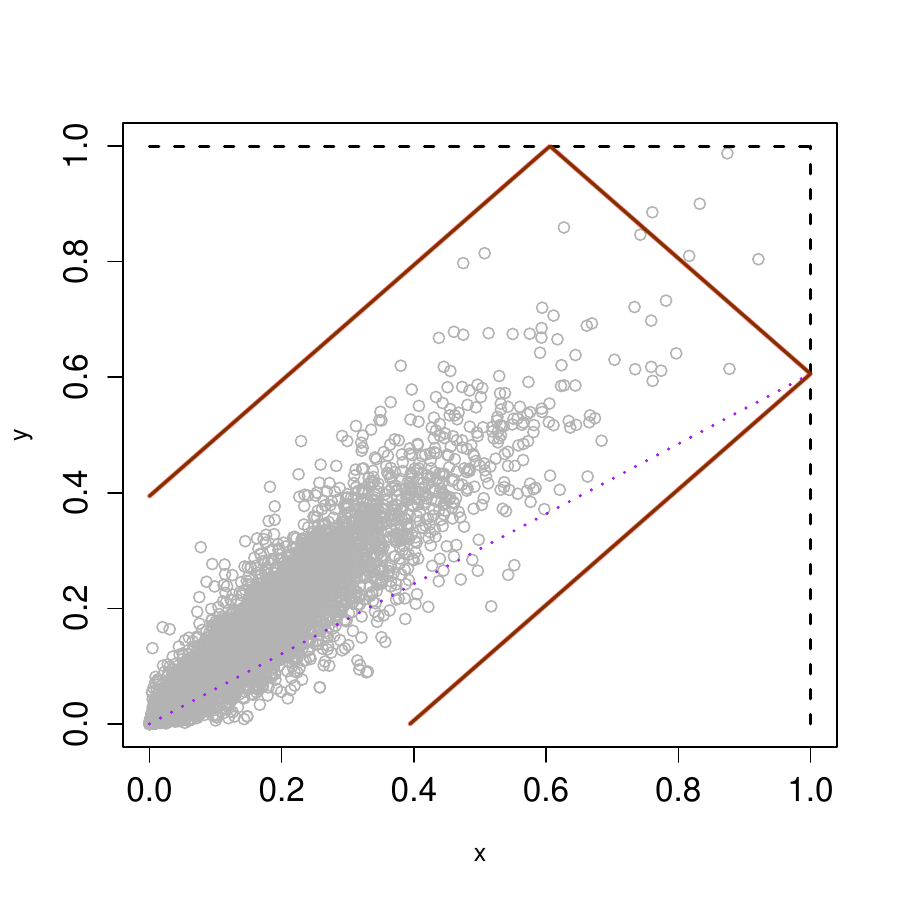}
\includegraphics[width=4.5cm]{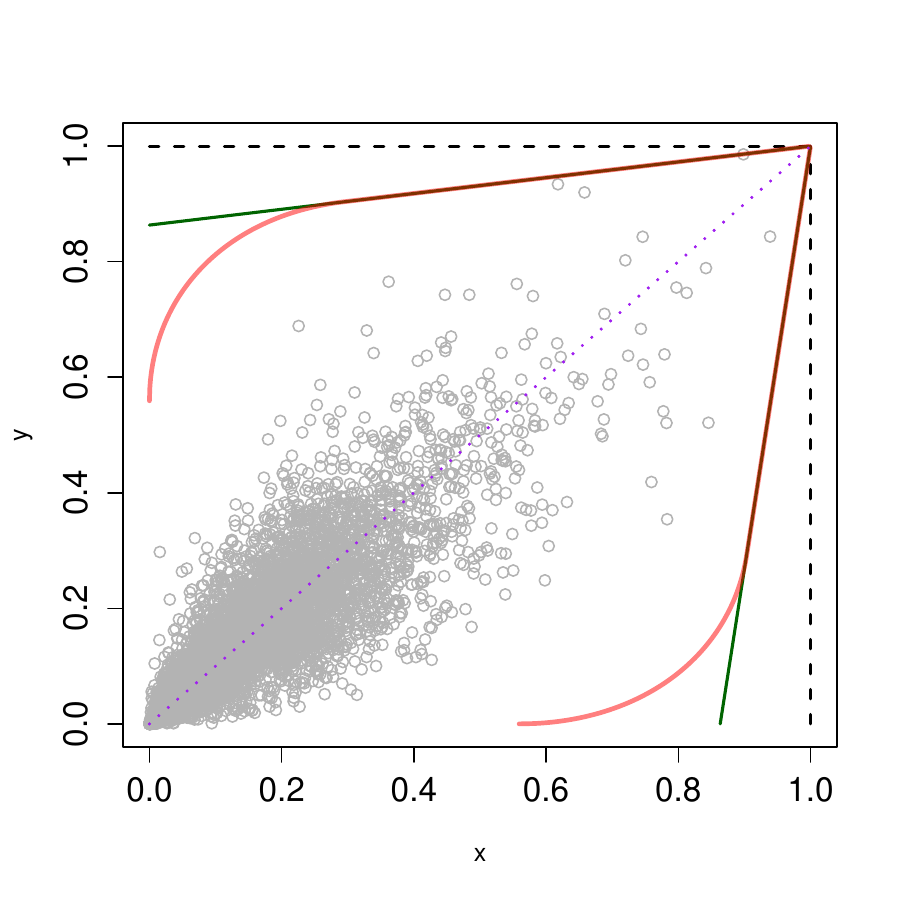}
\includegraphics[width=4.5cm]{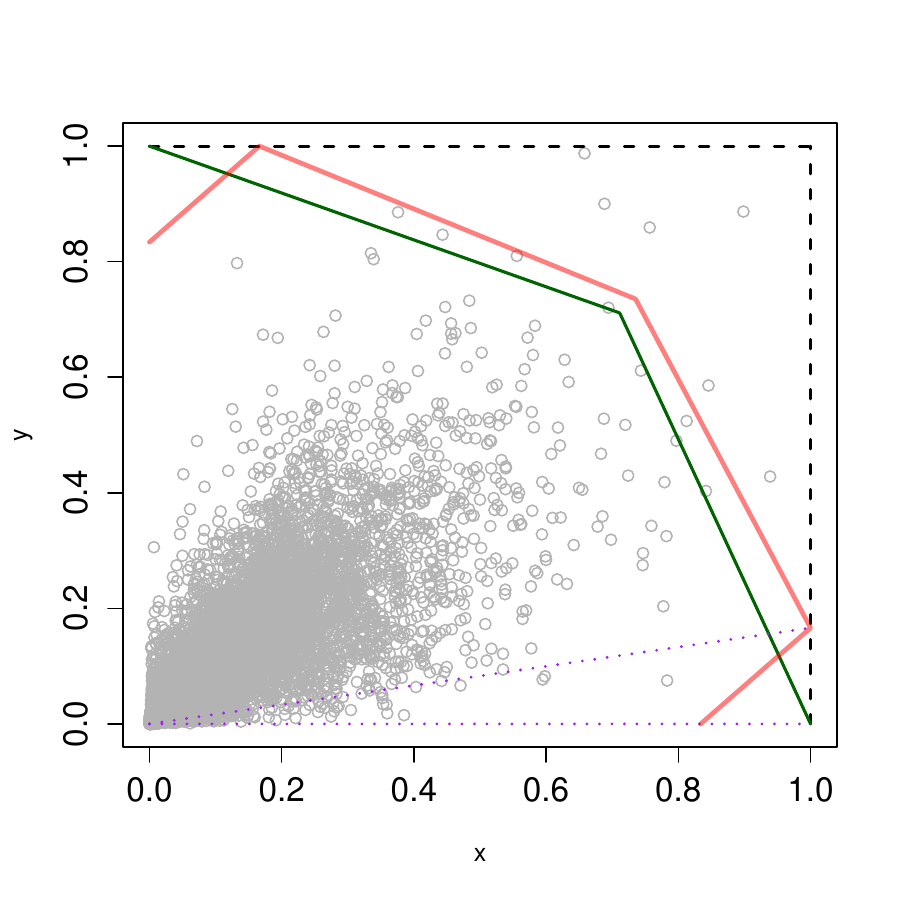}
\includegraphics[width=4.5cm]{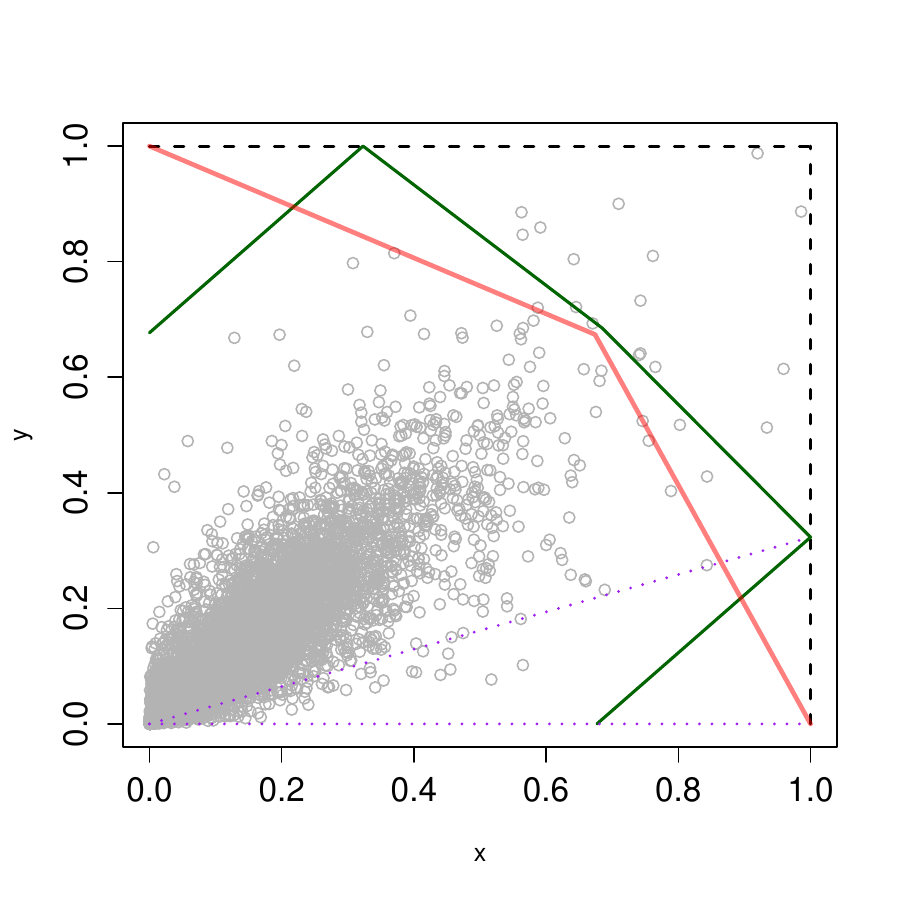}
\caption{\label{fig:fittedgauge_riverflow} The best (dark green) and second-best (red) fitted gauge functions are shown for each variable pair.
The plots are arranged in the order, from top to bottom and from left to right: $(X_1,X_2)$, $(X_1,X_3)$, $(X_{1},X_4)$, $(X_2,X_3)$, $(X_2,X_4)$, and $(X_3,X_4)$.
The purple dotted line corresponds to the slope estimate $\hat{\alpha}$ for the conditional extremes model.}
\end{figure}

In Figure~\ref{fig:chiplot_riverflow}, we compare the empirical $\hat{\chi}(u)$ with the model-based estimate $\hat{\chi}_m(u)$ as $u\rightarrow 1$.
\begin{figure}[ht!]
\centering
\includegraphics[width=4.5cm]{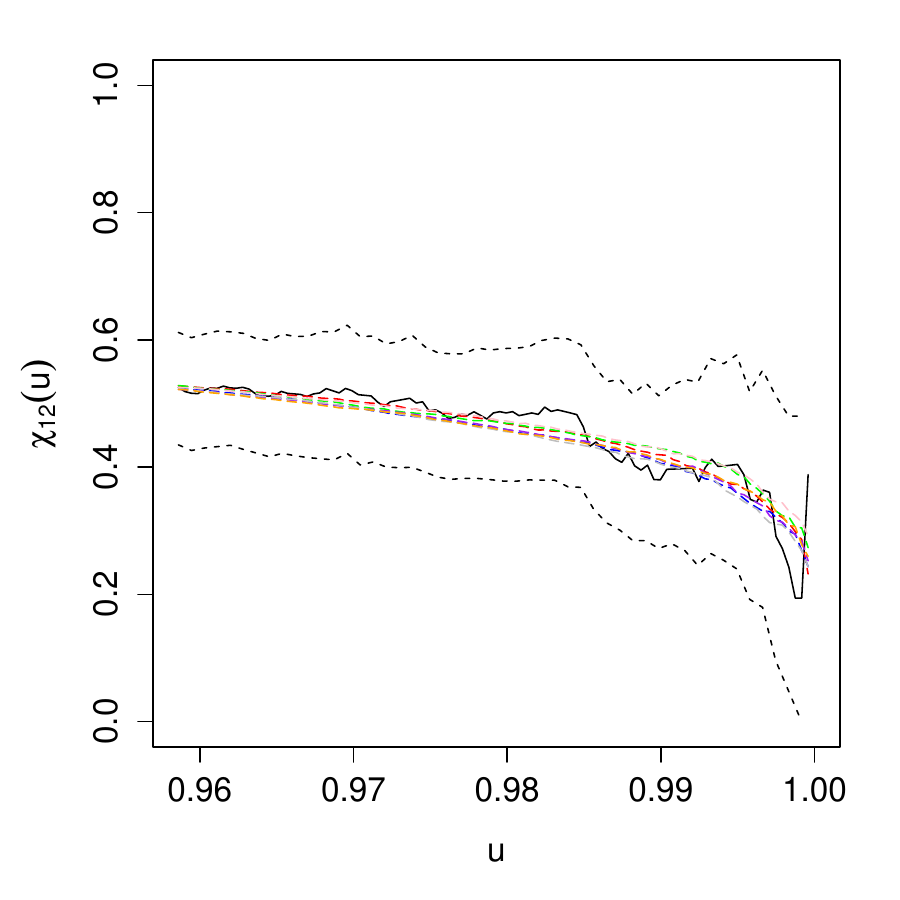}
\includegraphics[width=4.5cm]{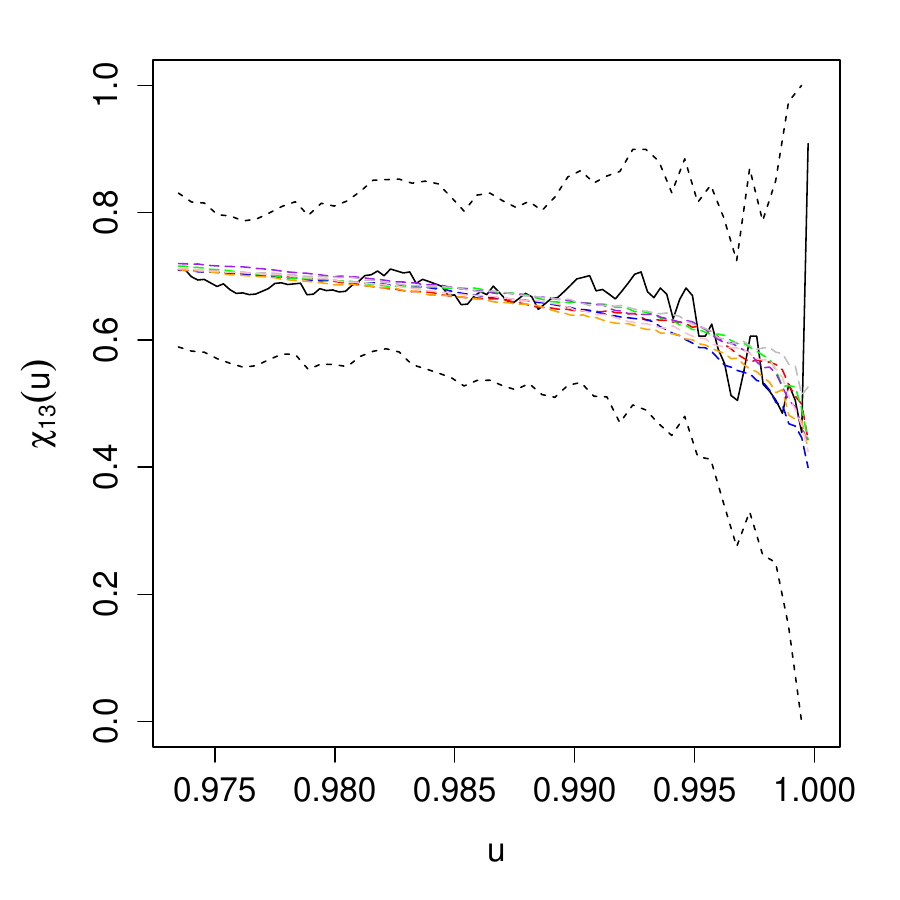}
\includegraphics[width=4.5cm]{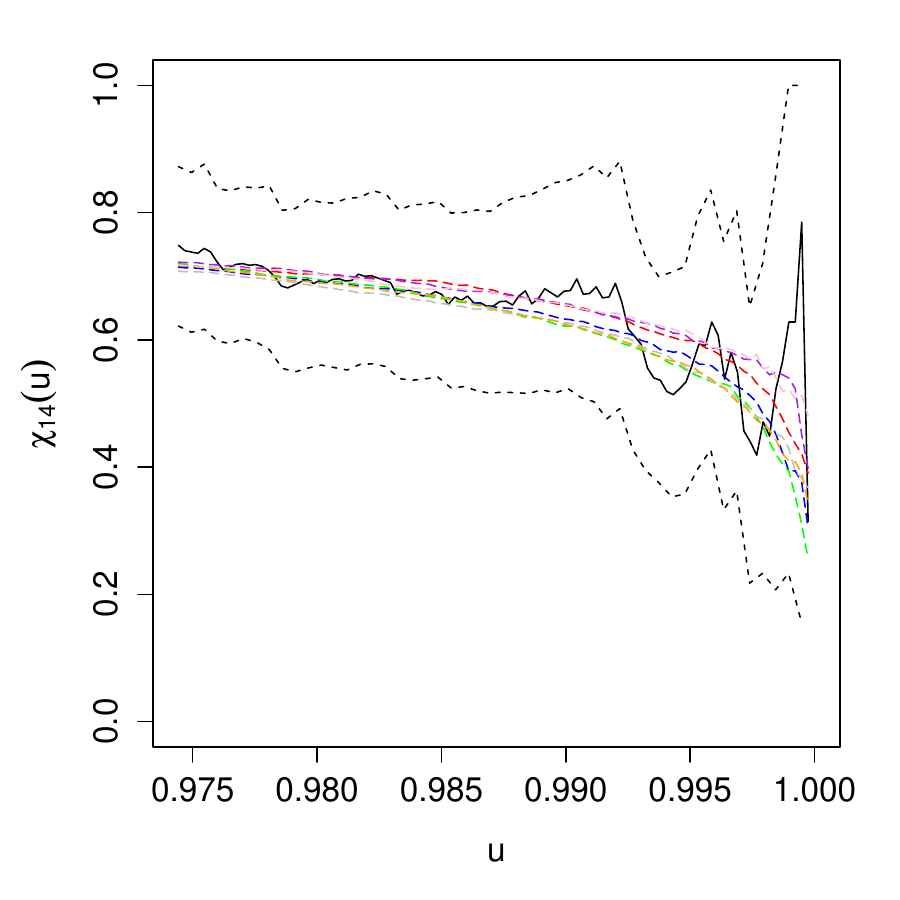}
\includegraphics[width=4.5cm]{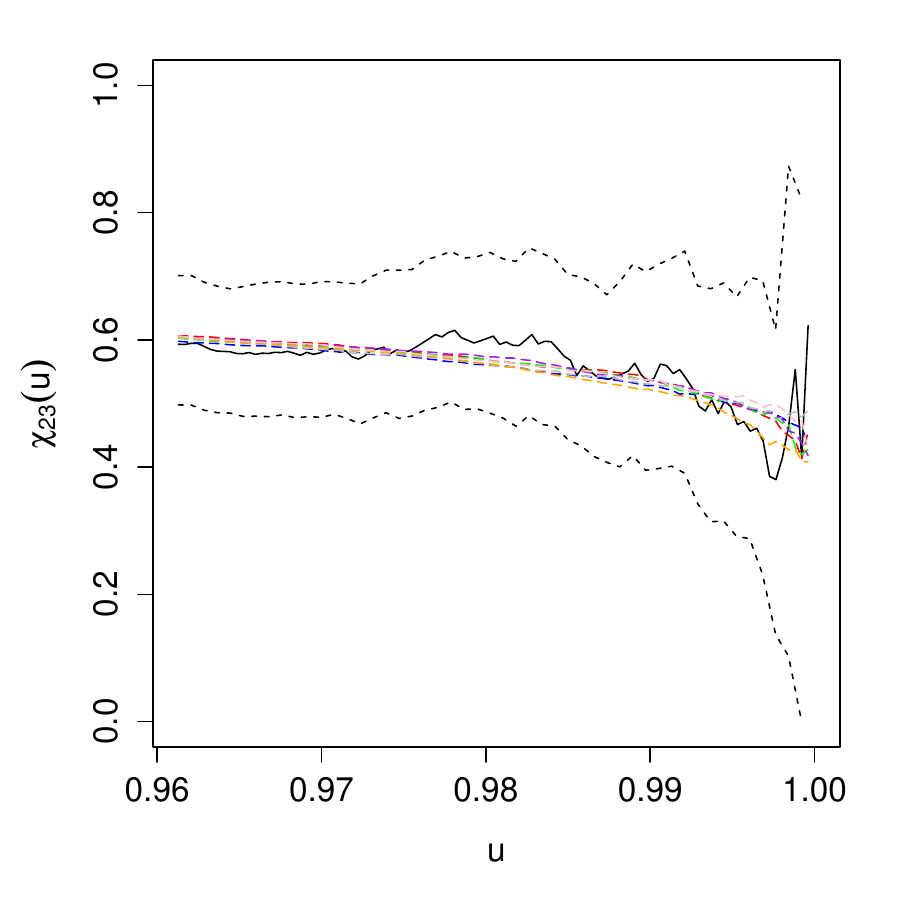}
\includegraphics[width=4.5cm]{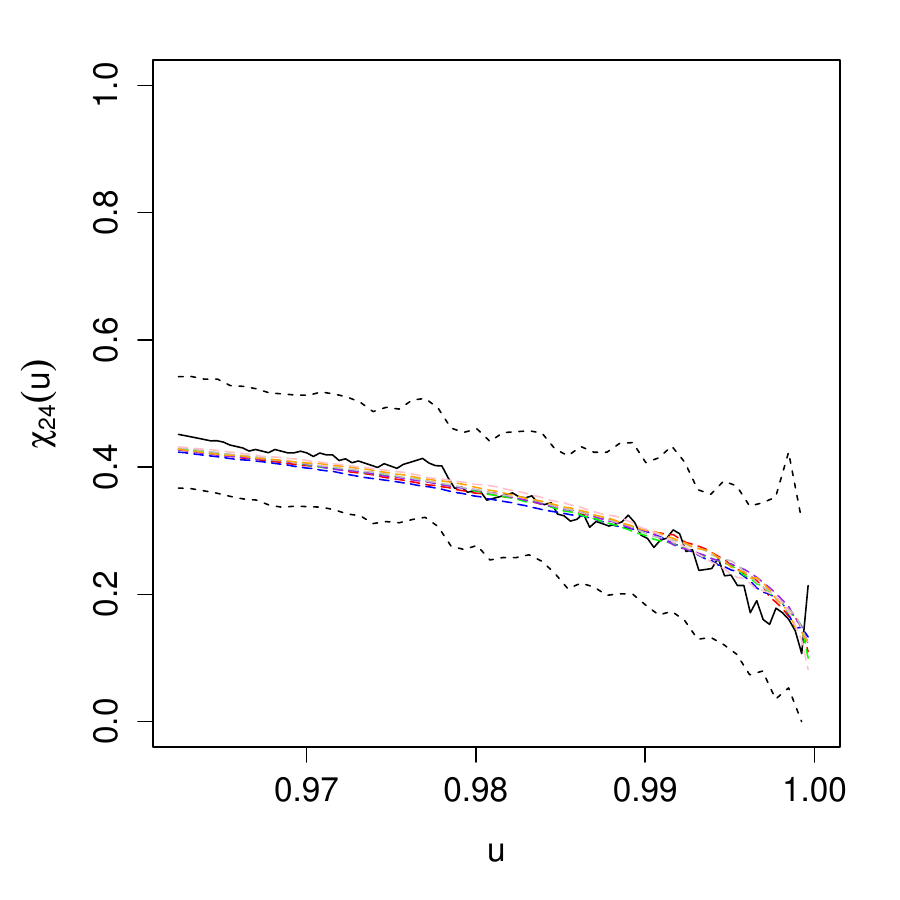}
\includegraphics[width=4.5cm]{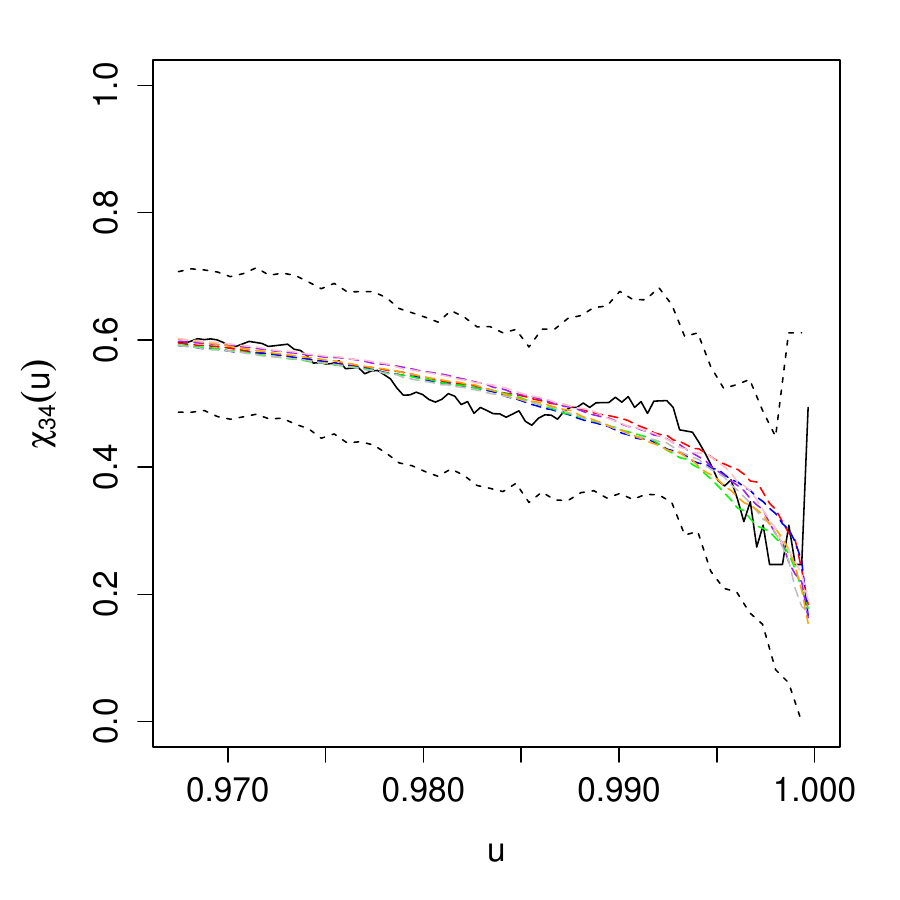}
\caption{\label{fig:chiplot_riverflow} Chi-plots for each pair where the solid black lines represent the empirical tail dependence coefficient $\hat{\chi}(u)$ with 95\% pointwise confidence intervals.
The plots are arranged in the order, from top to bottom and from left to right: $(X_1,X_2)$, $(X_1,X_3)$, $(X_{1},X_4)$, $(X_2,X_3)$, $(X_2,X_4)$, and $(X_3,X_4)$.
The colored dashed lines indicate the model-based estimates $\hat{\chi}_m(u)$ corresponding to different additive mixtures: ExpGa (red), ExpInv (blue), ExpRect (green), GaLog (purple), InvLogLog (grey), RectLog (orange), and MM (pink), respectively.
}
\end{figure}
To account for temporal dependence in the river flow data, we employ the block-bootstrap method to obtain 95\% confidence intervals for $\chi(u)$, using a block length of 20.
The choice is based on the typical persistence of river flows as well as the autocorrelation function.
All mixture models exhibit similar behavior in the joint tails for each variable pair.

For comparison, we fit the model of \cite{huser2019modeling} with a Gaussian dependence structure for $(W_1,W_2)^\top$ and the copula-based model of \cite{wadsworth2017modelling} with a beta distribution for $V$ to each pair of variables with exponential margins, using the same high threshold at the 0.95 quantile. 
The ML estimates under two censoring schemes for the model of \cite{huser2019modeling} and the max-censoring scheme for the model of \cite{wadsworth2017modelling} are summarized in Table~\ref{tab:X23_HWEV}.
Interestingly, the model of \cite{huser2019modeling} indicates asymptotic independence for all variable pairs, with $\hat{\delta}< 1/2$, while the copula model of \cite{wadsworth2017modelling} identifies asymptotic independence only for the pair $(X_1,X_2)$. In the simulation study we saw that in the boundary cases of weakly dependent AD and strongly dependent AI, the HW model tended to select AI, while the EV model tended to select AD; these tendencies are borne out in this example. While the geometric models struggled most in this region as well, they showed no overall bias either way.


\begin{table}[ht!]
    \centering
    \begin{tabular}[t]{|c||c|c|c|c|c|c|} 
      \hline
      & $(X_1,X_2)$ & $(X_1,X_3)$ & $(X_1,X_4)$ & $(X_2,X_3)$ & $(X_2,X_4)$ & $(X_3,X_4)$ \\
      \hline\hline
      $\hat{\delta}_{\text{HW-Cmax}}$ & 0.12 & 0.43 & 0.43 & 0.32 & 0.36 & 0.47 \\
      \hline
      $\hat{\delta}_{\text{HW-Cpar}}$ & 0.32 & 0.40 & 0.41 & 0.35 & 0.41 & 0.47 \\
      \hline
      $\hat{\xi}_{\text{\tiny EV}}$ & --0.03 & 0.16 & 0.18 & 0.13 & 0.01 & 0.24 \\
      \hline
      $\hat{\alpha}_{\text{\tiny EV}}$ & 7.72 & 18.30 & 16.84 & 8.09 & 3.92 & 5.01 \\
      \hline
      \end{tabular}
\caption{Summary of the ML estimates for all variable pairs under two copula-based models. 
`HW-Cmax' and `HW-Cpar' refer to the model of \cite{huser2019modeling} with a Gaussian dependence structure for $(W_1,W_2)$, using the max-censoring and partial-censoring schemes, respectively.
`EV' represents the copula-based model of \cite{wadsworth2017modelling} with a beta distribution for $V$.
The ML estimates of $\hat{\xi}_{EV}$ and $\hat{\alpha}_{EV}$ correspond to the generalized Pareto shape parameter and the beta shape parameter, respectively.
}
\label{tab:X23_HWEV}
\end{table}


\section{Conclusion}
\label{sec:Conclusion}
We have developed practical geometric criteria for identifying the class of extremal dependence under the truncated gamma approximation in Propositions~\ref{prop:pointyAD} and~\ref{prop:AI}.
We considered two types of flexible additively mixed gauge functions, which enable a smooth transition between asymptotic independent and dependent structures.
We derived explicit forms for the additively mixed gauge functions and investigated the properties of these additive mixture models.
The implementation of this geometric methodology with explicit forms is computationally efficient compared to other existing random scale-based methods, which involve integral calculations in the copula likelihood.
The performance of this geometric approach demonstrates reliable accuracy in classifying both asymptotic dependence and asymptotic independence, even in extreme scenarios such as weakly dependent AD and strongly dependent AI, as measured by sensitivity and specificity.
Therefore, the geometric approach with additively mixed gauge functions can be  a simple and efficient alternative for classifying extremal dependence classes.


\section*{Acknowledgments}
This work is supported by UK Engineering and Physical Sciences Research Council (EPSRC) grant EP/X010449/1.

\section*{Data and Code}
The code used in the analysis is available at \texttt{https://github.com/JeongjinLee88/geocrit}.
The river flow data analyzed in Section~\ref{sec:Application} is available in the supplementary material of \cite{wadsworth2024statistical}.

\setlength{\bibsep}{0pt}

\bibliographystyle{chicago}
\bibliography{biblio}

\newpage

\appendix
\section{Appendix}

The geometric criteria for identifying the class of extremal dependence between AD and AI are determined by the behavior of the gauge function along the boundary $\cbr{(x,y):\max(x,y)=1}$.
The gauge function values along this boundary are defined as $k(q)$ and $\tilde{k}(q)$ in \eqref{eq:k(q)}.
Watson's lemma~\citep{watson1918harmonic}, see e.g., \citet[Section 6.4]{bender2013advanced}, is useful for finding the asymptotic behavior of integrals that are in the Laplace form, which is used in proving Proposition~\ref{prop:pointyAD} and Proposition~\ref{prop:AI}.


\subsection{Lemmas and proofs for Section~\ref{sec:Theory}}


\subsubsection{Watson's lemma}
\begin{lemma}
\label{lemma:Watson}
    Let $0<u\le \infty$ be fixed.
    Assume $c(v)$ has the asymptotic series expansion $c(v)\sim v^\eta \sum_{n=0}^\infty a_n v^{\beta n}$ as $v\rightarrow 0^+$, where $\eta > -1$ and $\beta>0$.
    Suppose that $|c(v)| < K_1\exp(K_2v)$ for all $v>0$ and for some positive constants $K_1$ and $K_2$.
    Then, the following integral has the asymptotic expansion: for each integer $N\ge0,$
    \begin{equation*}
        \int_{0}^u c(v)\exp(-xv)\diff v = \sum_{n=0}^{N}\frac{a_n\Gamma(\eta+\beta n+1)}{x^{\eta+\beta n+1}}+\bigO(x^{-\eta-\beta (N+1)-1}),\quad x\rightarrow\infty,
    \end{equation*}
    where $\Gamma$ is the gamma function.
\end{lemma}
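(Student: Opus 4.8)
The plan is to split the integral at a fixed cutoff, treat the contribution near the origin via the asymptotic series, and control the remaining tail through the exponential growth bound. Fix an integer $N\ge 0$ and choose a cutoff $\delta\in(0,u)$ small enough that the hypothesised expansion furnishes a uniform remainder bound $|c(v)-\sum_{n=0}^{N}a_n v^{\eta+\beta n}|\le C v^{\eta+\beta(N+1)}$ for all $v\in(0,\delta]$ and some constant $C>0$; this is precisely what membership of the asymptotic series means, since truncating after the $N$-th term leaves an error of the order of the first omitted term. I would then write $\int_0^u = \int_0^\delta + \int_\delta^u$ and analyse the two pieces separately.

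For the tail, I would invoke $|c(v)|<K_1\exp(K_2 v)$: for $x>K_2$, one has $|\int_\delta^u c(v)e^{-xv}\diff v|\le K_1\int_\delta^\infty e^{-(x-K_2)v}\diff v = K_1(x-K_2)^{-1}e^{-(x-K_2)\delta}$, which decays faster than any negative power of $x$ and is therefore absorbed into the $\bigO(x^{-\eta-\beta(N+1)-1})$ error. This step is exactly where the possibility $u=\infty$ is accommodated: the global dominating bound guarantees both convergence and exponential smallness of the tail even on an unbounded domain.

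For the main piece, I would substitute the truncated expansion to obtain $\int_0^\delta c(v)e^{-xv}\diff v = \sum_{n=0}^{N} a_n\int_0^\delta v^{\eta+\beta n}e^{-xv}\diff v + \int_0^\delta r_N(v)e^{-xv}\diff v$, where $r_N$ is the remainder. Each leading integral is completed to $[0,\infty)$ — the discarded piece $\int_\delta^\infty v^{\eta+\beta n}e^{-xv}\diff v$ is again exponentially small — and evaluated by the substitution $t=xv$ to give $\Gamma(\eta+\beta n+1)/x^{\eta+\beta n+1}$. The remainder is handled using $|r_N(v)|\le C v^{\eta+\beta(N+1)}$ and extension to $[0,\infty)$, yielding $|\int_0^\delta r_N(v)e^{-xv}\diff v|\le C\Gamma(\eta+\beta(N+1)+1)\,x^{-\eta-\beta(N+1)-1}$, which is precisely the claimed order. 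Collecting the three contributions then gives the stated expansion.

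The main obstacle is bookkeeping rather than anything conceptual: I must ensure the remainder estimate inherited from the asymptotic series is genuinely uniform on the fixed interval $(0,\delta]$, and use $\eta>-1$ to guarantee integrability of $v^\eta$ near the origin so that each $\Gamma$-integral converges, while $\beta>0$ ensures the exponents increase and the terms are correctly ordered. Care is likewise needed to verify that all the exponentially small corrections are uniformly dominated, which again rests on the global bound $|c(v)|<K_1 e^{K_2 v}$.
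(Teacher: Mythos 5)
Your proof is correct, and it is the standard argument for Watson's lemma: split at a cutoff $\delta$, use the definition of the asymptotic series to get a uniform $O(v^{\eta+\beta(N+1)})$ remainder on $(0,\delta]$, complete the truncated power integrals to $[0,\infty)$ picking up gamma functions, and absorb all tail pieces (including the case $u=\infty$) into exponentially small terms via the bound $|c(v)|<K_1 e^{K_2 v}$. Note that the paper itself does not prove this lemma --- it states it as a classical result, citing Watson (1918) and Bender and Orszag (Section 6.4) --- so there is no in-paper proof to diverge from; your argument is precisely the one found in those references.
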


\begin{remark}
    We use Watson's lemma to obtain leading order terms of integrals.
    For this purpose, it is sufficient to have $c(v)= a_0 v^{\eta}+\bigO(v^{\eta+\beta})$ as $v\rightarrow 0^+,$ with $\eta>-1$ and $\beta>0$.
    Under the conditions of Lemma~\ref{lemma:Watson},
    we obtain the asymptotic expansion
    $\int_0^\infty c(v)\exp(-xv)\diff v= a_0\Gamma(\eta+1)x^{-\eta-1}+\bigO(x^{-\eta-\beta-1})$ as $x\rightarrow \infty.$
\end{remark}

\subsubsection{Proof for Proposition~\ref{prop:pointyAD}}
\label{appendix:proofThm3.2.}

\begin{proof}
We have
\begin{equation*}
\label{eq:f_tilde}
\begin{split}
    f^\star_{X,Y}(x,y)
    &=
    f_{R|W}\rbr{x+y\,\Bigg\rvert\, \frac{x}{x+y}}f_{W}\rbr{\frac{x}{x+y}}\rbr{\frac{1}{x+y}}\\
    &=
    \frac{\frac{1}{\Gamma(\lambda)}(x+y)^{-2}g_{\bX}\rbr{x,y}^\lambda \exp\cbr{-g_{\bX}(x,y)}}{\widebar{F}_{R|W}\rbr{r_\tau\rbr{\xw};\lambda,g_{\bX}\rbr{\xw,\yw}}}f_{W}\rbr{\frac{x}{x+y}},
\end{split}
\end{equation*}
defined on the support $\cbr{(x,y):x+y > r_\tau\rbr{\xw}}$, where $\widebar{F}_{R|W}\rbr{\,\cdot\,;\lambda,g_{\bX}\rbr{\frac{x}{x+y},\frac{y}{x+y}}}$ is the gamma survival function.

If $X \sim F_X^\star$, $Y \sim F_Y^\star$, and $(X,Y)$ have joint survival function $\widebar{F}^\star(x,y)=\prob(X>x,Y>y)$, then 
\[
\chi = \lim_{u \to 1} \frac{\widebar{F}^\star\rbr{F_X^{\star-1}(u),F_Y^{\star-1}(u)}}{1-u}.
\]
To shorten notation, let $x_u=F_{X}^{\star-1}(u)$, $y_u=F_{Y}^{\star-1}(u)$. Notice that we have the inequality
\[
\frac{\widebar{F}^\star\rbr{\max(x_u,y_u),\max(x_u,y_u)}}{1-u} \leq  \frac{\widebar{F}^\star(x_u,y_u)}{1-u} \leq \frac{\widebar{F}^\star\rbr{\min(x_u,y_u),\min(x_u,y_u)}}{1-u}.
\]
For the purposes of these calculations, and without loss of generality, suppose that $x_u \leq y_u$, at least for $u$ sufficiently close to 1, i.e., $x_u=\min(x_u,y_u)$ and $y_u=\max(x_u,y_u)$. This gives
\[
\frac{\widebar{F}^\star(y_u,y_u)}{1-F_{Y}^\star(y_u)} = \frac{\widebar{F}^\star(y_u,y_u)}{1-u} \leq  \frac{\widebar{F}^\star(x_u,y_u)}{1-u} \leq \frac{\widebar{F}^\star(x_u,x_u)}{1-u}=\frac{\widebar{F}^\star(x_u,x_u)}{1-F_{X}^\star(x_u)}.
\]
Consider the limits of the lower and upper bounds. Using the chain rule and L'H\^{o}pital's rule, these can be expressed as:
\begin{align*}
 \chi_{\text{lower}} =\lim_{u\to 1} \frac{\widebar{F}^\star(y_u,y_u)}{1-F_{Y}^\star(y_u)} &= \lim_{x \to \infty} \frac{\widebar{F}^\star(x,x)}{1-F_{Y}^\star(x)} \\&= \lim_{x \to \infty} -\widebar{F}^\star_1(x,x)/f_{Y}^\star(x) -\widebar{F}^\star_2(x,x)/f_{Y}^\star(x);\\
 \chi_{\text{upper}} =\lim_{u\to 1} \frac{\widebar{F}^\star(x_u,x_u)}{1-F_{X}^\star(x_u)} &= \lim_{x \to \infty} \frac{\widebar{F}^\star(x,x)}{1-F_{X}^\star(x)} \\&= \lim_{x \to \infty} -\widebar{F}^\star_1(x,x)/f_{X}^\star(x) -\widebar{F}^\star_2(x,x)/f_{X}^\star(x),
\end{align*}
where $\widebar{F}^\star_1(x,y) = \partial\widebar{F}^\star(x,y)/\partial x$, $\widebar{F}^\star_2(x,y) = \partial\widebar{F}^\star(x,y)/\partial y$. In terms of the joint density $f_{X,Y}^\star$, we have
\[
\widebar{F}^\star_1(x,x) = -\int_{x}^\infty f_{X,Y}^\star(x,y) \,\diff y, \qquad \widebar{F}^\star_2(x,x) = -\int_{x}^\infty f_{X,Y}^\star(y,x) \,\diff y,
\]
while the marginals are
\begin{align*}
 f_{X}^\star(x) &= \int_0^\infty f_{X,Y}^\star(x,y) \, \diff y = \int_0^x f_{X,Y}^\star(x,y) \, \diff y + \int_x^\infty f_{X,Y}^\star(x,y) \, \diff y\\
  f_{Y}^\star(x) &= \int_0^\infty f_{X,Y}^\star(y,x) \, \diff y = \int_0^x f_{X,Y}^\star(y,x) \, \diff y + \int_x^\infty f_{X,Y}^\star(y,x) \, \diff y.
\end{align*}
Due to the support of $f^\star_{X,Y}$, these limits of integration hold only for large enough $x$, but since we are interested in $x\to\infty$, this is sufficient.

Note that $x_u \leq y_u$ implies $F_X^\star(x) \ge F_Y^\star(x)$ and hence $\int_x^\infty f_{X}^\star(a) \,\diff a = 1-F_X^\star(x) \leq 1-F_Y^\star(x) = \int_x^\infty f_{Y}^\star(a) \,\diff a$, giving $f_{X}^\star(x) \leq f_{Y}^\star(x)$.
Returning to general notation around which of $x_u$ and $y_u$ is smaller, we have 
\begin{align*}
 \chi_{\text{lower}} &= \lim_{x \to \infty} -[\widebar{F}^\star_1(x,x)+\widebar{F}^\star_2(x,x)]/\max\{f_{X}^\star(x),f_{Y}^\star(x)\},\\
  \chi_{\text{upper}} &= \lim_{x \to \infty} -[\widebar{F}^\star_1(x,x)+\widebar{F}^\star_2(x,x)]/\min\{f_{X}^\star(x),f_{Y}^\star(x)\}.
 \end{align*}

We first focus on the integral 
$\int_{0}^{x}f^\star_{X,Y}(x,y)\diff y$.
With the substitution $y=xq$, where $q\in[0,1]$, the integral becomes
\begin{equation}
\label{eq:Integral_01}
\begin{split}
    \int_{0}^{x}f^\star_{X,Y}(x,y)\diff y
    &=
    \int_{0}^{1} x^{\lambda-1} \exp\cbr{-xg_{\bX}(1,q)}\frac{\frac{1}{\Gamma(\lambda)}(1+q)^{-2}g_{\bX}\rbr{1,q}^\lambda f_{W}\rbr{\frac{1}{1+q}}}{\widebar{F}_{R|W}\rbr{r_\tau\rbr{\frac{1}{1+q}};\lambda,g_{\bX}\rbr{\frac{1}{1+q},\frac{q}{1+q}}}} \diff q\\
    &=
    x^{\lambda-1} \int_{0}^{1} \exp\cbr{-xk(q)}b(q)\diff q
    =: \mathcal{I}_{[0,1]}(x),
\end{split}
\end{equation}
with 
\[
    k(q):=g_{\bX}(1,q)
\]
and
\[
    b(q):=\frac{\frac{1}{\Gamma(\lambda)}(1+q)^{-2}k(q)^\lambda f_W\rbr{\frac{1}{1+q}}}{\widebar{F}\rbr{r_\tau\rbr{\frac{1}{1+q}};\lambda,g_{\bX}\rbr{\frac{1}{1+q},\frac{q}{1+q}}}}.
\]
We decompose the integral $\mathcal{I}_{[0,1]}$ into the sum of disjoint integrals over sub-intervals $0=q_{1}<\cdots<q_{n}=1$ such that for any two adjacent points $q_i < q_{i+1}$, the function $k(q)$ satisfies the following conditions
\begin{enumerate}
    \item $k(q)$ is bijective on $[q_{i},q_{i+1}]$ with two cases:
    \begin{itemize}
    \item $k(q)$ is strictly decreasing on $[q_i,q_{i+1}]$ such that $k(q_i)>1$, $k(q_{i+1})=1$, and $k'(q_{i+1}^-)\ne 0$.
    \item $k(q)$ is strictly increasing on $[q_i,q_{i+1}]$ such that $k(q_i)=1$, $k(q_{i+1})>1$, and $k'(q_{i}^+)\ne 0$.
    \end{itemize}
    \item Non-injective but surjective case: $k(q)$ takes a constant value greater than $1$ over $[q_i,q_{i+1}]$, implying that $k(q)$ does not intersect the boundary set $\cbr{(x,y):\max(x,y)=1}.$
\end{enumerate} 
To classify these cases, recall the three disjoint index sets with $A\cup B \cup C=\{1,\ldots,n-1\}$ defined in equation~\eqref{eq:indexSets}.
The integral can be split as follows
\begin{align*}
    \mathcal{I}_{[0,1]}(x)
    &=
    \sum_{i\in A}\int_{q_i}^{q_{i+1}}x^{\lambda-1}\exp\cbr{-xk(q)}b(q)\diff q
    +
    \sum_{i\in B}\int_{q_i}^{q_{i+1}}x^{\lambda-1}\exp\cbr{-xk(q)}b(q)\diff q\\
    &+
    \sum_{i\in C}\int_{q_i}^{q_{i+1}}x^{\lambda-1}\exp\cbr{-xm}b(q)\diff q. 
\end{align*}
It suffices to consider a sub-integral of the form for each index set
\begin{equation*}
    \mathcal{I}_{i}(x):=x^{\lambda-1}\int_{q_i}^{q_{i+1}}\exp\cbr{-xk(q)}b(q)\diff q,\quad i\in\cbr{1,\ldots,n-1}.
\end{equation*}
For large $x$, the exponential term in the integral dominates, with $\exp\cbr{-xk(q)}$ attaining its maximum where $k(q)$ is minimized, i.e., at the boundary intersection point.
Without loss of generality, the integral can be expressed in a form suitable for application of Watson's Lemma (Lemma~\ref{lemma:Watson}).

For $i\in A,$ let $v=k(q)-1$, where $v$ has an inverse $q=k^{-1}(v+1)$ on $[q_i,q_{i+1}]$ as $k(q)$ is strictly decreasing. We can express the integral as
\begin{equation}
    \mathcal{I}_{i}(x)=-x^{\lambda-1}\exp(-x)\int_{0}^{k(q_i^+)-1}c(v)\exp\cbr{-xv}\diff v, \label{eq:Ii}
\end{equation}
with $c(v):=\frac{b\rbr{k^{-1}(v+1)}}{k'\rbr{k^{-1}(v+1)}}$, which is given in full as
\begin{equation}
\label{eq:c(v)}
    c(v)=\frac{1}{k'(k^{-1}(v+1))}\frac{(v+1)^\lambda\frac{1}{\Gamma(\lambda)}\rbr{1+k^{-1}(v+1)}^{-2}f_W\rbr{\frac{1}{1+k^{-1}(v+1)}}}{\widebar{F}\rbr{r_{\tau}\rbr{\frac{1}{1+k^{-1}(v+1)}};\lambda,g_{\bX}\rbr{\frac{1}{1+k^{-1}(v+1)},\frac{k^{-1}(v+1)}{1+k^{-1}(v+1)}}}}.
\end{equation}
By Assumptions~\ref{ass:kq} and~\ref{ass:b}, the function $c$ has one-sided derivatives of up to order two and is bounded. Since $\lim_{v\rightarrow k(q_{i+1}^-)-1}c(v)=\frac{b(q_{i+1})}{k'(q_{i+1}^-)}\ne 0$, we thus have
\[
c(v) = \frac{b(q_{i+1})}{k'(q_{i+1}^-)} + c'(0^+)v+ \bigO(v^2), \qquad v\to 0^+.
\]
Applying Watson's lemma to $\mathcal{I}_i(x)$ then gives the asymptotic behavior of the integral. Specifically,
\begin{equation*}
    \label{eq:Watson}
    \int_{0}^{k(q_i^+)-1}c(v)\exp\cbr{-xv}\diff v = \rbr{\frac{b(q_{i+1}^-)/k'(q_{i+1}^-)}{x}} + \bigO(x^{-2}),\quad x\rightarrow\infty.
\end{equation*}
Consequently, the leading-order term of the integral $\mathcal{I}_{i}(x)$ is
\begin{align*}
    \mathcal{I}_i(x)
    =-x^{\lambda-2}\exp(-x)\rbr{\frac{b(q_{i+1}^-)}{k'(q_{i+1}^-)} +\bigO(x^{-1})}.
\end{align*}
Similarly to the case $i\in A$, for $i\in B$, the first-order asymptotic term of the integral is
\begin{align*}
    \mathcal{I}_{i}(x)
    &=x^{\lambda-2}\exp(-x)\rbr{\frac{b(q_{i}^+)}{k'(q_{i}^+)} +\bigO(x^{-1})}.
\end{align*}
For $i\in C$ and for fixed $m>1$, the sub-integral simplifies to
\begin{equation*}
    \mathcal{I}_{i}(x)
    =
    x^{\lambda-1}\exp\cbr{-mx}\int_{q_i}^{q_{i+1}}b(q)\diff q,
\end{equation*}
with $\int_{q_i}^{q_{i+1}}b(q)\diff q < \infty.$
Putting this together, we get
\begin{align*}
\mathcal{I}_{[0,1]}(x) \sim x^{\lambda-2}\exp(-x) \left[- \sum_{i \in A} \frac{b(q_{i+1})}{k'(q_{i+1}^-)} +\sum_{i\in B} \frac{b(q_{i})}{k'(q_{i}^+)} \right].
\end{align*}

We now evaluate the integral $\int_{x}^{\infty} f^\star_{X,Y}(x,y)\diff y$.
By substituting $y=qx$, where $q\ge 1$, we have
\begin{equation*}
\begin{split}
    \int_{x}^{\infty} f^\star_{X,Y}(x,y)\diff y
    =
    x^{\lambda-1}\int_{1}^{\infty} \exp\cbr{-xk(q)}b(q) \diff q=:\mathcal{I}_{[1,\infty)}(x),
\end{split}
\end{equation*}
where $k(q)$ and $b(q)$ are the same functions as those defined inside the integral~\eqref{eq:Integral_01}. Let again $v=k(q)-1$ for $q\ge 1$ and hence $v\ge 0$.
The integral is then expressed as
\begin{equation*}
    \mathcal{I}_{[1,\infty)}(x)
    =
    x^{\lambda-1}\exp(-x)\int_{0}^{\infty}\frac{b\rbr{k^{-1}(v+1)}}{k'(k^{-1}(v+1))}\exp(-xv)\diff v,
\end{equation*}
Note that the dominant contribution to the full integral $\mathcal{I}_{[1,\infty)}$ arises at the minimum $v=0$, where the exponential decay is slowest.
As such, the integral evaluated at $v=0$ determines the leading-order asymptotic contribution. For some choices of $g_{\bX}$, equivalently $k$, it may only make sense to define the change of variables $v=k(q)-1$ on an interval $[1,q_{n+1}]$, with $q_{n+1}>1$, on which $k$ is monotonic. However, since $k(q)>q$, we do not have to worry about contributions to the asymptotic form from integrals over any other intervals. Without loss of generality, we therefore analyze the integral $\mathcal{I}_{[1,\infty)}$ over the entire interval $q\in[1,\infty)$ as a whole. Since $\lim_{v\rightarrow k(1^+)-1}c(v)=\frac{b(1)}{k'(1^+)}\ne 0$, Watson's lemma again yields 
\begin{equation*}
\label{eq:int_1toinfity}
    \mathcal{I}_{[1,\infty)}(x)
    =x^{\lambda-2}\exp(-x)\rbr{\frac{b(1^+)}{k'(1^+)} +\bigO(x^{-1})},
\end{equation*}
where $k'(1^+)=\lim_{\epsilon \to 0^+} (k(1+\epsilon)-k(1))/\epsilon \geq (1+\epsilon-1)/\epsilon=1$, since $k(1)=1$ and $k(q) \geq q$.
Applying the same arguments to $\tilde{\mathcal{I}}_{[0,1]}(x) :=\int_0^x f_{X,Y}^\star(y,x)\,\diff y$ and $\tilde{\mathcal{I}}_{[1,\infty)}(x) :=\int_x^\infty f_{X,Y}^\star(y,x)\,\diff y$, with $f_{X,Y}^\star(xq,x) = x^{\lambda-2}\exp\{-x\tilde{k}(q)\}\tilde{b}(q)$, we get
\begin{align}
\chi_{\text{lower}} &= \lim_{x\to\infty} \frac{\mathcal{I}_{[1,\infty)}(x)+\tilde{\mathcal{I}}_{[1,\infty)}(x)}{\max\left[\mathcal{I}_{[0,1]}(x)+\mathcal{I}_{[1,\infty)}(x),\tilde{\mathcal{I}}_{[0,1]}(x)+\tilde{\mathcal{I}}_{[1,\infty)}(x)\right]} \notag \\
&=\frac{b(1)/k'(1^+) + \tilde{b}(1)/\tilde{k}'(1^+)}{ \max\left[- \sum_{i \in A} \frac{b(q_{i+1})}{k'(q_{i+1}^-)} +\sum_{i\in B} \frac{b(q_{i})}{k'(q_{i}^+)} + \frac{b(1)}{k'(1^+)} , - \sum_{i \in \tilde{A}} \frac{\tilde{b}(\tilde{q}_{i+1})}{\tilde{k}'(\tilde{q}_{i+1}^-)} +\sum_{i\in\tilde{B}} \frac{\tilde{b}(\tilde{q}_{i})}{\tilde{k}'(\tilde{q}_{i}^+)} + \frac{\tilde{b}(1)}{\tilde{k}'(1^+)} \right]}\notag \\
\chi_{\text{upper}} &= \lim_{x\to\infty} \frac{\mathcal{I}_{[1,\infty)}(x)+\tilde{\mathcal{I}}_{[1,\infty)}(x)}{\min\left[\mathcal{I}_{[0,1]}(x)+\mathcal{I}_{[1,\infty)}(x),\tilde{\mathcal{I}}_{[0,1]}(x)+\tilde{\mathcal{I}}_{[1,\infty)}(x)\right]} \label{eq:chiupper}\\
&=\frac{b(1)/k'(1^+) + \tilde{b}(1)/\tilde{k}'(1^+)}{ \min\left[- \sum_{i \in A} \frac{b(q_{i+1})}{k'(q_{i+1}^-)} +\sum_{i\in B} \frac{b(q_{i})}{k'(q_{i}^+)} + \frac{b(1)}{k'(1^+)} , - \sum_{i \in \tilde{A}} \frac{\tilde{b}(\tilde{q}_{i+1})}{\tilde{k}'(\tilde{q}_{i+1}^-)} +\sum_{i\in\tilde{B}} \frac{\tilde{b}(\tilde{q}_{i})}{\tilde{k}'(\tilde{q}_{i}^+)} + \frac{\tilde{b}(1)}{\tilde{k}'(1^+)} \right]}.\notag
\end{align}

\end{proof}

\subsubsection{Proof for Proposition~\ref{prop:AI}.}
\label{appendix:proofAI}

\begin{proof}
    We follow a similar argument as in Proposition~\ref{prop:pointyAD}, now showing that $\chi_{\text{upper}} = 0$, implying $\chi=0$.
    The case of interest is $k(1)=\tilde{k}(1)=1$.
    In the first two cases the limit set implies AI; in the third case, AI is implied by the form of the density $f_W(w)$ in conjunction with the limit set shape.

    In all cases we consider $\mathcal{I}_{[0,1]}$ and $\mathcal{I}_{[1,\infty)}$, with the same arguments applying to $\tilde{\mathcal{I}}_{[0,1]}$ and $\tilde{\mathcal{I}}_{[1,\infty)}$. Furthermore, $\mathcal{I}_{[1,\infty)}$ has the same form identified in the proof of Proposition~\ref{prop:pointyAD}, so in each of the three cases we focus on the form of $\mathcal{I}_{[0,1]}$.
        
    \begin{enumerate}
    \item Assumption~\ref{ass:b} holds and there exist sets $D,\Tilde{D}\subseteq[0,1]$ (not necessarily identical) with $|D|$, $|\tilde{D}|>0$ such that $k(q)=1$ for all $q \in D$, and $\tilde{k}(q)=1$ for all $q \in \Tilde{D}$.
    \end{enumerate}

    Consider the integral $\mathcal{I}_{[0,1]}(x)=\int_{0}^{1}x^{\lambda-1}\exp\rbr{-xk(q)}b(q)\diff q$. 
    The sub-integral over $D\subseteq[0,1]$, where $k(q)=\1_{D}(q)$, is then
    \begin{align*}
        \mathcal{I}_{D}(x)
        :=
        \int_{D}x^{\lambda-1}\exp(-x)b(q)\diff q 
        =
        x^{\lambda-1}\exp(-x)\int_{D}b(q)\diff q.
    \end{align*}
    From Proposition~\ref{prop:pointyAD}, we also have 
    \begin{align*}
        \text{For}\,\,i\in A\,\,\text{and}\,\,[q_i,q_{i+1}]\subset[0,1]\setminus D,\quad \mathcal{I}_i(x)
        &=
        -x^{\lambda-2}\exp(-x)\rbr{b(q_{i+1}^-)/k'(q_{i+1}^-) +\bigO(x^{-1})}\\
        \text{For}\,\,i\in B\,\,\text{and}\,\,[q_i,q_{i+1}]\subset[0,1]\setminus D,\quad\mathcal{I}_i(x)
        &=
        x^{\lambda-2}\exp\rbr{-x}\rbr{b(q_{i}^+)/k'(q_i^+)+\bigO(x^{-1})}\\
        \text{For}\,\,i\in C\,\,\text{and}\,\,[q_i,q_{i+1}]\subset[0,1]\setminus D,\quad\mathcal{I}_i(x)
        &=
        x^{\lambda-1}\exp(-mx)\int_{q_i}^{q_{i+1}}b(q)\diff q,\quad m>1.
    \end{align*}
Putting these together, $\mathcal{I}_{[0,1]}(x) \sim x^{\lambda-1}\exp(-x)\int_{D}b(q)\diff q$, and similarly $\tilde{\mathcal{I}}_{[0,1]}(x) \sim  x^{\lambda-1}\exp(-x)\int_{D}\tilde{b}(q)\diff q$. Therefore taking the limit in equation~\eqref{eq:chiupper}, $\chi_{\text{upper}}=0$.

\begin{enumerate}[resume]
        \item Assumption~\ref{ass:b} holds and there exist boundary intersection points $q_i,\tilde{q}_i\in[0,1]$ (not necessarily identical), with $b(q_i), \tilde{b}(\tilde{q}_i) \neq 0$, such that 
        \begin{align*}
        \begin{cases}
            k'(q_{i+1}^-)=0,\quad \tilde{k}'(\tilde{q}_{i+1}^-)=0,\quad i \in A,\\
            k'(q_{i}^+)=0,\quad \tilde{k}'(\tilde{q}_i^+)=0,\quad i\in B,
        \end{cases}
        \end{align*}
        implying vertical and horizontal tangents along the boundary $\cbr{(x,y):\max(x,y)=1}$. 
\end{enumerate}
Again we focus on $\mathcal{I}_{[0,1]}$, and specifically $\mathcal{I}_i(x)$ as defined in equation~\eqref{eq:Ii}, and without loss of generality, consider $i\in A$. 
We have
\begin{equation*}
        \mathcal{I}_{i}(x)=-x^{\lambda-1}\exp(-x)\int_{0}^{k(q_i^+)-1}c(v)\exp\cbr{-xv}\diff v, \qquad c(v)=\frac{b(k^{-1}(v+1))}{k'(k^{-1}(v+1))},
\end{equation*}
where $c(v)$ is defined in~\eqref{eq:c(v)}.
When $v \to 0^+$, $b(k^{-1}(v+1)) \to b(q_{i+1})\in (0,\infty)$. Using the asymptotic relation given in Assumption~\ref{ass:kq}, we have \[
1/k'(k^{-1}(v+1))=-\rho_k^{-1} a_k^{-1/\rho_k} v^{1/\rho_k-1}\rbr{1-\frac{b_k(\epsilon+1)}{a_k\rho_k}(v/a_k)^{\epsilon/\rho_k}+\bigO(v^{2\epsilon/\rho_k})},\quad v\rightarrow0^+,\]
and thus
 \begin{equation*}
     c(v)=-b(q_{i+1})\rho_k^{-1}a_k^{-1/\rho_k}v^{1/\rho_k-1}\rbr{1-\frac{b_k(\epsilon+1)}{a_k \rho_k}(v/a_k)^{\epsilon/\rho_k}+\bigO(v^{2\epsilon/\rho_k})},\quad v\rightarrow0^+.
 \end{equation*}

We can apply Watson's lemma with $\eta=1/\rho_k-1 \in (-1,0)$, and $a_0 =-b(q_{i+1})a_k^{-1/\rho_k}\rho_k^{-1} \neq 0$. This yields
\begin{equation*}
    \mathcal{I}_i(x)=-x^{\lambda-1/\rho_{k}-1}\exp(-x)\rbr{a_0\Gamma(1/\rho_{k})+\bigO\rbr{x^{-\epsilon/\rho_k}}},
\end{equation*}
for some $\epsilon/\rho_k>0$.
If there are multiple intersection points $q_i$ with different indices $\rho_{k,i}>1$, then $\mathcal{I}_{[0,1]}(x) \sim -x^{\lambda-\eta-2}\exp(-x)[a_0\Gamma(\eta+1)]$, with $\eta=1/\max_i(\rho_{k,i})-1 \in (-1,0)$, and $a_0$ is the constant corresponding to the $q_i$ with largest $\rho_{k,i}$, or the sum of all such constants in the case of ties. 
We similarly have $\tilde{\mathcal{I}}_{[0,1]}(x) \sim -x^{\lambda-\tilde{\eta}-2}\exp(-x)\tilde{a}_0\Gamma(\tilde{\eta}+1)$ with $\tilde{\eta}=1/\max_i(\rho_{\tilde{k},i})-1 \in (-1,0)$, and $\tilde{a}_0$ defined analogously to $a_0$. Using limit~\eqref{eq:chiupper}, this again yields $\chi_{\text{upper}}=0$.

\begin{enumerate}[resume]
        \item  If $k(0)=\tilde{k}(0)=1$, Assumption~\ref{ass:b} holds, except $b$, $\tilde{b}$ are infinite at $0$, with $b(q)\sim a_b q^{\rho_b}$
        as $q \to 0$, with $a_b>0$ and $\rho_b \in (-1,0)$ (similarly for $\tilde{b}$).
\end{enumerate}
In this final case we consider $i=1\in B$ such that $q_1=0$. The integral of interest is
 \begin{equation*}
        \mathcal{I}_{i}(x)=x^{\lambda-1}\exp(-x)\int_{0}^{k(q_{i+1}^-)-1}c(v)\exp\cbr{-xv}\diff v, \qquad c(v)=\frac{b(k^{-1}(v+1))}{k'(k^{-1}(v+1))}.
\end{equation*}
By assumption, $b(q) \sim a_b q^{\rho_b}, q \to 0$, with $\rho_b \in (-1,0)$. 
Using Assumption~\ref{ass:kq}, we have $v=k(q)-1=a_{k}q^{\rho_k}+\bigO(q^{\rho_k+\epsilon})$ with index $\rho_k \geq 1$ for some $\epsilon>0$: if $\rho_k=1$ then $a_k=k'(0^+)>0$, i.e., there may not be a vanishing first derivative. 
This yields 
\[
    c(v)= a_b a_k^{-(\rho_b+1)/\rho_k}\rho_k^{-1} v^{(\rho_b+1)/\rho_k-1}\rbr{1-\frac{b_k(\rho_b+\epsilon+1)}{a_k\rho_k}(v/a_k)^{\epsilon/\rho_k}+\bigO\rbr{v^{2\epsilon/\rho_k}}},\quad v\rightarrow 0^+,
\]
where $(\rho_b+1)/\rho_k-1 \in (-1,0)$. Watson's lemma again yields 
\begin{equation*}
    \mathcal{I}_i(x)=-x^{\lambda-\eta-2}\exp(-x)\rbr{a_0\Gamma(\eta+1)+\bigO\rbr{x^{-\epsilon/\rho_k}}},
\end{equation*}
with $\eta=(\rho_b+1)/\rho_k-1\in(-1,0)$, $a_0=a_b a_k^{-(\rho_b+1)/\rho_k}\rho_k^{-1}$, and the remainder of the argument the same as in part 2.
\end{proof}

\subsection{Lemmas and Proofs for Section~\ref{sec:additive}}

In the following lemmas and proofs, we analyze gauge functions as defined in~\eqref{eq:gauge_HW}, constructed through an additive stochastic representation. In Section~\ref{sec:phi}, we give a key lemma on the behavior of partial derivatives of convex, symmetric $g_{\bV}$, which is useful throughout. In Section~\ref{sec:szero}, we outline values of $\gamma$ for which the objective function $h$ in equation~\eqref{eq:objectiveft} is always minimized at zero. In Section~\ref{sec:prooflemx0y0} we prove Lemma~\ref{lemma:point(x_0,y_0)}, while in Section~\ref{sec:proofproptangent}, we prove Proposition~\ref{prop:tangentline}.

\subsubsection{Behavior of the sum of partial derivatives}
\label{sec:phi}
Define the sum of the partial derivatives of $g_{\bV}$ as 
\begin{equation*}
\label{eq:partialsum}
    H(u,v)=g_{V,1}(u,v)+g_{V,2}(u,v),\quad (u,v)\in[0,\infty)^2,
\end{equation*}
which is homogeneous of order 0. Because of this, it depends only on the ratio $z=u/v$, and we therefore define
    \begin{equation}
    \label{eq:phi(z)}
     \phi(z):=H(z,1)=g_{V,1}(z,1)+g_{V,2}(z,1)\quad z\ge0.   
    \end{equation}
The following lemma describes the behavior of $\phi$ and is useful in the proof of Proposition~\ref{prop:tangentline} .
\begin{lemma}
    \label{lem:phi}
 For convex and twice (piecewise) differentiable $g_{\bV}$, the function $\phi(z)$ defined in equation~\eqref{eq:phi(z)} is maximized at $z=1$. It is non-decreasing on the range $[0,1]$, and non-increasing on the range $[1,\infty)$.   
\end{lemma}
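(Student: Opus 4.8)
The plan is to reduce the claim to a one-line sign analysis of $\phi'$ obtained from the homogeneity of $g_{\bV}$, and then to upgrade the conclusion to the merely piecewise-differentiable setting using convex analysis. First I would record that, since $g_{\bV}$ is homogeneous of degree $1$, its gradient $\nabla g_{\bV}=(g_{V,1},g_{V,2})$ is homogeneous of degree $0$; this is exactly what makes $H$ depend only on $z=u/v$, so that $\phi(z)=\nabla g_{\bV}(z,1)\cdot(1,1)$ is well defined. On any region where $g_{\bV}$ is $C^2$, I would differentiate Euler's identity $x g_{V,1}+y g_{V,2}=g_{\bV}$ with respect to $x$ to obtain the second-order relation $x g_{V,11}+y g_{V,21}=0$, equivalently $g_{V,21}(z,1)=-z\,g_{V,11}(z,1)$. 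Since differentiating $\phi$ directly gives $\phi'(z)=g_{V,11}(z,1)+g_{V,21}(z,1)$, substitution yields the key identity
\[
\phi'(z)=(1-z)\,g_{V,11}(z,1).
\]

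Convexity of $g_{\bV}$ forces its Hessian to be positive semidefinite, so $g_{V,11}(z,1)\ge 0$; hence $\phi'(z)$ has the sign of $1-z$. This gives $\phi$ non-decreasing on $[0,1]$ and non-increasing on $[1,\infty)$, with its maximum at $z=1$, on each smooth piece. I would also note the reflection symmetry $\phi(z)=\phi(1/z)$, which follows by combining the symmetry $g_{V,1}(x,y)=g_{V,2}(y,x)$ with degree-$0$ homogeneity; this halves the work, since non-decreasingness on $(0,1]$ transfers automatically to non-increasingness on $[1,\infty)$.

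The main obstacle is that $g_{\bV}$ is only piecewise twice differentiable, so $\phi$ may jump at the finitely many kink points (as for the rectangular gauge), and the pointwise identity above says nothing about the direction of these jumps. To resolve this cleanly I would switch to convex analysis: a convex, positively homogeneous degree-$1$ function is sublinear, hence the support function of the convex set $K=\{(a,b): a x + b y\le g_{\bV}(x,y)\ \forall (x,y)\}$, and wherever $\nabla g_{\bV}(z,1)$ exists it equals the exposed point $\argmax_{(a,b)\in K}(az+b)=:(a^\star(z),b^\star(z))$. Two consequences then finish the proof without any smoothness. First, $(a^\star(z),b^\star(z))\in K$ gives $\phi(z)=a^\star(z)+b^\star(z)\le\max_{(a,b)\in K}(a+b)=\phi(1)$, so the maximum is attained at $z=1$. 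Second, for $z_1<z_2\le 1$, adding the optimality inequalities of the two maximizers shows $a^\star$ is non-decreasing, while the optimality inequality at $z_2$ alone yields $\phi(z_2)-\phi(z_1)\ge (a^\star(z_2)-a^\star(z_1))(1-z_2)\ge 0$. This establishes the monotonicity directly and robustly, with the smooth computation of the first paragraph serving as the transparent special case.
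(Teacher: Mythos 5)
Your proposal is correct, and its first half coincides exactly with the paper's proof: the paper also computes $\phi'(z)=g_{V,11}(z,1)+g_{V,21}(z,1)$, uses Euler's theorem (degree-$0$ homogeneity of the gradient) to get $g_{V,21}(z,1)=-z\,g_{V,11}(z,1)$, hence $\phi'(z)=(1-z)g_{V,11}(z,1)$, and concludes from convexity that $\phi'$ has the sign of $1-z$. Where you genuinely depart is in handling the merely piecewise differentiable case, and this is a real improvement rather than a cosmetic one: the paper's derivative-sign argument only controls $\phi$ on each smooth piece and is silent about the direction of jumps of $\phi$ at kinks (which do occur, e.g.\ for the rectangular gauge $\phi$ jumps from $0$ to $2/(2-\theta)$ at $z=1-\theta$), so strictly speaking the paper's proof leaves the "piecewise" part of the statement unjustified. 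Your support-function argument closes this: writing $g_{\bV}$ as the support function of $K$, identifying $\nabla g_{\bV}(z,1)$ with the exposed point $(a^\star(z),b^\star(z))$, and adding the two optimality inequalities to get monotonicity of $a^\star$ and then $\phi(z_2)-\phi(z_1)\ge (a^\star(z_2)-a^\star(z_1))(1-z_2)\ge 0$ is a clean, smoothness-free proof that the jumps are automatically correctly signed; the bound $\phi(z)\le \sup_{(a,b)\in K}(a+b)=g_{\bV}(1,1)$ likewise gives the maximality claim wherever $\phi$ is defined. Two small caveats: your reflection shortcut $\phi(z)=\phi(1/z)$ uses symmetry of $g_{\bV}$, which the surrounding section assumes but the lemma statement does not --- harmless, since your two-inequality argument mirrors directly to $[1,\infty)$; and at a diagonal kink (e.g.\ the logistic gauge) $\phi(1)$ itself is undefined, so the maximum should be read as $g_{\bV}(1,1)$, a subtlety that equally affects the paper's formulation.
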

\begin{proof}
The first derivative is $\phi'(z)=g_{V,11}(z,1)+g_{V,21}(z,1).$ By Euler's homogeneous function theorem, we also have that
    \begin{align*}
    zg_{V,11}(z,1)+g_{V,21}(z,1) &= 0 &\Rightarrow &&  g_{V,21}(z,1) = -zg_{V,11}(z,1),
    \end{align*}
so that 
    \begin{align*}
 \phi'(z) = g_{V,11}(z,1)(1-z),
    \end{align*}
and hence $\phi'(1)=0$. Convexity of $g_{V}$ implies $g_{V,11}(z,1) \geq 0$.  Therefore $\phi'(z) \geq 0$ for $z\le 1$ and $\phi'(z) \leq 0$ for $z\ge 1$, from which it follows that $\phi(z)\leq \phi(1)$ for all $z \geq 0$.
\end{proof}

\subsubsection{When the minimizer $\hat{s}=0$}
\label{sec:szero}
\begin{lemma}
\label{lem:mins0}
    For convex and twice (piecewise) differentiable $g_{\bV}$, when $\gamma \leq 1/g_{\bV}(1,1)$, the function $h(s) = s + g_{\bV}(x-\gamma s, y- \gamma s)$ is minimized over $s\in[0,\min(x,y)/\gamma]$, at $\hat{s}=0$.
\end{lemma}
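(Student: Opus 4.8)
The plan is to show that $h$ is non-decreasing on $[0,\min(x,y)/\gamma]$, from which it follows immediately that the minimizer is $\hat{s}=0$. First I would differentiate $h$ via the chain rule. Since the map $s \mapsto (x-\gamma s,\, y-\gamma s)$ is affine with velocity $(-\gamma,-\gamma)$, this gives
\begin{equation*}
h'(s) = 1 - \gamma\bigl[g_{V,1}(x-\gamma s, y-\gamma s) + g_{V,2}(x-\gamma s, y-\gamma s)\bigr] = 1 - \gamma\, H(x-\gamma s, y-\gamma s),
\end{equation*}
where $H$ is the sum of partial derivatives introduced in Section~\ref{sec:phi}.

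The crux is then a uniform upper bound on $H$. Because $H$ is homogeneous of order $0$, we have $H(x-\gamma s, y-\gamma s)=\phi(z)$ with $z=(x-\gamma s)/(y-\gamma s)$ and $\phi$ as in~\eqref{eq:phi(z)}. By Lemma~\ref{lem:phi}, $\phi$ attains its maximum at $z=1$, so $H(x-\gamma s, y-\gamma s)\le \phi(1)=g_{V,1}(1,1)+g_{V,2}(1,1)$. Applying Euler's homogeneous function theorem to $g_{\bV}$ (homogeneous of order $1$) at the point $(1,1)$ yields $g_{V,1}(1,1)+g_{V,2}(1,1)=g_{\bV}(1,1)$. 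Hence $H\le g_{\bV}(1,1)$ everywhere, and substituting back gives
\begin{equation*}
h'(s)\ge 1-\gamma\, g_{\bV}(1,1).
\end{equation*}
Under the hypothesis $\gamma\le 1/g_{\bV}(1,1)$, the right-hand side is non-negative, so $h'(s)\ge 0$ throughout the interval and $h$ is minimized at $\hat{s}=0$.

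The only point requiring care—and the main, if mild, obstacle—is that $g_{\bV}$ is assumed merely piecewise differentiable, so $h'$ may fail to exist at finitely many points. I would handle this by exploiting that the affine substitution preserves convexity: since $g_{\bV}$ is convex and $s$ enters linearly, $h$ is convex in $s$. For a convex function it suffices to check that the one-sided right derivative at the left endpoint is non-negative, and the bound above applies verbatim to $h'(0^+)$ using one-sided partial derivatives; convexity of $h$ then propagates $h'(s)\ge 0$ to the whole interval. Equivalently, the bound $H\le g_{\bV}(1,1)$ holds on each smooth piece, and continuity of $h$ together with a non-negative derivative on each piece yields global monotonicity, again forcing the minimum to occur at $\hat{s}=0$.
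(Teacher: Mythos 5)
Your proposal is correct and follows essentially the same route as the paper's proof: the same chain-rule computation $h'(s)=1-\gamma\phi\bigl((x-\gamma s)/(y-\gamma s)\bigr)$, the same appeal to Lemma~\ref{lem:phi} to bound $\phi$ by $\phi(1)$, and the same Euler identity $\phi(1)=g_{\bV}(1,1)$. If anything, your version is slightly tidier: by using the global maximum of $\phi$ at $z=1$ you avoid the paper's WLOG reduction to $y\ge x$ and its three-case analysis of when $h'$ can vanish (the monotonicity conclusion covers non-unique minimizers automatically), and your convexity argument for the piecewise-differentiable case makes explicit a point the paper handles only implicitly there (it proves convexity of $h$ separately, in Lemma~\ref{lem:hconvex}, for use in Proposition~\ref{prop:tangentline}).
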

\begin{proof}
The derivative of $h$ is
\begin{align*}
h'(s) = 1-\gamma\left(g_{V,1}(x-\gamma s,y-\gamma s)+g_{V,2}(x-\gamma s,y-\gamma s)\right) = 1-\gamma \phi\left(\frac{x-\gamma s}{y-\gamma s}\right). 
\end{align*}
Suppose $y \geq x$, so that $(x-\gamma s)/(y-\gamma s) \leq 1$. We have that $\phi(1) = g_{V,1}(1,1)+g_{V,2}(1,1) = g_{\bV}(1,1) \leq 1/\gamma$. By Lemma~\ref{lem:phi}, $\phi$ is non-decreasing over $[0,1]$, so $\phi\left(\frac{x-\gamma s}{y-\gamma s}\right) \leq 1/\gamma$ and hence $h'(s) \geq 0$. There are three cases to consider:
\begin{enumerate}
    \item[(i)] If $\phi(1)=g_{\bV}(1,1)<1/\gamma$, then $\phi\left(\frac{x-\gamma s}{y-\gamma s}\right) < 1/\gamma$, so $h'(s)>0$ on $[0,\min(x,y)/\gamma]$ and hence $\hat{s}=0$.
    \item[(ii)] If $\phi(1)=g_{\bV}(1,1)=1/\gamma$ and $\phi$ is strictly increasing, then $\phi\left(\frac{x-\gamma s}{y-\gamma s}\right) < 1/\gamma$ for $y>x$, which implies $h'(s)>0$ and hence $\hat{s}=0$.
    If $x=y$, $h'(s)=1-\gamma/\gamma=0$ does not depend on $s$, so we can take $\hat{s}=0$.
    \item[(iii)] If there is an interval $(z_\star,1]$ on which $\phi(z)=1/\gamma$, we have
  \[
  \phi\left(\frac{x-\gamma s}{y-\gamma s}\right) = 1/\gamma, \qquad \text{for}\,\,\,\,\frac{x-\gamma s}{y-\gamma s} \in (z_\star,1].
  \]
  Any value of $s$ such that $\frac{x-\gamma s}{y-\gamma s} \in (z_\star,1]$ is a solution, so in particular we can take $\hat{s}=0$.
\end{enumerate}
\end{proof}

\subsubsection{Proof for Lemma~\ref{lemma:point(x_0,y_0)}}
\label{sec:prooflemx0y0}
\begin{proof}
    Consider firstly the continuously differentiable case.
    To determine the point $(x_0,y_0)$, we first establish the conditions that define it.
    The tangent line to the unit level curve $g_{\bV}(x,y)=1$ at a generic point $(x^\star,y^\star)$ on the curve $g_{\bV}(x^\star,y^\star)=1$ is
    \begin{align}
     (x-x^\star)g_{V,1}(x^\star,y^\star) + (y-y^\star)g_{V,2}(x^\star,y^\star) &= 0 &\Rightarrow &&  xg_{V,1}(x^\star,y^\star)+yg_{V,2}(x^\star,y^\star)=1. \label{eq:tang}
    \end{align}
    The condition that this tangent line passes through the point $(\gamma,\gamma)$ implies
    \begin{equation}
        \gamma\rbr{g_{V,1}(x^\star,y^\star)+g_{V,2}(x^\star,y^\star)}=1. \label{eq:tanggamma}
    \end{equation}
    We write this equation as 
        \begin{equation*}
       \psi(x^\star/y^\star):= \gamma \phi(x^\star/y^\star)-1=0,
    \end{equation*}
with $x^\star/y^\star \leq 1$. We therefore seek to show that under the conditions of convexity and differentiability of $g_V$, the equation $\psi(z) = 0$ has solution(s) in the range $z \in [0,1]$, with $x_0/y_0$ the maximum solution.

By Lemma~\ref{lem:phi}, $\psi(z) \leq \psi(1) = \gamma \phi(1)-1>0$ since $\phi(1) = g_V(1,1)$ and $\gamma>1/g_{V}(1,1)$. Furthermore, $\phi$, and hence $\psi$, is non-decreasing on $[0,1]$.

We have $\phi(0) = g_{V,1}(0,1)+g_{V,2}(0,1)$. Note also by Euler, $g_{V,2}(0,1) = g_V(0,1) \geq 1$. We need $\phi(0)<1/\gamma$. 
The slope of the gradient line to the level curve at $(x^\star,y^\star)=(0,1)$, given by $-\frac{g_{V,1}(0,1)}{g_{V,2}(0,1)}$, is either positive, zero, or negative but greater than $1-1/\gamma$. 

If the slope is zero or negative, then by convexity, and the fact that $\sup(G)=(1,1)$, $g_{V,2}(0,1) = g_{V}(0,1) = 1$. Furthermore, in the case the slope is negative 
\begin{align*}
    -g_{V,1}(0,1)/g_{V,2}(0,1) &> 1-1/\gamma& & \Rightarrow &g_{V,1}(0,1) + g_{V,2}(0,1) < g_{V,2}(0,1)/\gamma = 1/\gamma,
\end{align*}
 as desired. In the case the slope is zero, $g_{V,1}(0,1) = 0$ and $g_{V,2}(0,1) = g_{V}(0,1) = 1$, so $\phi(0)=1<1/\gamma$ also.

If the slope is positive, then $g_{V,2}(0,1)>1$ and $g_{V,1}(0,1)<0$.
Consider $\kappa\in(0,1)$ such that $g_{V}(\kappa,1)=1$, and $\tau=1/g_{V}(0,1)\in(0,1)$ such that $g_{V}(0,\tau)=1.$
By convexity, the slope is greater than $(1-\tau)/\kappa$, that is, $-g_{V,1}(0,1)/g_{V,2}(0,1)>(1-\tau)/\kappa$, and thus the following holds
\begin{align*}
    -g_{V,1}(0,1)/g_{V,2}(0,1) &>1/\kappa-1/(\kappa g_{V,2}(0,1))& &\Rightarrow &-g_{V,1}(0,1) > (g_{V,2}(0,1)-1)/\kappa >g_{V,2}(0,1)-1,
\end{align*}
since $\tau=1/g_{V}(0,1)$ and $g_{V}(0,1)=g_{V,2}(0,1)$.
Rearranging the right hand side gives $\phi(0)=g_{V,1}(0,1)+g_{V,2}(0,1)<1<1/\gamma,$ as desired.
Consequently, by the Intermediate Value Theorem, a solution to $\psi(z)=0$ exists for $y>x$, and if there are multiple solutions, then since $\psi$ is non-decreasing we must have $\psi(z)=0$ for all $z\in I_z=(z_\star,z^\star]$ with $\psi(z)<0$ for $z<z_\star$ and $\psi(z)>0$ for $z>z^\star$ with $x_0/y_0=z^\star.$

Now consider the piecewise differentiable case. By the arguments above, $\psi$ is non-decreasing, with $\psi(0)<0$, $\psi(1)>0$.
As in the continuously differentiable case, either $\psi(z)=0$ on an interval $I_z=(z_\star,z^\star],$ with $\psi(z)<0$ for $z<z_\star$ and $\psi(z)>0$ for $z>z^\star$, in which case we take $x_0/y_0=z^\star$; or the solution occurs at a point of non-differentiability.
In the latter case, $z^\star=x_0/y_0$ is a jump point of $\psi$, with $\psi(z)<0$ for $z<z_\star$ and $\psi(z)>0$ for $z>z^\star$.
Consequently, zero belongs to the subdifferential of $\psi$ at $z^\star$.
While these are infinitely many tangent lines to $g_{\bV}$ at $(x_0,y_0)$, only one of these passes through the point $(\gamma,\gamma).$


Finally, the equation to which $(x_0,y_0)$ is a solution is simply the equation of a tangent line to $g_{\bV}$, which passes through $(\gamma,\gamma)$. This can be seen for example by rearranging the first part of equation~\eqref{eq:tang}.
\end{proof}

\subsubsection{Proof for Proposition~\ref{prop:tangentline}}
\label{sec:proofproptangent}

The proof of Proposition~\ref{prop:tangentline} is facilitated by one further Lemma showing convexity of the objective function $h$.

\begin{lemma}
\label{lem:hconvex}
    For convex and twice (piecewise) differentiable $g_{\bV}$, the objective function $h(s)=s+g_{\bV}(x-\gamma s, y-\gamma s)$ is convex.
\end{lemma}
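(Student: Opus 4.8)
The plan is to reduce the claim to the elementary fact that convexity is preserved under composition with an affine map. I would write $A(s)=(x-\gamma s,\,y-\gamma s)$, which is an affine map from $\reals$ into $\reals^2$, and observe that on the admissible range $s\in[0,\min(x,y)/\gamma]$ the image $A(s)$ stays in $[0,\infty)^2$, where $g_{\bV}$ is defined. Since $g_{\bV}$ is convex by hypothesis and $A$ is affine, the composition $g_{\bV}\circ A$ is convex; adding the linear term $s$ then gives $h(s)=s+(g_{\bV}\circ A)(s)$ as the sum of a convex and an affine function, hence convex. This argument uses no differentiability of $g_{\bV}$ whatsoever, so it already disposes of the piecewise case in one stroke.

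To make the conclusion transparent in the twice-differentiable regime, and to link it with the partial-derivative notation used elsewhere in the appendix, I would also confirm convexity by a direct second-derivative computation. Differentiating once gives
\[
h'(s)=1-\gamma\big(g_{V,1}(x-\gamma s,y-\gamma s)+g_{V,2}(x-\gamma s,y-\gamma s)\big),
\]
and differentiating again, using the chain rule with $A'(s)=-\gamma(1,1)^\top$, yields
\[
h''(s)=\gamma^2\big(g_{V,11}+2g_{V,12}+g_{V,22}\big)(x-\gamma s,y-\gamma s).
\]
The bracketed expression is precisely the quadratic form $\bone^\top\nabla^2 g_{\bV}\,\bone$ evaluated at $A(s)$, with $\bone=(1,1)^\top$. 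Convexity of $g_{\bV}$ is equivalent to positive semidefiniteness of its Hessian, so this quadratic form is nonnegative; therefore $h''(s)\ge 0$ and $h$ is convex on its domain.

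The only delicate point — and the main, though minor, obstacle — is the word \emph{piecewise} in the hypothesis: $g_{\bV}$ may fail to be twice differentiable along finitely many rays, so the second-derivative identity above is valid only off these rays. I would resolve this by taking the affine-composition argument as the primary and complete proof of global convexity, since it requires no smoothness, and by treating the explicit $h''$ computation only as an illustration on the smooth pieces. If a derivative-based justification across the kinks is wanted, it suffices to note that $h$ is continuous and that $h'$ is non-decreasing on each smooth piece with no downward jump at the junction points (equivalently, the left- and right-hand derivatives of $h$ are ordered), which is exactly the statement that $h$ is convex.
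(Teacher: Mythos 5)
Your proposal is correct and is essentially the paper's own argument: the paper verifies the convexity inequality $h(ps_1+(1-p)s_2)\le ph(s_1)+(1-p)h(s_2)$ directly by writing $\rbr{x-\gamma[ps_1+(1-p)s_2],\,y-\gamma[ps_1+(1-p)s_2]}$ as a convex combination of the affine images and applying convexity of $g_{\bV}$, which is exactly the ``affine precomposition preserves convexity'' fact you invoke (and, as you observe, no differentiability is needed — your second-derivative computation is a fine illustration but not required). Your handling of the piecewise case, by making the definition-based argument primary, matches the paper's treatment.
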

\begin{proof}
For $p \in [0,1]$,
    \begin{align*}
        h(ps_1+(1-p)s_2) &= ps_1+(1-p)s_2 +g_{\bV}\rbr{x-\gamma[ps_1+(1-p)s_2], y-\gamma[ps_1+(1-p)s_2]}\\
        &=ps_1+(1-p)s_2 +g_{\bV}\rbr{p(x-\gamma s_1,y-\gamma s_1) + (1-p)(x-\gamma s_1,y-\gamma s_1)}\\
        & \leq p \left\{s_1 +g_{\bV}(x-\gamma s_1,y-\gamma s_1)\right\} + (1-p) \left\{s_2 +g_{\bV}(x-\gamma s_2,y-\gamma s_2)\right\}\\
        &=ph(s_1)+(1-p)h(s_2),
    \end{align*}
    by convexity and homogeneity of $g_{\bV}$.
\end{proof}

\label{Appendix:tangentplane}
\begin{proof}
The fact that $g_{\bX}=g_{\bV}$ for $\gamma \leq 1/g_{\bV}(1,1)$ follows from Lemma~\ref{lem:mins0} since the minimizer is $\hat{s}=0$.
For $1/g_{\bV}(1,1)<\gamma\leq 1$, we derive an explicit form of the minimizer $\hat{s}\in[0,\min(x,y)/\gamma]$ and the corresponding analytic form of $g_{\bX}$ in~\eqref{eq:gauge_HW}. 
Consider the derivative of the objective function $h$:
    \begin{align*}
        h'(s)
        &=1-\gamma\phi\rbr{\frac{x-\gamma s}{y-\gamma s}}.
    \end{align*}
    By equation~\eqref{eq:tanggamma},
    \[
    \phi(x_0/y_0) = g_{V,1}(x_0,y_0)+g_{V,2}(x_0,y_0) = 1/\gamma.
    \]
    We first consider the condition for $\hat{s}=0$.
    By Lemma~\ref{lem:hconvex}, $h(s)$ is convex, and
    for a convex function on the compact interval $[0,\min(x,y)/\gamma)]$, the minimum occurs at the lower bound if and only if the derivative at $0$ is non-negative, i.e., $h'(0)\ge 0$, which is equivalent to $\phi(x/y)\leq 1/\gamma = \phi(x_0/y_0)$, i.e., $\phi(x/y)\le \phi(x_0/y_0).$ 
    Since $\phi(z)$ is non-decreasing for $z=x/y\in[0,1]$, the inequality in function values implies $x/y \le x_0/y_0.$
    This can be expressed in angular terms as 
    \begin{equation*}
        \frac{x}{x+y}\le \frac{x_0}{x_0+y_0},\quad y\ge x.
    \end{equation*}
    By symmetry, for $y<x$, the corresponding condition becomes $\frac{y}{x+y}\ge \frac{y_0}{x_0+y_0}.$    
    Combining both cases, the condition is
    \begin{align*}
        \hat{s}=
        0,\quad&\text{if}\,\,\,\, \frac{x}{x+y}\le \frac{x_0}{x_0+y_0}\,\,\,\,\text{for}\,\,\,\,y\ge x,\,\,\,\, \text{and}\,\,\frac{y}{x+y}\ge \frac{y_0}{x_0+y_0}\,\,\,\,\text{for}\,\,\,\,y<x.
    \end{align*}
    We now consider the general case where $h'(\hat{s})=0$, which gives
    \begin{equation*}
        \phi\rbr{\frac{x-\gamma \hat{s}}{y-\gamma \hat{s}}}=1/\gamma \qquad \Rightarrow \qquad\phi\rbr{\frac{x-\gamma \hat{s}}{y-\gamma \hat{s}}} = \phi\rbr{\frac{x_0}{y_0}}.
    \end{equation*}
If $\phi$ is strictly increasing over [0,1], then
      \begin{align}
    \frac{x_0}{y_0}=\frac{x-\gamma \hat{s}}{y-\gamma \hat{s}} \qquad \Rightarrow \qquad \hat{s}=\frac{y_0 x-x_0 y}{y_0 \gamma -x_0 \gamma}. \label{eq:s-hat}
    \end{align}
    Otherwise, if $\phi$ is non-decreasing such that $\phi(z) = 1/\gamma$ for all $z\in(z_\star,z^\star=x_0/y_0]$, then any $s$ such that $\frac{x-\gamma s}{y-\gamma s} \in (z_\star,z^\star=x_0/y_0]$ is a solution, so we can take $\hat{s}$ corresponding to the upper bound, i.e., with the same form as in equation~\eqref{eq:s-hat}.

    We have that, for $y \geq x$ and $x/(x+y) \geq x_0/(x_0+y_0)$, or by symmetry for $x>y$ and $y/(x+y)\geq y_0/(x_0+y_0)$,
    \begin{align*}
        g_{\bX}(x,y) &= \hat{s} + g_{\bV}(x-\gamma\hat{s},y-\gamma\hat{s})\\
        &= \frac{y_0 x-x_0 y}{y_0 \gamma -x_0 \gamma} + g_{\bV}\left(x-\frac{y_0 x-x_0 y}{y_0 -x_0},y-\frac{y_0 x-x_0 y}{y_0 -x_0}\right)\\
        &= \frac{y_0 x-x_0 y}{y_0 \gamma -x_0 \gamma} + g_{\bV}\left(\frac{x_0 y-x_0 x}{y_0 -x_0},\frac{y_0 y-y_0 x}{y_0 -x_0}\right)\\
        &= \frac{y_0 x-x_0 y}{y_0 \gamma -x_0 \gamma} + \frac{y-x}{x_0-y_0}g_{\bV}(x_0,y_0) = \frac{y-\rbr{\frac{\gamma-y_0}{\gamma-x_0}}x}{\gamma\rbr{1-\rbr{\frac{\gamma-y_0}{\gamma-x_0}}}},
    \end{align*}
using the homogeneity of $g_{\bV}$ and the fact that $g_{\bV}(x_0,y_0)=1$.
The resulting form of $g_{\bX}$ corresponds precisely to the gauge function derived from the tangent line to the unit level curve $g_{\bV}(x,y)=1$ at the point $(x_0,y_0)$, where this tangent line passes through $(\gamma,\gamma).$
    To verify the identity, for $y\ge x$, recall the equation of the tangent line at $(x_0,y_0)$
    \begin{equation*}
        1=xg_{V,1}(x_0,y_0)+yg_{V,2}(x_0,y_0) \qquad \Rightarrow  \qquad      y=-\frac{g_{V,1}(x_0,y_0)}{g_{V,2}(x_0,y_0)}x+\frac{1}{g_{V,2}(x_0,y_0)}.
    \end{equation*}
    Since this tangent line passes through the point $(\gamma,\gamma)$, we have $1=\gamma g_{V,1}(x_0,y_0)+\gamma g_{V,2}(x_0,y_0).$
    Using this condition, we can rewrite the tangent line equation as
    \begin{equation}
    \label{eq:tangentline_gv}
        y-\gamma=-\frac{g_{V,1}(x_0,y_0)}{g_{V,2}(x_0,y_0)}(x-\gamma).
    \end{equation}
    Since $(x_0,y_0)$ lies on the line, we can substitute this into equation~\eqref{eq:tangentline_gv} to give
    \begin{align*}
    -\frac{g_{V,1}(x_0,y_0)}{g_{V,2}(x_0,y_0)} = \frac{\gamma-y_0}{\gamma-x_0},
    \end{align*}
    and therefore
    \begin{align*}
        y-\gamma=\frac{\gamma-y_0}{\gamma-x_0}(x-\gamma).
    \end{align*}
    The gauge function associated with this tangent line, which passes through $(\gamma,\gamma)$, is thus
    \begin{equation*}
    \frac{y-\rbr{\frac{\gamma-y_0}{\gamma-x_0}}x}{\gamma\rbr{1-\rbr{\frac{\gamma-y_0}{\gamma-x_0}}}}=1.
    \end{equation*}
    In summary, for $\frac{1}{g_{\bV}(1,1)}\le \gamma \le 1$, the explicit form of the minimizer is given by
    \begin{equation*}
    \hat{s}=
    \begin{cases}
        0,\quad&\text{if}\,\,\,\, \frac{x}{x+y}\le \frac{x_0}{x_0+y_0}\,\,\,\,\text{for}\,\,\,\,y\ge x,\,\,\,\, \text{and}\,\,\,\,\frac{y}{x+y}\ge \frac{y_0}{x_0+y_0}\,\,\,\,\text{for}\,\,\,\,y<x\\
        \frac{x_0 y-y_0 x}{\gamma(x_0-y_0)},\quad&\text{if}\,\,\,\, \frac{x}{x+y}> \frac{x_0}{x_0+y_0}\,\,\text{for}\,\,\,\,y \ge x,\,\,\text{and}\,\,\,\,\frac{y}{x+y}< \frac{y_0}{x_0+y_0}\,\,\,\,\text{for}\,\,\,\,y<x.
    \end{cases}
    \end{equation*}
    and the analytic form of the gauge function $g_{\bX}$ in~\eqref{eq:gauge_HW} is
    \begin{align*}
        g_{\bX}(x,y)=
        \begin{cases}
            g_{\bV}(x,y),\quad&\text{if}\,\, \frac{x}{x+y}\le \frac{x_0}{x_0+y_0}\,\,\text{for}\,\,y\ge x,\,\, \text{and}\,\,\frac{y}{x+y}\ge \frac{y_0}{x_0+y_0}\,\,\text{for}\,\,y<x\\
            \frac{y-\rbr{\frac{\gamma-y_0}{\gamma-x_0}}x}{\gamma\rbr{1-\rbr{\frac{\gamma-y_0}{\gamma-x_0}}}},\quad&\text{if}\,\, \frac{x}{x+y}> \frac{x_0}{x_0+y_0}\,\,\text{for}\,\,y \ge x,\,\,\text{and}\,\,\frac{y}{x+y}< \frac{y_0}{x_0+y_0}\,\,\text{for}\,\,y<x.
        \end{cases}
    \end{align*}
\end{proof}

\newpage
\begin{center}
\Large\bf SUPPLEMENTARY MATERIAL
\end{center}

\section{Finding coordinatewise supremum analytically for additively mixed gauge functions}

Consider additively mixed gauge functions ${g}_{\bX}^\star(x,y)=pg_{\bX}^{[1]}(x,y)+(1-p)g_{\bX}^{[2]}(x,y)$ for $(x,y)\in[0,\infty)^2.$
We aim to determine the coordinatewise supremum analytically for specified classes of additive mixtures.
We assume that both $g_{\bX}^{[1]}$ and $g_{\bX}^{[2]}$ are continuous and symmetric; it follows that $c_1^\star=c_2^\star=c^\star.$
We consider examples where both $g_{\bX}^{[1]}$ and $g_{\bX}^{[2]}$ are convex, from which it follows that the objective function $g_{\bX}^\star(z,1):=\tilde{k}^\star(z)$ for $z\in[0,1]$ to be minimized is convex. 
We denote the minimizer of $\tilde{k}^\star$ over the interval $[0,1]$ by $\kappa$.

\subsection{Additively mixing Gaussian and logistic gauge functions}

Consider the objective function to be minimized, derived from the additive mixture of Gaussian and logistic gauge functions
\begin{equation}
\label{eq:Amix_Gausslog}
    \tilde{k}^\star(z)
    =
    \frac{p}{1-\rho^2}\rbr{z+1-2\rho\sqrt{z}}+(1-p)\rbr{\frac{1}{\gamma}(z+1)+\rbr{1-\frac{2}{\gamma}}z},\quad z\in[0,1],
\end{equation}
where $p\in(0,1)$, $\gamma\in(0,1)$, and $\rho\in[0,1).$
We analyze the derivative of the function \eqref{eq:Amix_Gausslog} with respect to $z$ to find the minimizer $\kappa$
\begin{align*}
    \frac{\diff}{\diff z}\tilde{k}^\star(z):=\tilde{k}^{\star'}(z)
    &=
    \frac{p}{1-\rho^2}\rbr{1-\frac{\rho}{\sqrt{z}}} + (1-p)\rbr{1-\frac{1}{\gamma}}.
\end{align*}

We first examine a specific case where $\rho=0$.
The derivative simplifies to $\tilde{k}^{\star'}(z)=p+(1-p)(1-1/\gamma)$, which is a constant. 
The sign of this derivative determines the minimizer $\kappa$
\begin{align*}
    \kappa=
    \begin{cases}
        0, \quad \text{if}\quad \gamma \ge 1-p\\
        1, \quad \text{if}\quad \gamma < 1-p.
    \end{cases}
\end{align*}

We now find the condition for the minimizer $\kappa$ to lie in $(0,1)$.
By setting $\tilde{k}^{\star '}(z)=0$ and solving for $\sqrt{z}$, we have
\begin{equation}
\label{eq:sqrtz}
\sqrt{z}=\frac{\rho}{1-(1-\rho^2)\rbr{\frac{1-p}{p}}\rbr{\frac{1-\gamma}{\gamma}}}=\frac{\rho}{1-K},    
\end{equation}
where $K=(1-\rho^2)\rbr{\frac{1-p}{p}}\rbr{\frac{1-\gamma}{\gamma}}>0$.
For $\kappa$ to be in $(0,1)$, $\sqrt{z}$ in~\eqref{eq:sqrtz} must be in $(0,1)$ as well.
The first conditions for $\sqrt{z}>0$ are $\rho\in(0,1)$ and $K<1$.
The second condition for $\sqrt{z}<1$ is $K<1-\rho$.
The condition $K<1-\rho$ is stricter than the condition $K<1$ since $(1-\rho)<1$.
Thus, the final conditions we need are $K<1-\rho$ and $\rho\in(0,1).$
The condition $K<1-\rho$ in terms of the original parameters is
\begin{align*}
    K<1-\rho \iff (1+\rho)\rbr{\frac{1-p}{p}}\rbr{\frac{1-\gamma}{\gamma}} < 1.
\end{align*}
Consequently, the two conditions for $\kappa\in(0,1)$ are $\rho\in(0,1)$ and $(1+\rho)\rbr{\frac{1-p}{p}}\rbr{\frac{1-\gamma}{\gamma}} < 1$.

Lastly, we find the condition for $\kappa = 1$, which holds if $\sqrt{z}\ge1$, where $\sqrt{z}$ as in~\eqref{eq:sqrtz}.
The condition $\sqrt{z}=\rho/(1-K)\ge 1$ implies $K\ge 1-\rho$. 
For the square root of $z$ to be positive, the denominator must also be positive, that is, $K<1.$
Thus, for the minimizer $\kappa$ to be $1$, we need to satisfy the condition $1-\rho\le K<1.$
In terms of original parameters, the condition is written as
\begin{align*}
    &(1-\rho)\le (1-\rho^2)\rbr{\frac{1-p}{p}}\rbr{\frac{1-\gamma}{\gamma}}<1\\
    &1\le (1+\rho)\rbr{\frac{1-p}{p}}\rbr{\frac{1-\gamma}{\gamma}} \,\,\text{and}\,\,(1-\rho^2)\rbr{\frac{1-p}{p}}\rbr{\frac{1-\gamma}{\gamma}}<1.
\end{align*}

All together, the minimizer is
\begin{align*}
    \kappa=
    \begin{cases}
        0,\quad\text{if}\quad \rho=0\,\,\text{and}\,\,\gamma\ge 1-p\\
        \rho^2/(1-K)^2,\quad\text{if}\quad \rho\in(0,1)\,\,\text{and}\,\,(1+\rho)\rbr{\frac{1-p}{p}}\rbr{\frac{1-\gamma}{\gamma}}<1\\
        1,\quad\text{if}\quad \rho=0\,\,\text{and}\,\,\gamma<1-p\,\,\text{or}\,\,\rho\in(0,1)\,\,\text{and}\,\,(1-\rho)\le(1-\rho^2)\rbr{\frac{1-p}{p}}\rbr{\frac{1-\gamma}{\gamma}}<1.
    \end{cases}
\end{align*}

\subsection{Additively mixing Inverted-logistic and logistic gauge functions}

Consider the objective function to be minimized, derived from the additive mixture of Inverted-logistic and logistic gauge functions
\begin{equation}
    \label{eq:Amix_InvLogLog}
    \tilde{k}^\star(z)=p\rbr{z^{1/\theta}+1}^\theta + (1-p)\rbr{\frac{1}{\gamma}(z+1)+\rbr{1-\frac{2}{\gamma}}z},\quad z\in[0,1],
\end{equation}
where $\theta\in(0,1]$, $p\in(0,1)$, and $\gamma\in(0,1).$
Similarly to before, we analyze the derivative of $\tilde{k}^\star(z)$ in~\eqref{eq:Amix_InvLogLog} with respect to $z$ to find the minimizer $\kappa$
\begin{align*}
    \frac{\diff}{\diff z}\Tilde{k}^\star(z)
    :=\tilde{k}^{\star '}(z)=
    p\left(z^{1/\theta}+1\right)^{\theta-1}z^{1/\theta-1}+(1-p)\rbr{\frac{1-\gamma}{\gamma}}.
\end{align*}

We first consider a specific case where $\theta=1$.
The derivative simplifies to $\tilde{k}^{\star '}(z)=p+(1-p)\rbr{\frac{1-\gamma}{\gamma}}$, which is a constant.
Depending on the sign of the derivative, the minimizer is determined as follows:
\begin{align*}
    \kappa=
    \begin{cases}
        0,\quad\text{if}\quad \gamma \ge 1-p\\
        1,\quad\text{if}\quad \gamma < 1-p.
    \end{cases}
\end{align*}

We now find the condition for the minimizer $\kappa$ to lie in $(0,1).$
By solving for $\tilde{k}^{\star '}(z)=0$, the solution, denoted $z^*$, is given by
\begin{equation*}
    z^*=\left[\left\{\left(\frac{1-p}{p}\right)\left(\frac{1}{\gamma}-1\right)\right\}^{\frac{1}{\theta-1}}-1\right]^{-\theta}
    =\cbr{C^{\frac{1}{\theta-1}}-1}^{-\theta},
\end{equation*}
with $C=\left(\frac{1-p}{p}\right)\left(\frac{1}{\gamma}-1\right)>0$.

The condition $\kappa\in(0,1)$ means we need to consider two inequalities: $z^* >0$ and $z^* <1.$
To satisfy the condition for $z^* > 0$, $C^{\frac{1}{\theta-1}}-1$ must be positive since the exponent is negative for $\theta\in(0,1).$
Solving for $C$, the condition becomes
\begin{align*}
    &C^{\frac{1}{\theta-1}}>1\\
    &C<1 \iff \left(\frac{1-p}{p}\right)\left(\frac{1}{\gamma}\right)<1.
\end{align*}
Thus, the condition for $z^* > 0$ is $\left(\frac{1-p}{p}\right)\left(\frac{1}{\gamma}\right)<1.$
To satisfy condition for $z^* <1$, we now solve 
\begin{align*}
    \cbr{C^{\frac{1}{\theta-1}}-1}^{-\theta}&<1\\
    C^\frac{1}{\theta-1} &> 2\\
    C&<2^{\theta-1}\\
    \left(\frac{1-p}{p}\right)\left(\frac{1}{\gamma}\right)&<2^{\theta-1}.
\end{align*}
For $\kappa$ to be in $(0,1)$, both conditions must hold together: $C<1$ and $C<2^{\theta-1}$.
This implies that $C<\min(1,2^{\theta-1}).$
Since $2^{\theta-1}\le 1$ for $\theta\in(0,1]$, it reduces to the single condition $C<2^{\theta-1}.$

Lastly, for the minimizer $\kappa$ to be $1$, we need to satisfy $z^*\ge 1$.
The condition in terms of original parameters is written as
\begin{align*}
    \rbr{C^{\frac{1}{\theta-1}}-1}^{-\theta} &\ge 1\\
    C^{\frac{1}{\theta-1}}&\le 2\\
    C&\ge 2^{\frac{1}{\theta-1}}\\
    \left(\frac{1-p}{p}\right)\left(\frac{1}{\gamma}\right)&\ge 2^{\theta-1}.
\end{align*}

All together, the minimizer is
\begin{align*}
    \kappa=
    \begin{cases}
        0, \quad\,\,\,\,\, \text{if}\quad \rho=0\,\,\text{and}\,\,\gamma \ge 1-p\\
        z^*,\quad \text{if}\quad \rho\in(0,1)\,\,\text{and}\,\,(1+\rho)\rbr{\frac{1-p}{p}}\rbr{\frac{1-\gamma}{\gamma}}<1\\
        1,\quad\,\,\,\,\, \text{if}\quad \rho=0\,\,\text{and}\,\,\gamma=1-p\,\,\text{or}\,\, \rho\in(0,1)\,\,\text{and}\,\,(1-\rho^2)\rbr{\frac{1-p}{p}}\rbr{\frac{1-\gamma}{\gamma}}\ge 1.
    \end{cases}
\end{align*}

\subsection{Additively mixing rectangular and logistic gauge functions}
Consider the objective function derived from the additive mixture of rectangular and logistic gauge functions
\begin{equation*}
    \tilde{k}^\star(z)=p\max\left(\frac{1}{\theta}(z-1),\frac{1}{\theta}(1-z),\frac{1}{2-\theta}(z+1)\right)
    +(1-p)\rbr{\frac{1}{\gamma}(z+1)+\rbr{1-\frac
    {2}{\gamma}}z},\quad z\in[0,1],
\end{equation*}
where $\theta\in(0,1]$, $\gamma\in(0,1)$, and $p\in(0,1).$

Both $g_{\bX}^{[1]}$ and $g_{\bX}^{[2]}$ are continuous and symmetric but not differentiable everywhere.
The logistic gauge function is strictly monotonic over $z\in[0,1]$.
We first consider a specific case where $\theta=1.$
The derivative simplifies to $\tilde{k}^{\star '}(z)=p+(1-p)\rbr{1-1/\gamma}$, which is a constant.
The sign of the derivative determines the minimizer as follows:
\begin{align*}
    \kappa=
    \begin{cases}
        0, \quad \text{if}\quad \gamma \ge 1-p\\
        1, \quad \text{if} \quad \gamma < 1-p.
    \end{cases}
\end{align*}

Note that the rectangular gauge function exhibits a sharp turning point at $(1-\theta)$, denoted by $z^*=1-\theta$, which arises from the point of intersection $\frac{1}{\theta}(1-z)=\frac{1}{2-\theta}(z+1).$
For $z^*=(1-\theta)\in(0,1)$ to be a turning point, we need 
\[
\frac{\diff}{\diff z}\tilde{k}^{\star '}(z)=\frac{\diff}{\diff z}\cbr{\frac{p}{\theta}(1-z)+(1-p)\cbr{\frac{1}{\gamma}(z+1)+\rbr{1-\frac{2}{\gamma}}z}}=-\frac{p}{\theta}+(1-p)(1-2/\gamma)<0
\] for $z\in(0,1-\theta)$, which is always true, and 
\[\frac{\diff}{\diff z}\tilde{k}^{\star '}(z)=\frac{\diff}{\diff z}\cbr{\frac{p}{2-\theta}(z+1)+(1-p)\cbr{\frac{1}{\gamma}(z+1)+\rbr{1-\frac{2}{\gamma}}z}}=\frac{p}{2-\theta}+(1-p)(1-1/\gamma)>0\] for $z\in(1-\theta,1).$
Thus, the sign of the quantity $\frac{p}{2-\theta}+(1-p)(1-1/\gamma)$ determines the location of the minimizer $\kappa$
\begin{align*}
    \kappa=
    \begin{cases}
        1-\theta, \quad \text{if}\quad \frac{p}{2-\theta}+(1-p)(1-1/\gamma) > 0 \\
        1, \quad \text{if} \quad \frac{p}{2-\theta}+(1-p)(1-1/\gamma) \le 0.
    \end{cases}
\end{align*}
All together, the minimizer is
\begin{align*}
    \kappa=
    \begin{cases}
        0,\qquad\quad\text{if}\quad \theta=1\,\,\text{and}\,\,\gamma\ge 1-p\\
        1-\theta,\,\,\,\quad\text{if}\quad \theta\in(0,1)\,\,\text{and}\,\,\frac{p}{2-\theta}+(1-p)(1-1/\gamma) > 0\\
        1,\qquad\quad\text{if}\quad \theta=1\,\,\text{and}\,\,\gamma<1-p\,\,\text{or}\,\,\theta\in(0,1)\,\,\text{and}\,\,\frac{p}{2-\theta}+(1-p)(1-1/\gamma) \le 0.
    \end{cases}
\end{align*}

\section{Deriving the analytic form of additive mixtures constructed from an additive stochastic representation}

\subsection{Determining the solution $\hat{s}$ analytically for the Gaussian gauge function $g_{V}$}
Consider the additive mixture with the Gaussian gauge function for $g_{\bV}$
\begin{equation*}
    g_{\bX}(x,y)=\min_{s\in[0,\min(x,y)/\gamma]}\cbr{s+g_{\bV}(x-\gamma s,y-\gamma s)}.
\end{equation*}
Define the objective function $h(s)$ to be minimized as
\begin{align*}
    h(s)
    &=
    s+\rbr{x-\gamma s y-\gamma s -2\rho\rbr{(x-\gamma s)(y-\gamma s)}^{1/2}}/(1-\rho^2)\\
    &=
    s+\rbr{x+y-2\gamma s -2\rho\sqrt{(x-\gamma s)(y-\gamma s)}}/(1-\rho^2)\\
    &=
    s+\rbr{x+y-2\gamma s -2\rho\sqrt{l(s)}}/(1-\rho^2),
\end{align*}
where $l(s)=(x-\gamma s)(y- \gamma s)$.
For differentiable $g_{\bV}$, we consider the first-order condition of $h(s)$ to determine the minimizer $\hat{s}$
\begin{align*}
    \frac{\diff}{\diff s}h(s)
    &=
    1-\frac{2\gamma}{1-\rho^2}-\frac{\rho}{1-\rho^2}\frac{1}{\sqrt{l(s)}}l'(s)\\
    &=
    1-\frac{2\gamma}{1-\rho^2}-\frac{\rho(-\gamma(x+y)+2\gamma^2 s)}{(1-\rho^2)\sqrt{(x-\gamma s)(y-\gamma s)}}=0
\end{align*}
After expanding and rearranging the terms, we have
\begin{align*}
    \rbr{1-\rho^2}^2\rbr{1-\frac{2\gamma}{1-\rho^2}}^2\rbr{xy-\gamma(x+y)s+\gamma^2 s^2}
    =
    \rho^2\rbr{\gamma^2(x+y)^2-4\gamma^3(x+y)s+4\gamma^4 s^2}.
\end{align*}
Letting $\hatc=\rbr{1-\rho^2}^2\rbr{1-\frac{2\gamma}{1-\rho^2}}^2=(1-\rho^2-2\gamma)^2$, the equation simplifies to
\begin{align*}
    &\hatc xy-\hatc\gamma(x+y)s+\hatc\gamma^2 s^2=\rho^2\gamma^2(x+y)^2-4\rho^2\gamma^3(x+y)s+4\rho^2\gamma^4s^2\\
    &\rbr{4\rho^2\gamma^4-\hatc\gamma^2}s^2-\rbr{4\rho^2\gamma^3(x+y)-\hatc\gamma(x+y)}s+\rho^2\gamma^2(x+y)^2-\hatc xy=0\\
    &\rbr{4\rho^2\gamma^2-\hatc}\gamma^2s^2-(4\rho^2\gamma^2-\hatc)\gamma(x+y)s+\rho^2\gamma^2(x+y)^2-\hatc xy=0\\
    &\rbr{\gamma^2K}s^2-\gamma K(x+y)s+\rho^2\gamma^2(x+y)^2-\hatc xy=0,
\end{align*}
with $K=4\rho^2\gamma^2-\hatc.$
The resulting equation is a quadratic equation for $s$.
Letting $A:=\gamma^2 K$, $B:=\gamma K(x+y)$, and $C:=\rho^2\gamma^2(x+y)^2-\hatc xy$, the solution lying in $(0,\min(x,y)/\gamma]$ is
\begin{equation}
\label{eq:shat_Ga}
    \hat{s}=\frac{-B+\sqrt{B^2-4AC}}{2A}.
\end{equation}
Substituting the minimizer $\hat{s}$ in~\eqref{eq:shat_Ga} into the gauge function $g_{\bX}(x,y)$ yields the analytic form.

By Proposition 3 in the main manuscript, the solution $\hat{s}$ is linear in $x$ and $y$.
For the solution $\hat{s}$ to be linear, the discriminant $B^2-4AC$ must be a perfect square of a linear function of $x$ and $y.$
We calculate the discriminant
\begin{align*}
    B^2-4AC
    &=\gamma^2 K^2(x+y)^2-4\rbr{\gamma^2 K}\rbr{\rho^2\gamma^2(x+y)^2-\hatc xy}\\
    &=\gamma^2 K\rbr{K(x+y)^2-4\rho^2\gamma^2(x+y)^2+4\hatc xy}\\
    &=\gamma^2 K\rbr{-\hatc(x+y)^2+4\hatc xy}\\
    &=\gamma^2 K \hatc\rbr{-(x+y)^2+4xy}\\
    &=-\gamma^2 K \hatc(x-y)^2.
\end{align*}

The square root of the discriminant is $\sqrt{B^2-4AC}=\gamma\sqrt{-K\hatc}|x-y|.$
For $\frac{1}{g_{\bV}(1,1)}=\frac{1+\rho}{2}<\gamma\le 1$, we will show that $-K\hatc$ is always nonnegative.
Since $\hatc\ge 0$, it is sufficient to show $K\le0$.
The condition $K\le 0$ is equivalent to
\begin{equation*}
    2\rho\gamma\le |1-\rho^2-2\gamma|.
\end{equation*}
Given $2\gamma > 1+\rho$, observe the sign of the term inside the absolute value
\begin{equation*}
    1-\rho^2-2\gamma < 1-\rho^2-(1+\rho)=-\rho^2-\rho \le 0,
\end{equation*}
for $\rho\in[0,1)$.
This implies $1-\rho^2-2\gamma$ is always negative under the condition.
The inequality we need to show is now
\begin{equation*}
    2\rho\gamma \le -1 +\rho^2 +2\gamma.
\end{equation*}
Rearranging the terms, we have a quadratic expression with respect to $\rho$
\begin{equation*}
    \rho^2-2\gamma \rho+(2\gamma-1)\ge 0.
\end{equation*}
The solutions to the quadratic equation $\rho^2-2\gamma\rho+(2\gamma-1)=0$ are $\rho_1=1$ and $\rho_2=2\gamma -1$.
For the quadratic $\rho^2-2\gamma\rho+(2\gamma-1)$ to be nonnegative, $\rho$ must be outside these solutions: $\rho\ge1$ or $\rho\le 2\gamma-1$.
Since $\rho\in[0,1)$, we need $\rho\le 2\gamma-1$, which is guaranteed by the condition $\gamma>(1+\rho)/2.$
Therefore, the inequality always hold under the condition $\gamma>(1+\rho)/2$ and thus $-K\hatc$ is always nonnegative.

Substituting the simplified discriminant back into the quadratic form solution for $\hat{s}$, we have
\begin{equation}
\label{eq:solution_s}
    \hat{s}=\frac{K(x+y)+\sqrt{-K\hatc}|x-y|}{2\gamma K}.
\end{equation}

\subsection{Finding an unique point $(x_0,y_0)$ from the tangent line equation for Gaussian gauge case}

Recall Proposition 3 in the main manuscript. 
For $\frac{1}{g_{\bV}(1,1)}< \gamma \le 1$, the gauge function $g_{\bX}$ gives a transition: $g_{\bX}$ corresponds to $g_{\bV}$ when the minimizer is $\hat{s}=0$ and corresponds to the gauge function associated with the tangent line to the unit level curve $g_{\bV}(x,y)=1$ at the point $(x_0,y_0)$, where the tangent line passes through $(\gamma,\gamma)$.
By Lemma 1 in the main manuscript, the point $(x_0,y_0)$ is uniquely determined by solving the equation for the tangent line to the unit level curve $g_{\bV}(x,y)=1$, where the tangent line passes through $(\gamma,\gamma)$.

Specifically, for $y\ge x$, the point $(x_0,y_0)$ is the solution to the equation $(y-\gamma)=\frac{\diff y}{\diff x}(x-\gamma)$, where $y$ is defined as a function of $x$ via the unit level curve $g_{\bV}(x,y)=1$.
For Gaussian gauge function $g_{\bV}$, the function $y$ is written as
\begin{equation*}
    y = 2\rho\sqrt{(1-\rho^2)(x-x^2)}-x(1-2\rho^2)+(1-\rho^2).
\end{equation*}
The derivative of the function $y$ with respect to $x$ is
\begin{equation*}
    \frac{\diff y}{\diff x} = \rho\rbr{(1-\rho^2)(x-x^2)}^{-1/2}\rbr{(1-\rho^2)(1-2x)}-(1-2\rho^2).
\end{equation*}

Letting $z=(1-\rho^2)(x-x^2)$, the equation of the tangent line that passes through the point $(\gamma,\gamma)$ is given by
\begin{equation*}
    y-\gamma = \frac{\diff y}{\diff x}(x-\gamma).
\end{equation*}
Substituting the derivative and $y$ into the tangent line equation, we rearrange terms
\begin{align*}
    2\rho\sqrt{z}-x(1-2\rho^2)+(1-\rho^2)-\gamma 
    &= \cbr{\rho z^{-1/2}\rbr{(1-\rho^2)(1-2x)}-(1-2\rho^2)}(x-\gamma) \\
    2\rho\sqrt{z}-\rho z^{-1/2}\rbr{(1-\rho^2)(1-2x)}(x-\gamma) &= (2\gamma-1)(1-\rho^2) \\
    2\rho z -z^{1/2}x(1-2\rho^2)+z^{1/2}(1-\rho^2)-z^{1/2}\gamma &= \rho(1-\rho^2)(1-2x)(x-\gamma) - z^{1/2}(1-2\rho^2)x + z^{1/2}(1-2\rho^2)\gamma \\
    2\rho(x-x^2)+(1-\rho^2)^{1/2}(x-x^2)^{1/2}(1-2\gamma) &= \rho(1-2x)(x-\gamma) \\
    2\rho^2 \gamma(1-2\gamma)x + \rho^2\gamma^2 &= (2\gamma-1)^{2}x - (2\gamma-1)^2x^2 - \rho^2(2\gamma-1)^2 x.
\end{align*}
Simplifying the following equation 
\begin{equation*}
    (2\gamma-1)^2 x^2 + (2\rho^2\gamma(1-2\gamma)+\rho^2(2\gamma-1)^2-(2\gamma-1)^2)x + \rho^2\gamma^2 = 0,
\end{equation*}
the resulting equation reduced to a quadratic form
\begin{equation}
\label{eq:quadratic_x0}
    (2\gamma-1)^2x^2+(2\gamma-1)(1-\rho^2-2\gamma)x+\rho^2\gamma^2=0.
\end{equation}
Letting $a:=(2\gamma-1)^2$, $b:=(2\gamma-1)(1-\rho^2-2\gamma)$, and $c:=\rho^2\gamma^2$, the solutions are
\begin{equation*}
    x_0 = \frac{-b \pm \sqrt{b^2-4ac}}{2a}.
\end{equation*}

Let $(x_0,y_0)$ be the point obtained from the quadratic equation in~\eqref{eq:quadratic_x0}.
Using the solution $(x_0,y_0)$, we can verify that the solution $\hat{s}$ in~\eqref{eq:solution_s} coincides with the form $\frac{x_0y-y_0x}{\gamma(x_0-y_0)}$ derived from Proposition 3.
Recall that $\hatc=(1-\rho^2-2\gamma)^2$ and $K=4\rho^2\gamma^2-\hatc.$
The sum of $x_0+y_0$ is $\frac{1-\rho^2-2\gamma}{1-2\gamma}$ and thus its squared is $(x_0+y_0)^2=\hatc/(1-2\gamma)^2.$
The product of $x_0y_0$ is $\frac{\rho^2\gamma^2}{(2\gamma-1)^2}$.
Using $x_0+y_0$ and $x_0y_0$, we can also obtain $(x_0-y_0)^2$ as follows
\begin{align*}
    (x_0-y_0)^2&=(x_0+y_0)^2-4x_0y_0\\
    &=
    \frac{(2\gamma-1)^2(1-\rho^2-2\gamma)^2}{(2\gamma-1)^4}-\frac{4\rho^2\gamma^2}{(2\gamma-1)^2}\\
    &=
    \frac{(1-\rho^2-2\gamma)^2}{(2\gamma-1)^2}-\frac{4\rho^2\gamma^2}{(2\gamma-1)^2}\\
    &=
    \frac{-4\rho^2\gamma^2+\hatc}{(2\gamma-1)^2}\\
    &=
    -\frac{K}{(2\gamma-1)^2}.
\end{align*}

For $y\ge x$, the solution $\hat{s}$ in~\eqref{eq:solution_s} is
\begin{equation*}
    \hat{s}
    =\frac{K(x+y)+\sqrt{-K\hatc}(y-x)}{2\gamma K}
    =\rbr{\frac{K-\sqrt{-K\hatc}}{2\gamma K}}x+\rbr{\frac{K+\sqrt{-K\hatc}}{2\gamma K}y}.
\end{equation*}
We aim to show that this solution $\hat{s}$ is identical to the form 
\[
\frac{-y_0 x +x_0 y}{\gamma(x_0-y_0)}=\rbr{\frac{-y_0}{\gamma(x_0-y_0)}x}+\rbr{\frac{x_0}{\gamma(x_0-y_0)}}y.
\]
For these two linear functions of $x$ and $y$ to be identical, we equate both respective coefficients
\begin{align*}
    \frac{K-\sqrt{-K\hatc}}{2\gamma K}=\frac{-y_0}{\gamma(x_0-y_0)}\\
    \frac{K+\sqrt{-K\hatc}}{2\gamma K}=\frac{x_0}{\gamma(x_0-y_0)}.
\end{align*}
Subtracting these two equations, these identities hold if and only if the following condition is satisfied
\begin{equation*}
    \frac{\sqrt{-K\hatc}}{K}=\frac{x_0+y_0}{x_0-y_0}.
\end{equation*}
Squaring both sides of the condition, we need to check if the following identity holds
\begin{equation*}
    -\frac{\hatc}{K}=\frac{(x_0+y_0)^2}{(x_0-y_0)^2}.
\end{equation*}
We now substitute $(x_0+y_0)^2$ and $(x_0-y_0)^2$ from the previous calculations into the right-hand side
\begin{align*}
    \frac{(x_0+y_0)^2}{(x_0-y_0)^2}
    &=\frac{\frac{\hatc}{(2\gamma-1)^2}}{-\frac{K}{(2\gamma-1)^2}}\\
    &=-\frac{\hatc}{K}.
\end{align*}
Therefore, the identity holds true, implying that the solution $\hat{s}$ in~\eqref{eq:solution_s} obtained from the derivative corresponds to the form $\frac{x_0 y-y_0 x}{\gamma(x_0-y_0)}$ derived from Proposition 3.

\subsection{Finding an unique point $(x_0,y_0)$ from the tangent line for Inverted-logistic gauge function}

Similarly to the Gaussian gauge function, for the Inverted-logistic gauge function $g_{\bV}$, we first solve the unit level curve $g_{\bV}(x,y)=1$ for $y$
\begin{equation*}
    y = \rbr{1-x^{1/\theta}}^\theta.
\end{equation*}
Its derivative with respect to $x$ is given by
\begin{equation*}
    \frac{\diff y}{\diff x} = -\rbr{1-x^{1/\theta}}^{\theta-1} x^{1/\theta-1}.
\end{equation*}
The tangent line equation that passes through the point $(\gamma,\gamma)$ is
\begin{align*}
    y - \gamma &= \frac{\diff y}{\diff x}(x-\gamma)\\
    \rbr{1-x^{1/\theta}}^\theta - \gamma &= -\rbr{1-x^{1/\theta}}^{\theta-1}x^{1/\theta-1}(x-\gamma) \\
    1/\gamma&=(1-x^{1/\theta})^{1-\theta}+x^{1/\theta-1}
\end{align*}
For briefer notation, we let $z:=x^{1/\theta}$ and consider the implicit equation:
\begin{equation}
\label{eq:implicit_eq}
    \frac{1}{\gamma}=(1-z)^{1-\theta}+z^{1-\theta},
\end{equation}
where $\gamma\in(2^{-\theta},1]$, $\theta\in(0,1]$, and $z\in[0,1]$.
For $\theta=1$, the constraint on $\gamma$ is $\gamma\in(1/2,1]$.
When $\theta=1$, the solution to \eqref{eq:implicit_eq} can only exist if $\gamma=1/2$.
Since $\gamma=1/2$ is outside the interval, there is no solution.
When $\theta\in(0,1),$ let $f(z)=(1-z)^{1-\theta}+z^{1-\theta}$.
The function $f(z)$ is concave with a maximum of $2^\theta$ and minimum of $1.$
Rearranging the constraint on $\gamma$ gives $1\le 1/\gamma < 2^{\theta}$, which is equivalent to finding the solution $z$ for which $1\le f(z)< 2^\theta$.
For any value of $\gamma\in(2^{-\theta},1]$, there are two distinct solutions for $z=x^{1/\theta}.$
If $\gamma=1$, it corresponds to the minimum value of $f(z)=1$, which occurs at the boundaries $z=0$ and $z=1$ so $x=0$ and $x=1.$
If $\gamma\in(2^{-\theta},1)$, since the function $f(z)$ is symmetric around $z=1/2$, let one solution for $z_1$ be $z_1\in(0,1/2)$.
The other solution is $z_2=1-z_1.$
Thus, the solutions are $x_1=z_1^\theta$ and $x_2=(1-z_1)^\theta.$
Since the solution $z_1$ lies in $(0,1/2)$, $1-z_1\in(1/2,1)$.
This means $z_1<1-z_1$ and it implies $z_1^{\theta}<(1-z_1)^{\theta}\implies x_1<x_2.$
Consequently, we can take the minimum solution $x_1$, which is the unique solution for $y\ge x.$

\section{Box-plots of estimated dependence measures from simulation study}

We create box-plots of the estimated slopes $\hat{\alpha}$ for the conditional extremes model \citep{heffernan2004conditional,nolde2022linking} over 1000 iterations across five scenarios, for seven additive mixture models, as shown in Figure~\ref{fig:alpha}.
For the asymptotic dependence cases, the box-plots indicate that the median values of the estimated slopes are 1 under the logistic and Dirichlet dependence structures, while there is slightly greater variability in the weakly dependent AD case, as expected.

\begin{figure}[ht!]
\centering
\includegraphics[width=4cm]{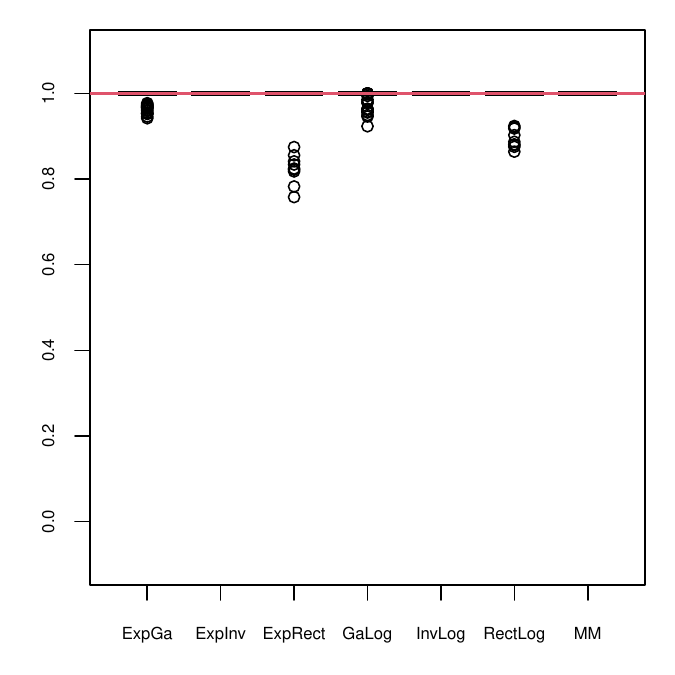}
\includegraphics[width=4cm]{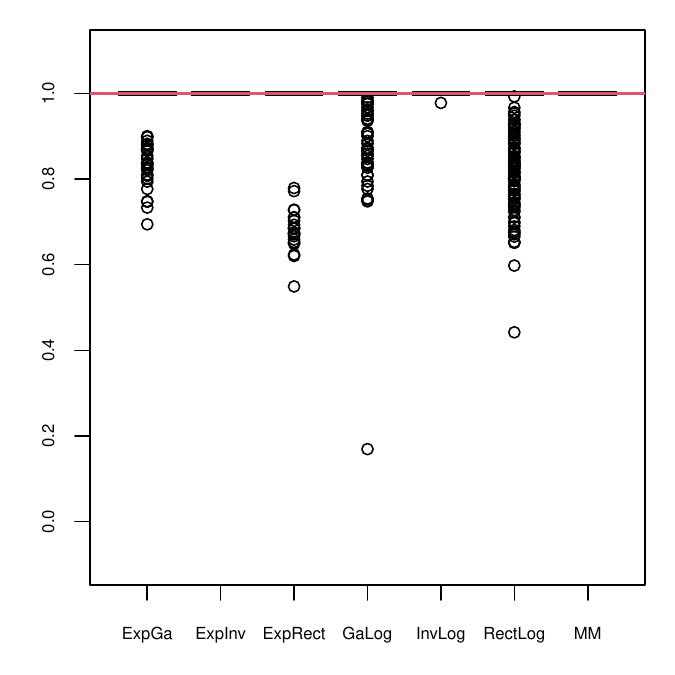}
\includegraphics[width=4cm]{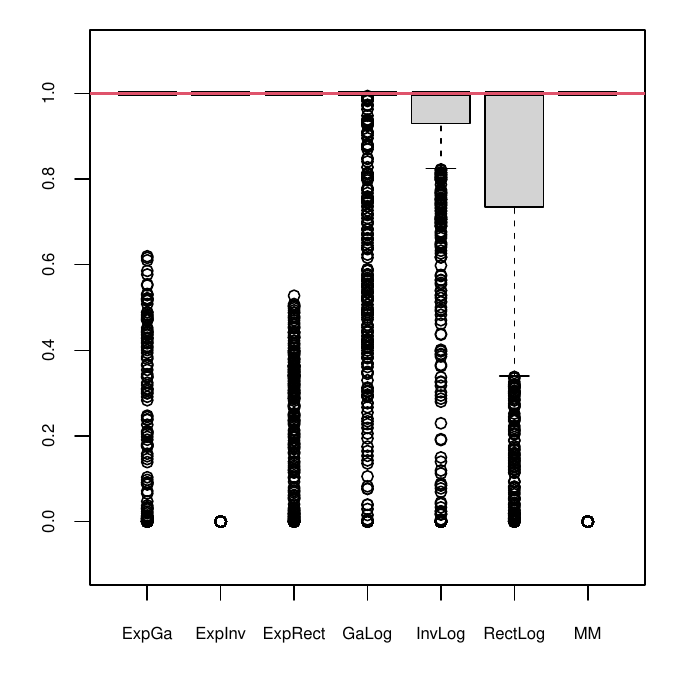}\\
\includegraphics[width=4cm]{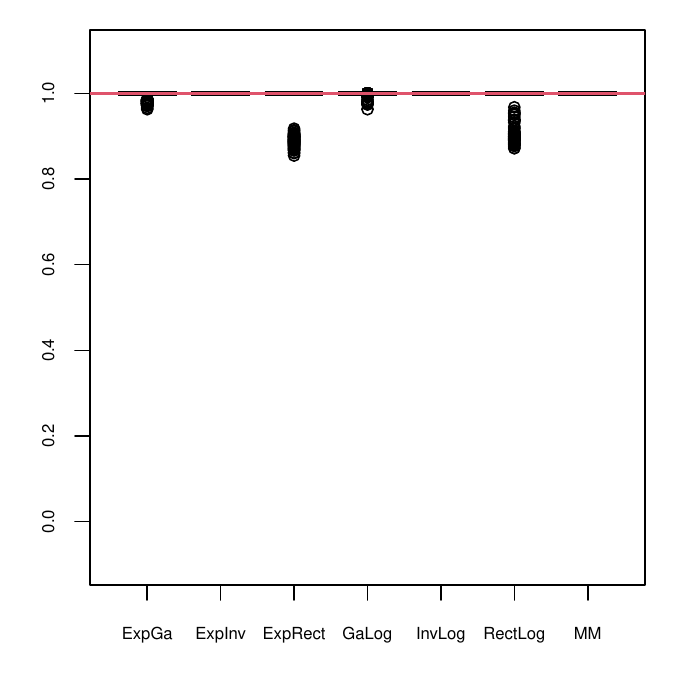}
\includegraphics[width=4cm]{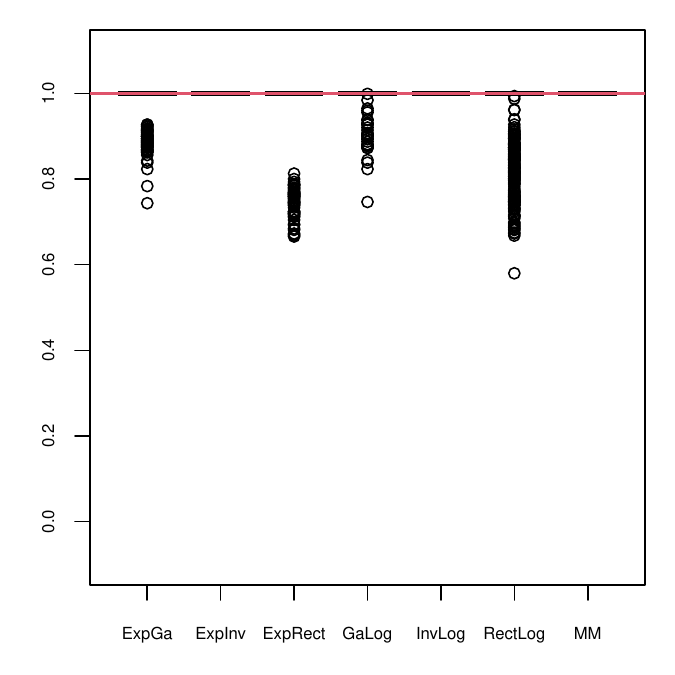}
\includegraphics[width=4cm]{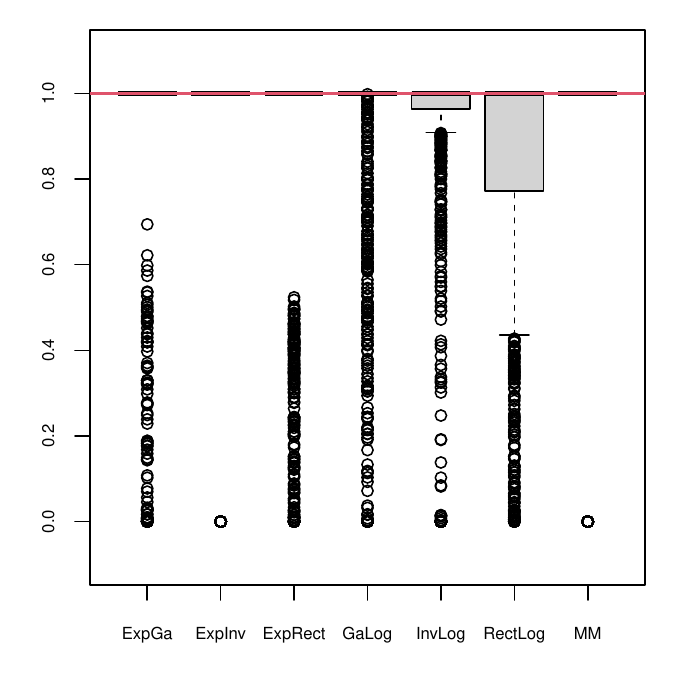}\\
\includegraphics[width=4cm]{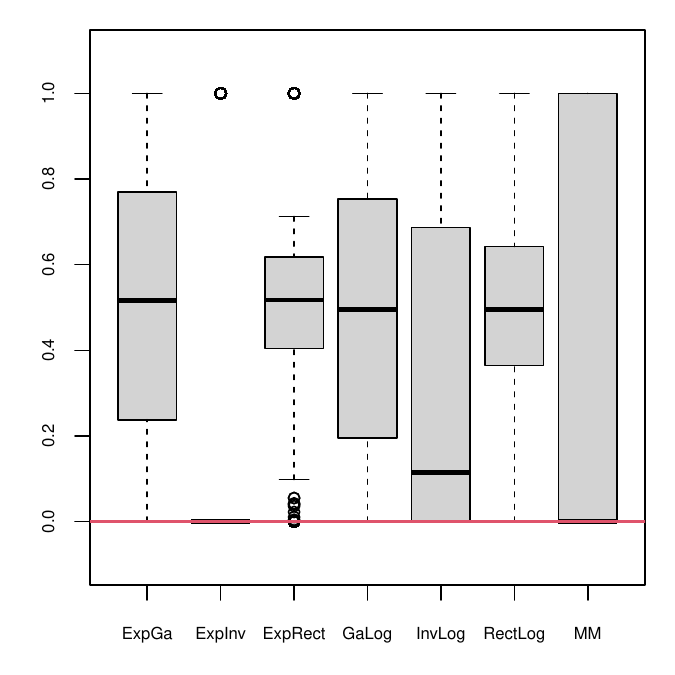}
\includegraphics[width=4cm]{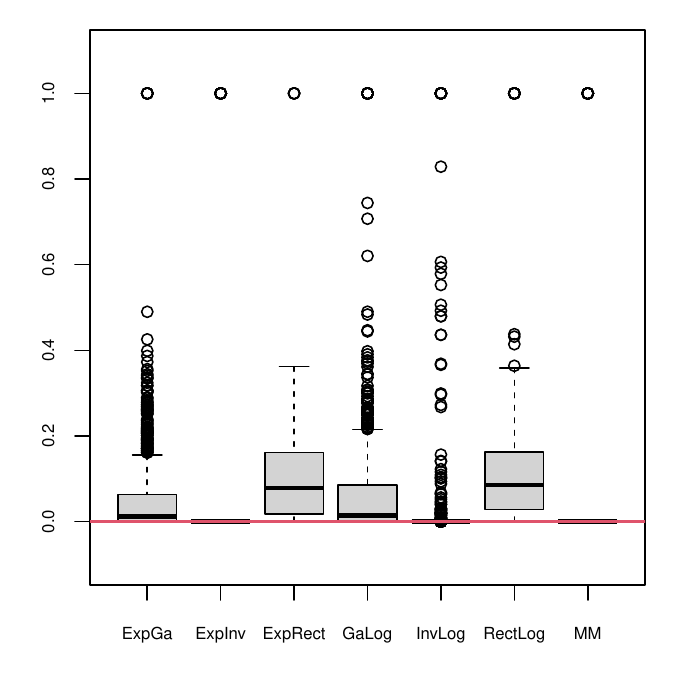}
\includegraphics[width=4cm]{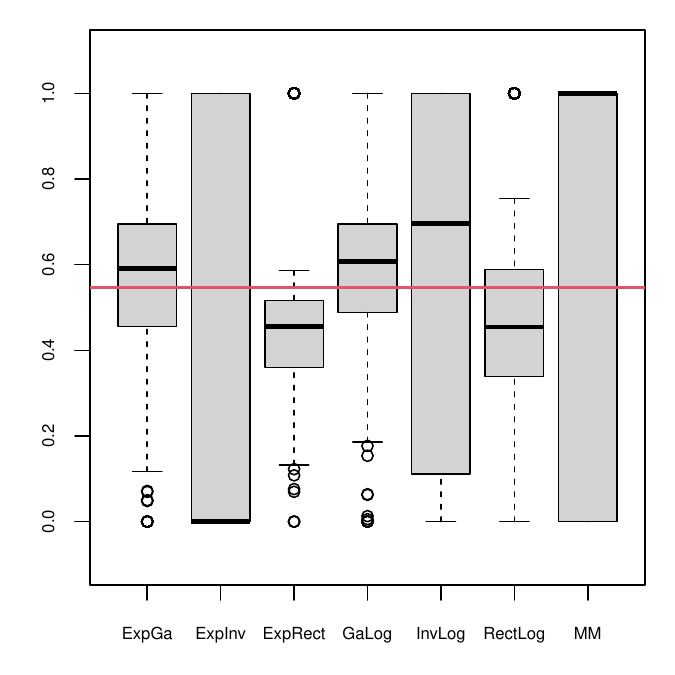}
\includegraphics[width=4cm]{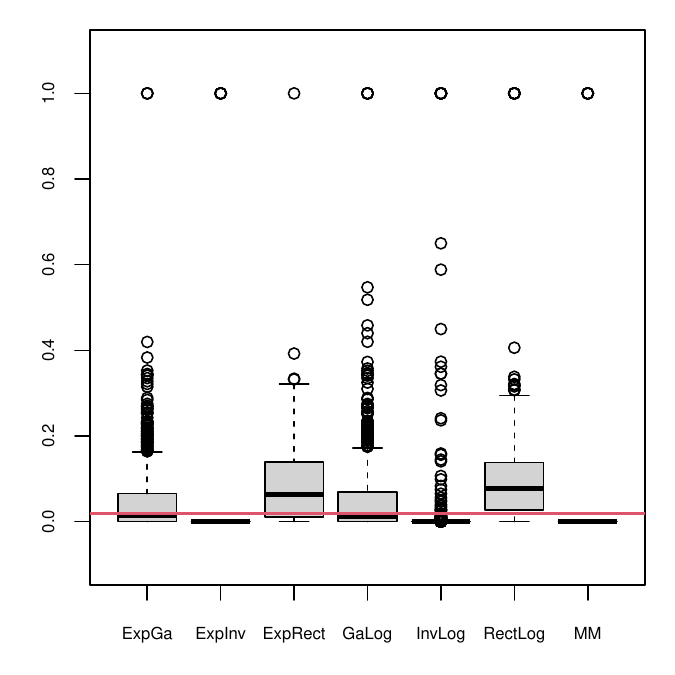}
\caption{\label{fig:alpha} Box-plots of the estimated slopes $\hat{\alpha}$ for the conditional extremes model over 1000 iterations.
From left to right, the top row represents the scenarios for strongly dependent AD (st.d.AD), moderately strongly dependent AD (mst.d.AD), weakly dependent AD (w.d.AD) under the logistic dependence structure.
The second row represents the same scenarios (st.d.AD, mst.d.AD, w.d.AD) under the Dirichlet dependence structure.
In the bottom row, the first two plots show strongly dependent AI (st.d.AI) and weakly dependent AI (w.d.AI) under the Inverted-logistic dependence structure, while the last two plots indicate st.d.AI and w.d.AI under the Gaussian dependence structure.
The box-plots are ordered as follows for each plot: ExpGa, ExpInv, ExpRect, GaLog, InvLog, RectLog, and MM.}
\end{figure}

The box-plots of the estimated residual tail dependence $\hat{\eta}$ are shown in Figure~\ref{fig:eta} over 1000 iterations across five scenarios, for seven additive mixture models.
For the asymptotic dependence cases, the box-plots indicate that the median values of the estimated slopes are 1, although there is relatively greater variability in the weakly dependent AD case, as expected.
Similarly, for the strongly dependent asymptotic independence cases, they show relatively greater variability compared to the weakly dependent AI cases.

\begin{figure}[ht!]
\centering
\includegraphics[width=4cm]{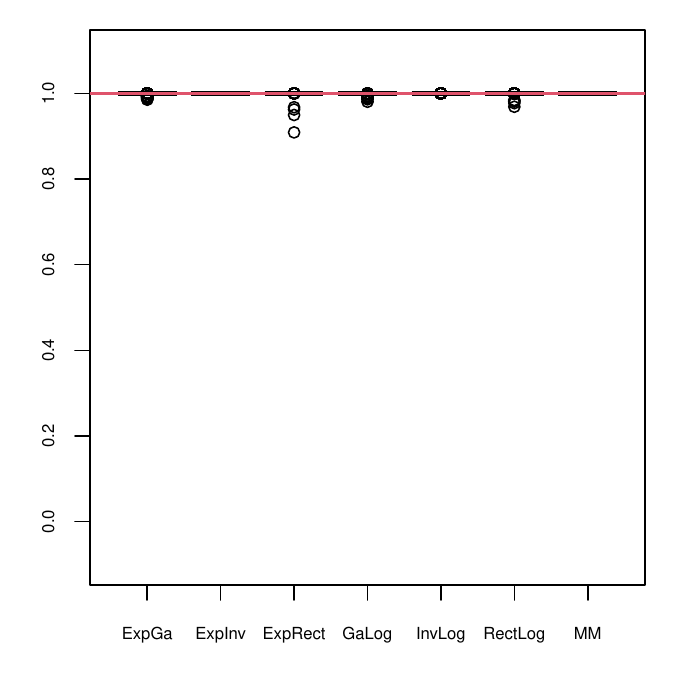}
\includegraphics[width=4cm]{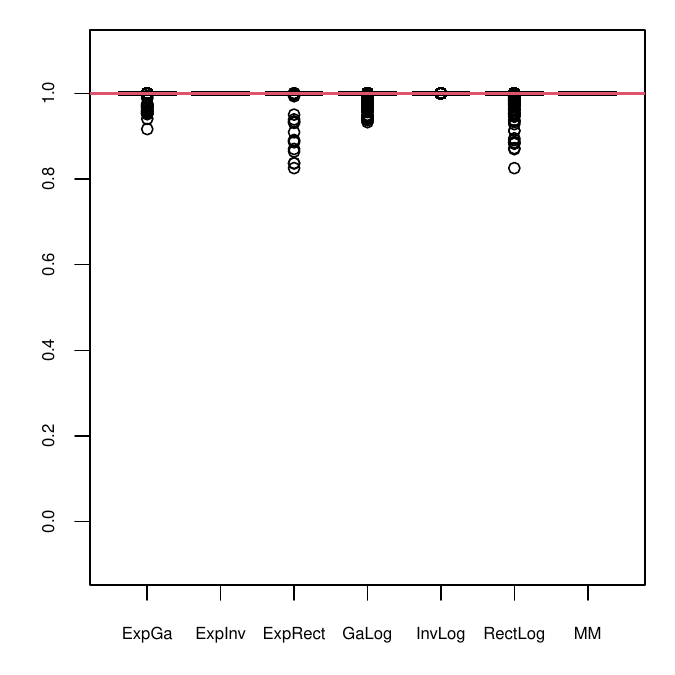}
\includegraphics[width=4cm]{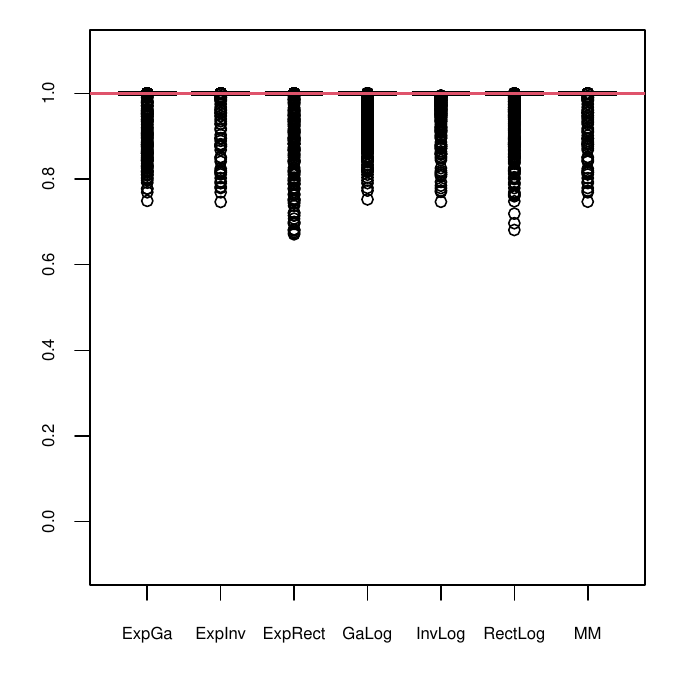}\\
\includegraphics[width=4cm]{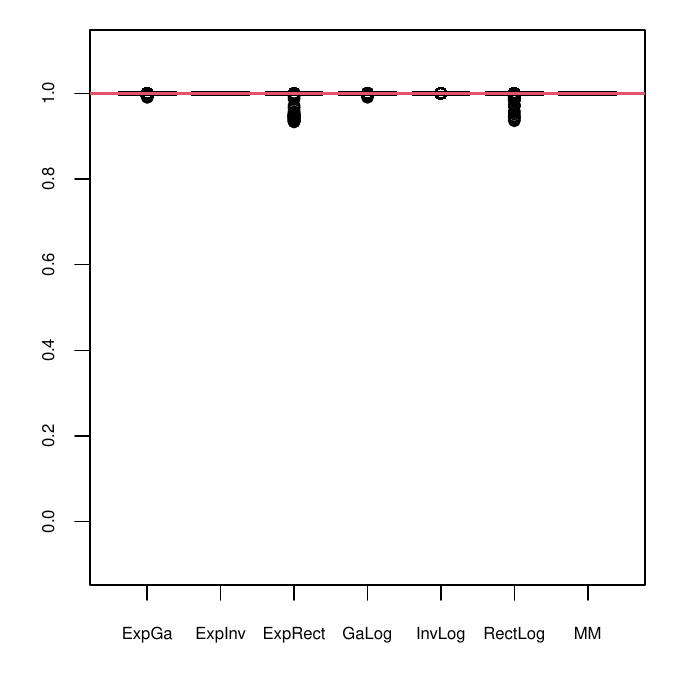}
\includegraphics[width=4cm]{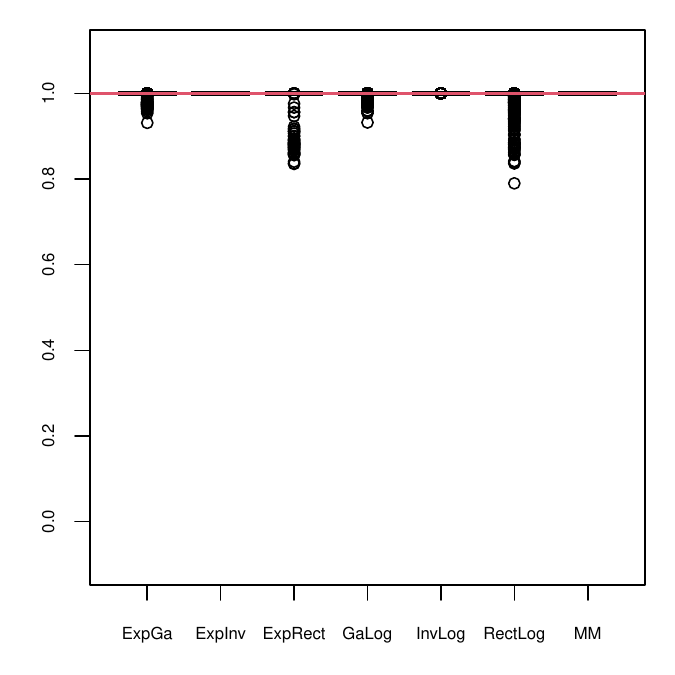}
\includegraphics[width=4cm]{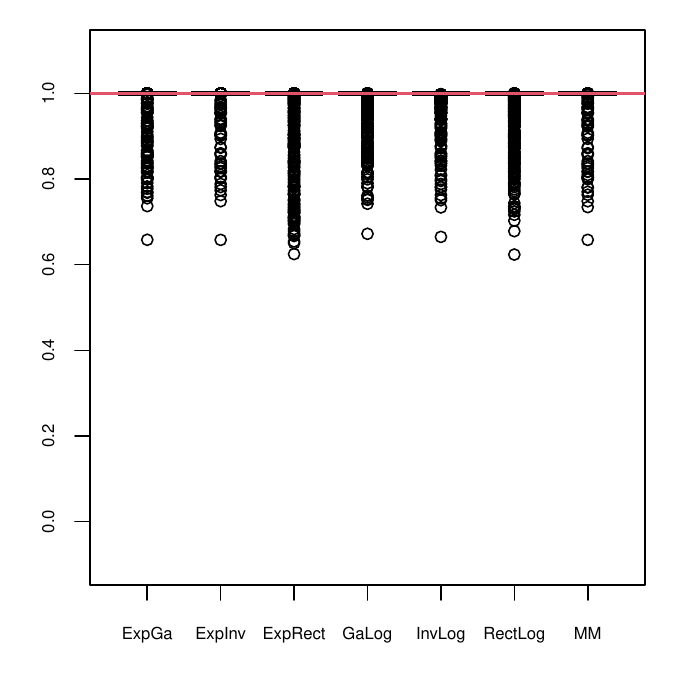}\\
\includegraphics[width=4cm]{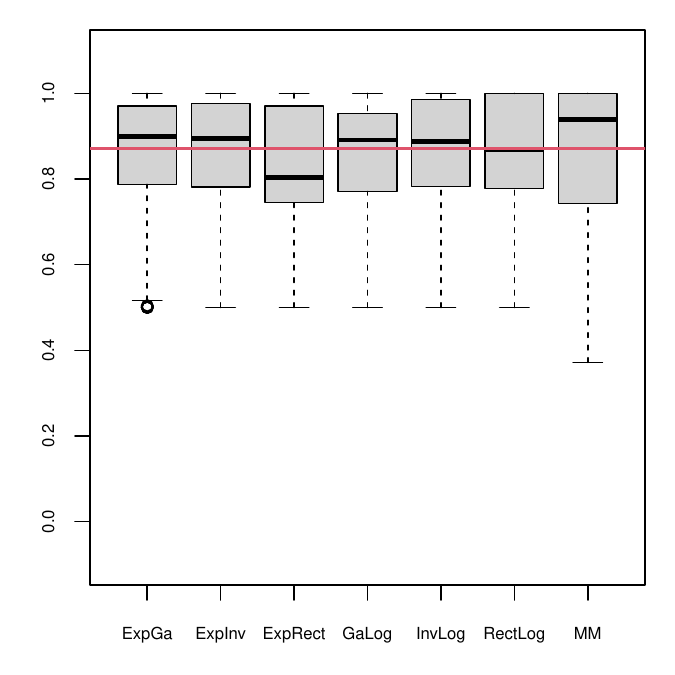}
\includegraphics[width=4cm]{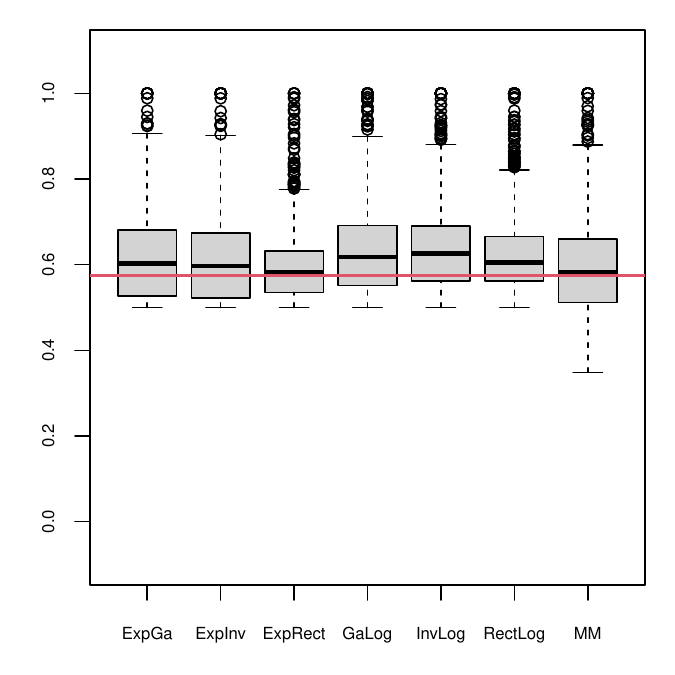}
\includegraphics[width=4cm]{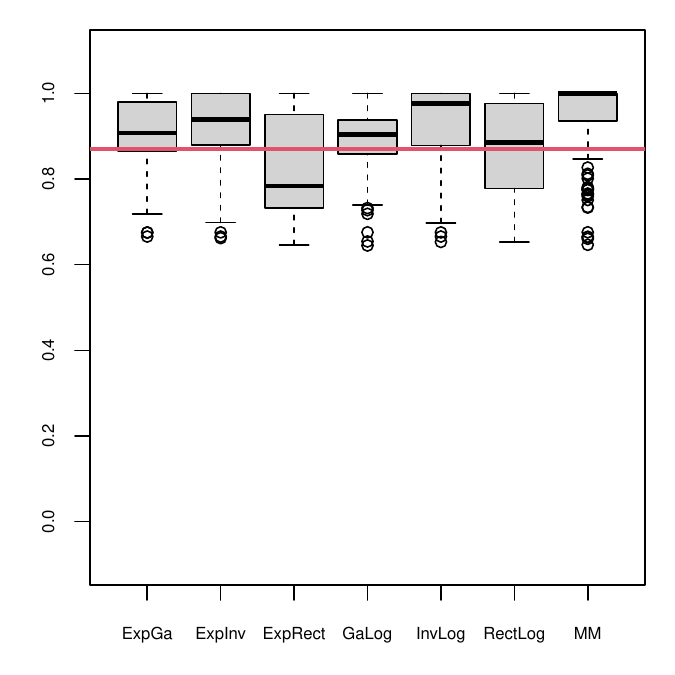}
\includegraphics[width=4cm]
{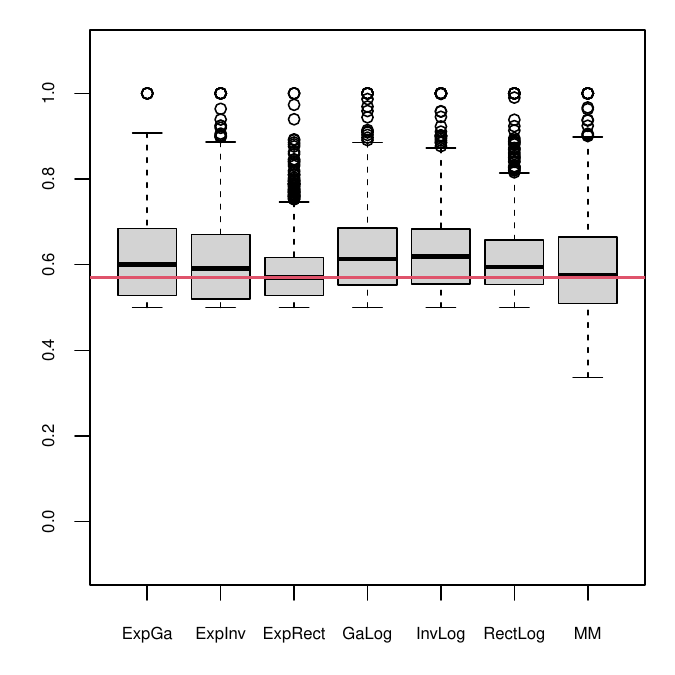}
\caption{\label{fig:eta} Box-plots of the estimated residual tail dependence coefficients $\hat{\eta}$ over 1000 iterations.
The description about the plot settings aligns with that given in Figure~\ref{fig:alpha}.
The red lines indicate the true values of $\eta$ for each case. }
\end{figure}

For estimating the lower-right corner of the extreme set, all additive mixtures provide reasonably accurate estimates of the probability $\prob((X,Y)\in C_{lr})$, where $C_{lr}=(8,\infty)\times(0,7)$, across all cases. Note that a different scale is used on the y-axis.

\begin{figure}[ht!]
\centering
\includegraphics[width=4cm]{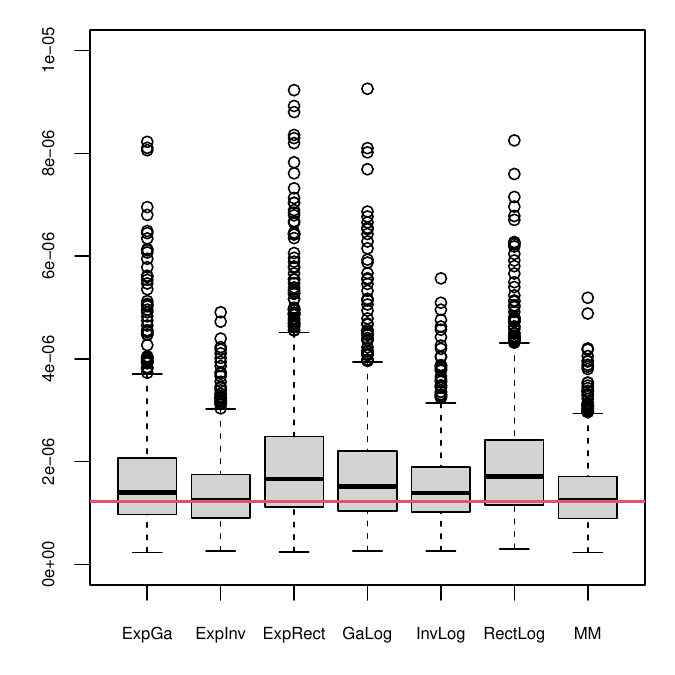}
\includegraphics[width=4cm]{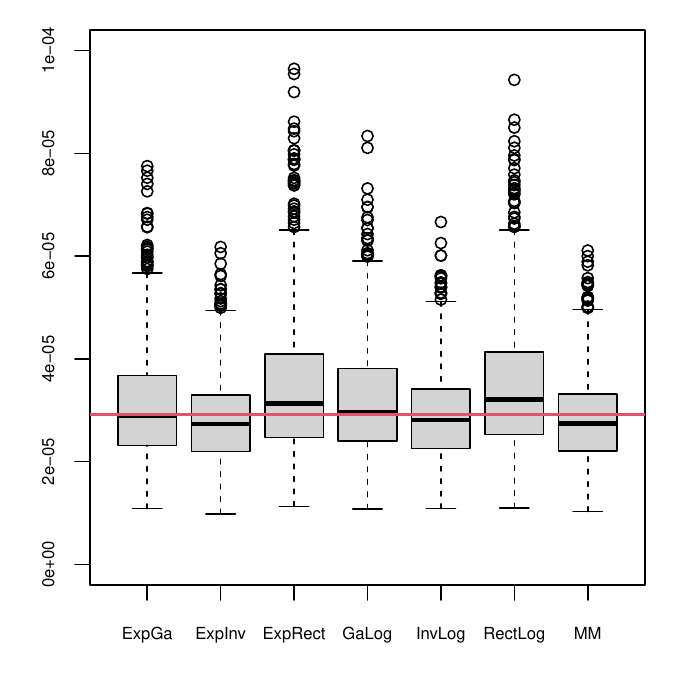}
\includegraphics[width=4cm]{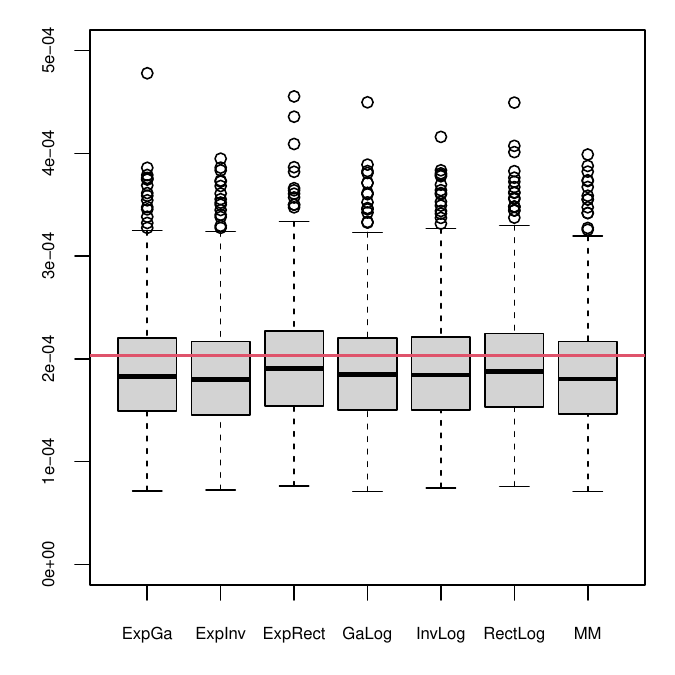}\\
\includegraphics[width=4cm]{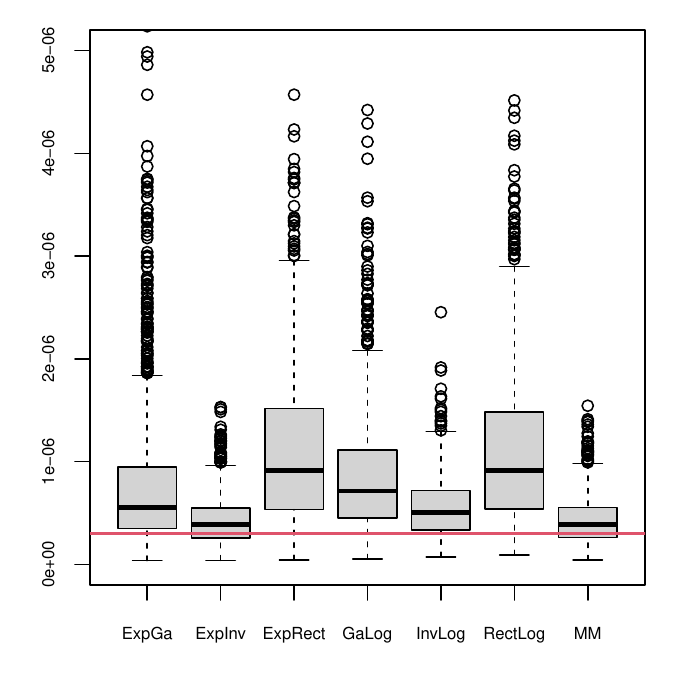}
\includegraphics[width=4cm]{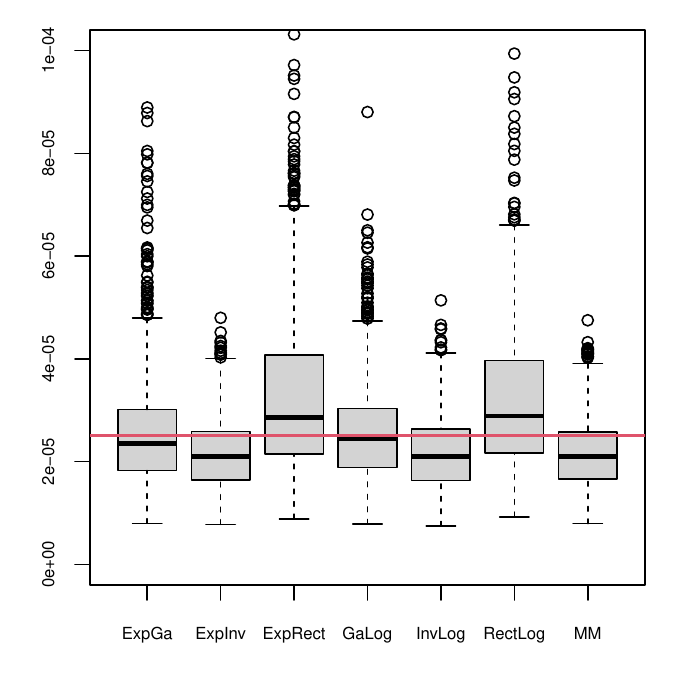}
\includegraphics[width=4cm]{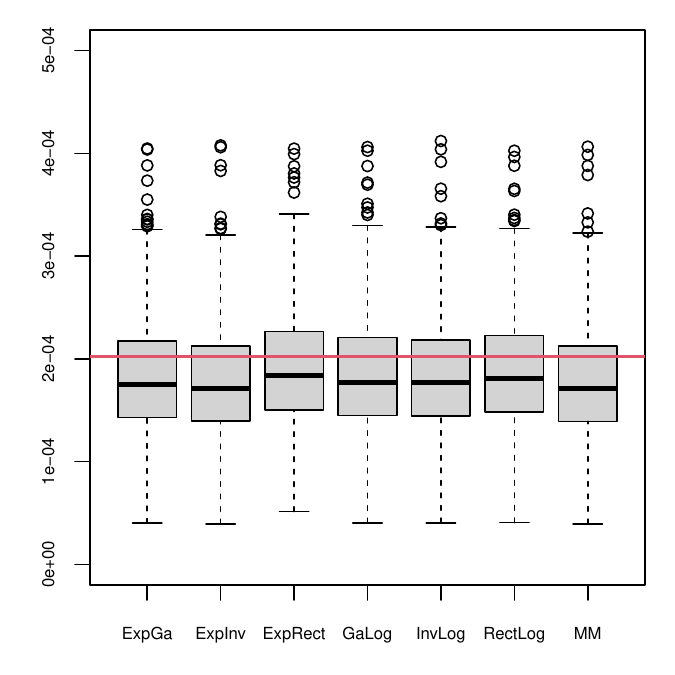}\\
\includegraphics[width=4cm]{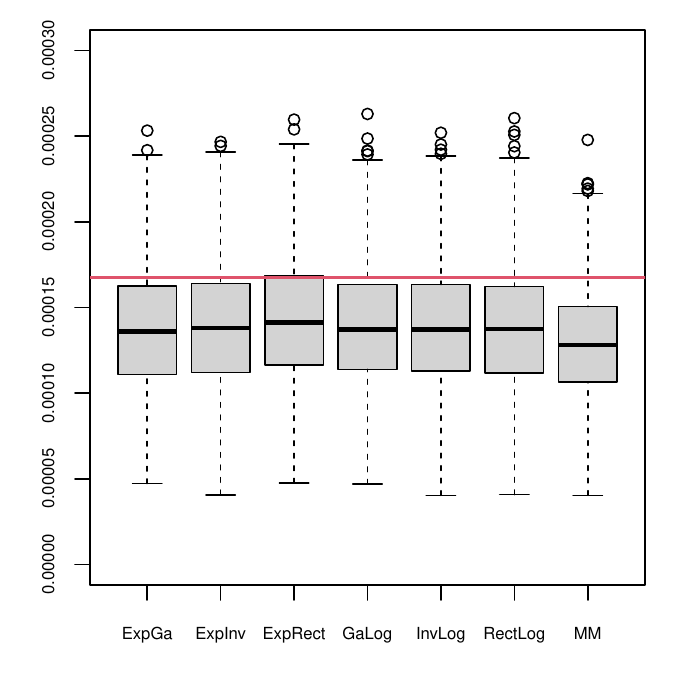}
\includegraphics[width=4cm]{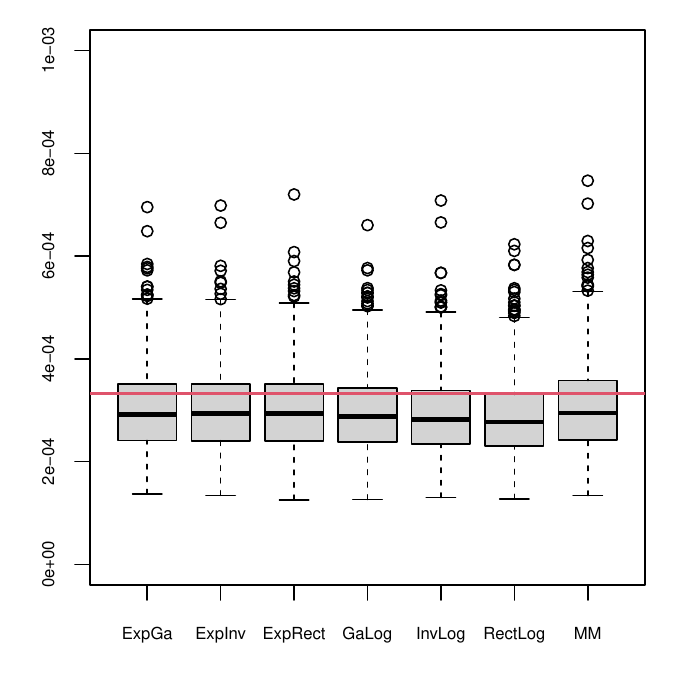}
\includegraphics[width=4cm]{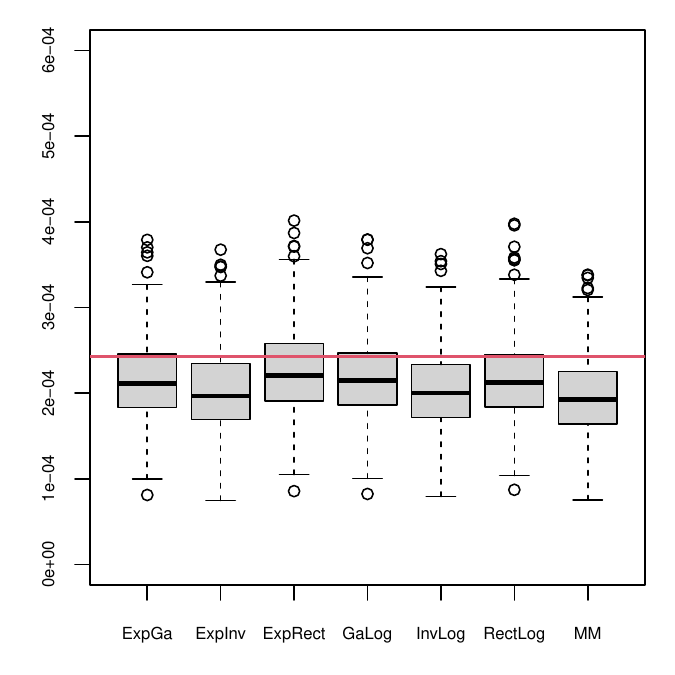}
\includegraphics[width=4cm]{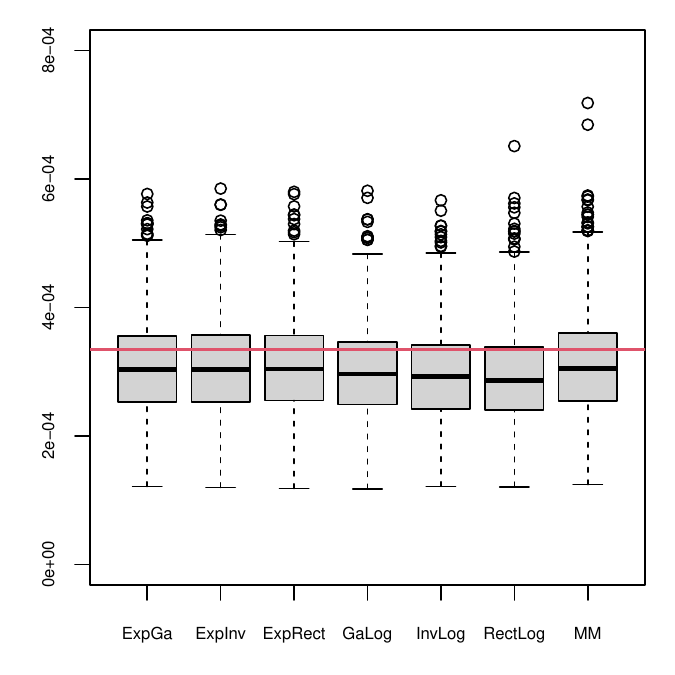}
\caption{\label{fig:probEst} 
Box-plots of the estimated probabilities for the set $C_3$ over 1000 iterations.
The description about the plot setting aligns with that given in Figure~\ref{fig:alpha}.
The red lines indicate the true probabilities for each case.
}
\end{figure}

We also create the empirical gauge function estimates derived from a rolling-windows quantile method over 1000 iterations in Figure~\ref{fig:empiricialGauges}.
The overall shape coincides with the fitted gauge functions.

\begin{figure}[ht!]
\centering
\includegraphics[width=4cm]{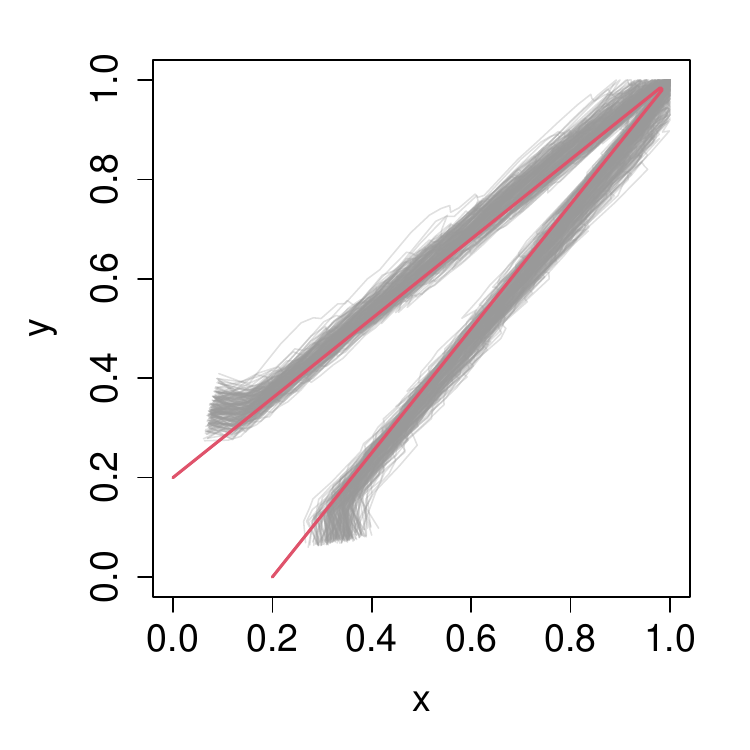}
\includegraphics[width=4cm]{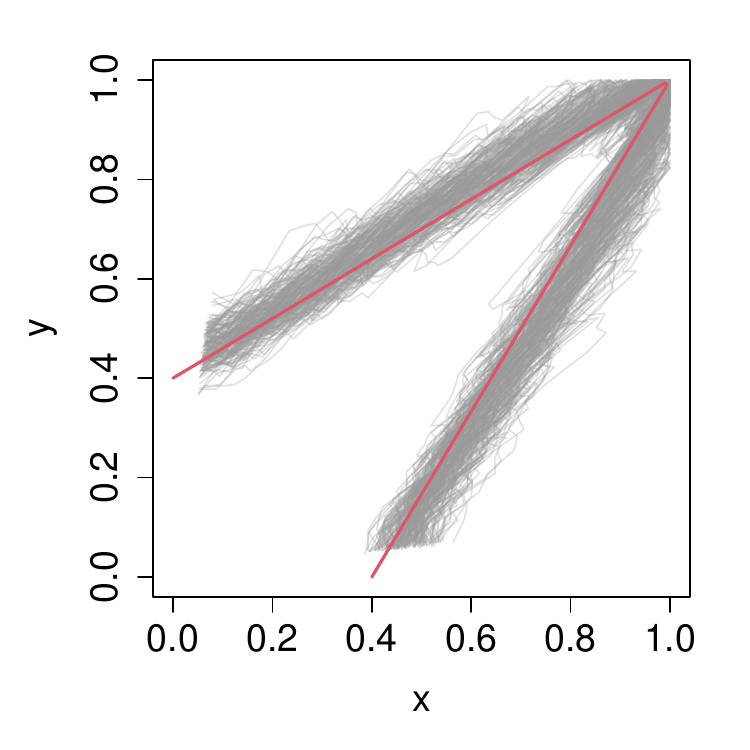}
\includegraphics[width=4cm]{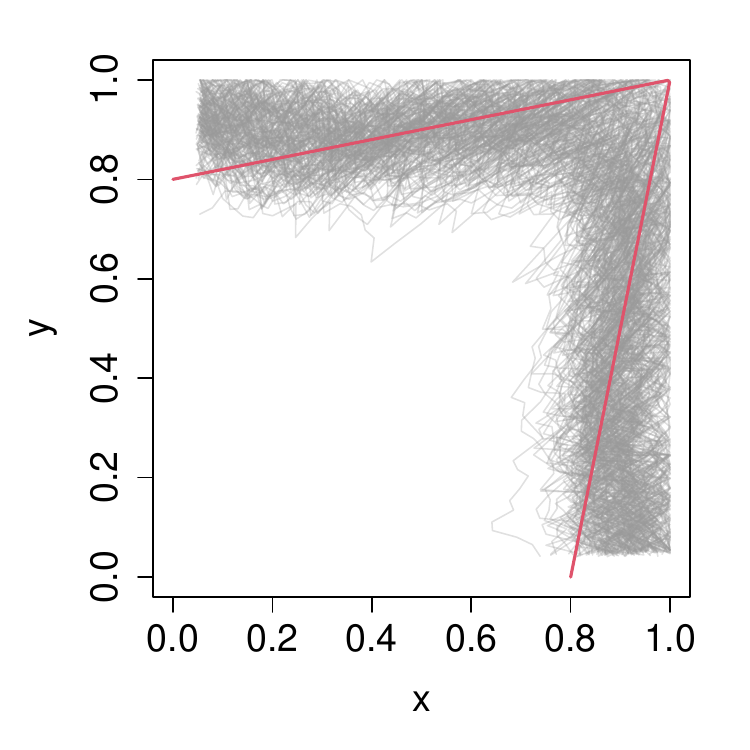}\\
\includegraphics[width=4cm]{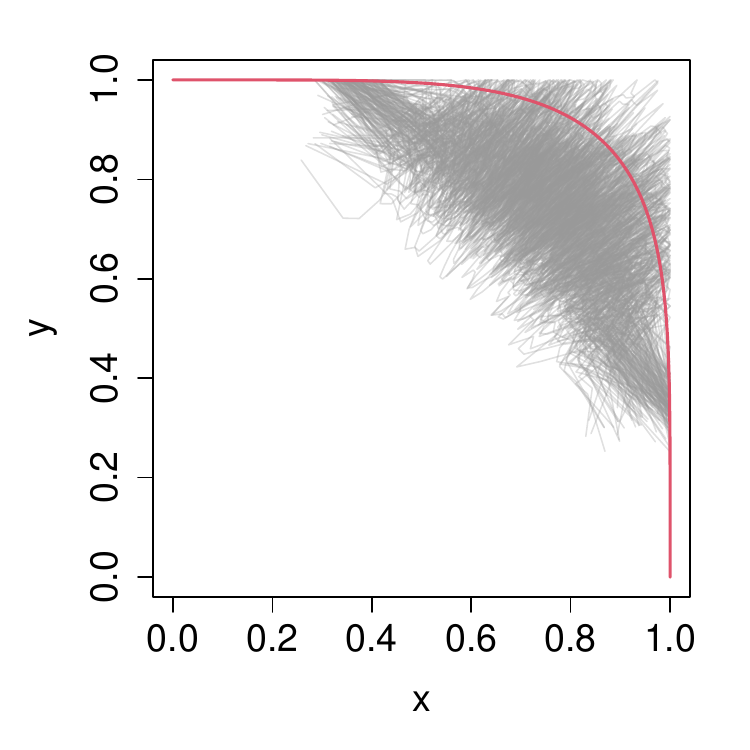}
\includegraphics[width=4cm]{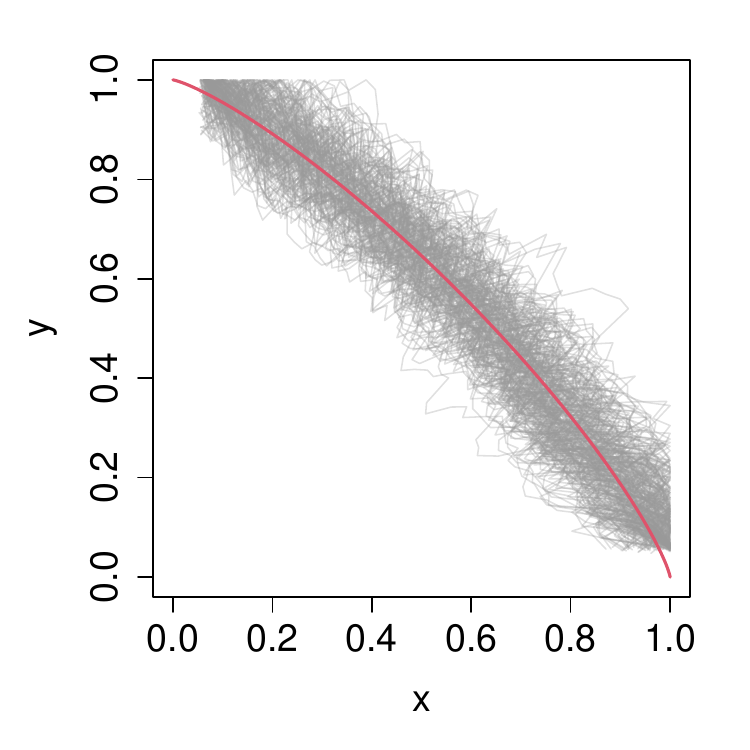}
\includegraphics[width=4cm]{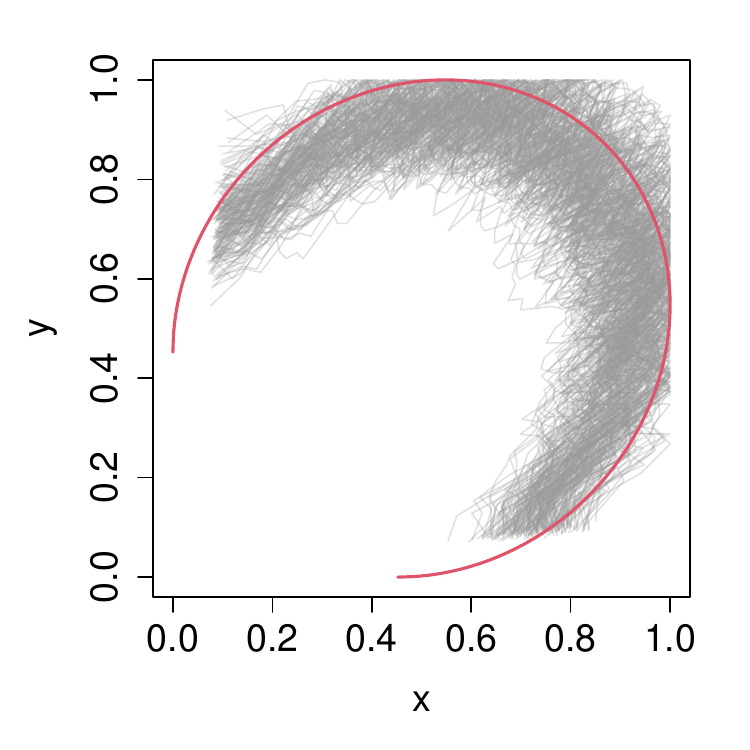}
\includegraphics[width=4cm]{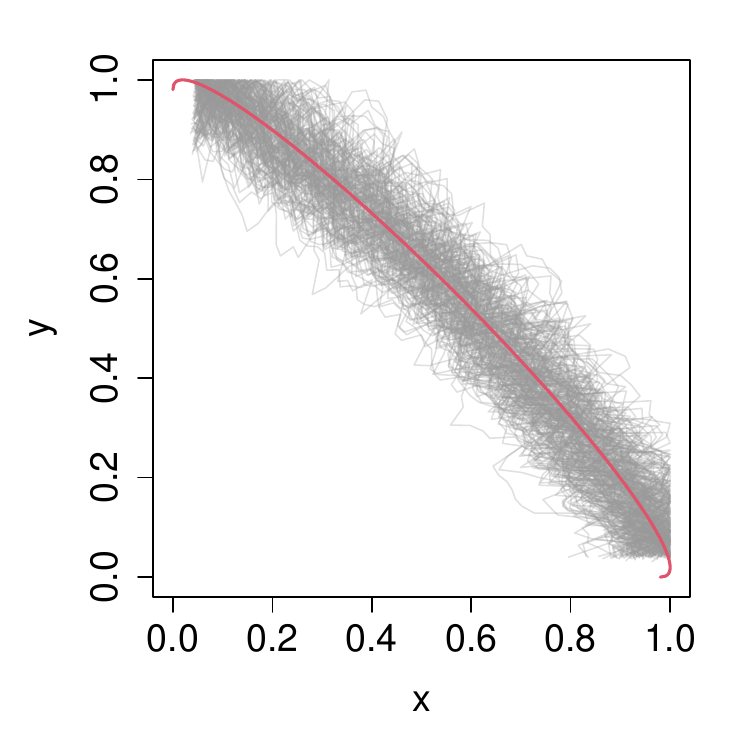}
\caption{\label{fig:empiricialGauges} Empirical gauge function estimates derived from a rolling-windows quantile method over 1000 iterations.
The red lines indicate the fitted gauge functions based on the obtained ML estimates.}
\end{figure}

\section{Numerical results from river flow data}

Based on the geometric criteria, we observed that asymptotic dependence is identified only for the pair $(X_2,X_3).$
Numerical results of the fitted additive mixtures for the pair $(X_2,X_3)$ are included in Table~\ref{tab:X23}.

\begin{table}[ht!]
    \centering
    \begin{tabular}[t]{|c||c|c|c|} 
      \hline
      & ExpGa & ExpInv & ExpRect \\
      \hline\hline
      $\hat{\lambda}$ & 1.43 & 1.43 & 1.43 \\
      \hline
      $\hat{\gamma}_{\text{\tiny HW}}$ & 1.10 & 1.16 & 1.11 \\
      \hline
      $\hat{\theta}_{\text{\tiny HW}}$ & 0.63 & 0.23 & 0.72 \\
      \hline
      nll & 993.7 & 993.7 & 993.7 \\
      \hline
      AIC & 1993.41 & 1993.41 & 1993.41 \\
      \hline
      $\hat{\alpha}$ & 1.00 & 1.00 & 1.00 \\
      \hline
      $\hat{\eta}$ & 1.00 & 1.00 & 1.00 \\
      \hline
      \end{tabular}
    \begin{tabular}[t]{|c||c|c|c|} 
    \hline
    & GaLog & InvLog & RectLog \\
    \hline\hline
    $\hat{\lambda}$ & 1.36 & 1.43 & 1.34 \\
    \hline
    $\hat{\theta}_{\text{\tiny Amix}}$ & 0.81 & 0.99 & 0.34 \\
    \hline
    $\hat{\gamma}_{\text{\tiny log}}$ & 0.50 & 0.71 & 0.89 \\
    \hline
    $\hat{p}$ & 0.73 & 0.16 & 0.16 \\
    \hline
    nll & 993.59 & 993.7 & 993.6 \\
    \hline
    AIC & 1995.17 & 1995.41 & 1995.19 \\
    \hline
    $\hat{\alpha}$ & 0.87 & 1.00 & 1.00 \\
    \hline
    $\hat{\eta}$ & 0.99 & 1.00 & 1.00 \\
    \hline
    \end{tabular}
    \begin{tabular}[t]{|c||c|} 
    \hline
    & MM \\
    \hline\hline
    $\hat{\lambda}$ & 1.43 \\
    \hline
    $\hat{\theta}_{\text{\tiny MM}}$ & 0.86 \\
    \hline
    nll & 993.7 \\
    \hline
    AIC & 1991.41 \\
    \hline
    $\hat{\alpha}$ & 1.00 \\
    \hline
    $\hat{\eta}$ & 1.00 \\
    \hline
    \end{tabular}
\caption{Summary of the fitted results for the pair $(X_2,X_3)$.
The symbols $\hat{\lambda}_{\text{\tiny tg}}$  $\hat{\gamma}_{\text{\tiny HW}}$, $\hat{\theta}_{\text{\tiny HW}}$, $\hat{\theta}_{\text{\tiny Amix}}$, $\hat{\gamma}_{\text{\tiny log}}$, $\hat{p}$, and $\hat{\theta}_{\text{\tiny MM}}$ represent the ML estimates for the shape parameter of the truncated gamma distribution, the scale parameter for the HW model, and the correlation parameter for the Gaussian gauge function $g_{\bV}$, the parameter associated with the first gauge function $g_{\bX}^{[1]}$ in the additive mixture model, the parameter for the logistic gauge function, the weight parameter, and the parameter associated with the max-min gauge function, respectively.
`nll' refers to the negative log-likelihood value, $\hat{\alpha}$ denotes the estimated slope for the conditional extremes model, and $\hat{\eta}$ corresponds to the estimated residual tail dependence coefficient.
}
\label{tab:X23}
\end{table}

\end{document}